\newcommand{\conferenceorfull}[2]{#2}
\newcolumntype{L}[1]{>{\raggedright\let\newline\\\arraybackslash\hspace{0pt}}p{#1}}
\newcolumntype{C}[1]{>{\centering\let\newline\\\arraybackslash\hspace{0pt}}p{#1}}
\newcolumntype{R}[1]{>{\raggedleft\let\newline\\\arraybackslash\hspace{0pt}}p{#1}}
\colorlet{LightViolet}{violet!40}
\colorlet{LightRed}{red!40}
\colorlet{LightOrange}{orange!40}
\colorlet{LightGreen}{green!40}
\colorlet{LightBlue}{blue!40}
\colorlet{DarkGreen}{green!50!black}
\colorlet{DarkRed}{red!70!black}
\colorlet{DarkCyan}{red!70!black}
\colorlet{DarkBlue}{blue!80!black}
{\definecolor{DarkOrange}{rgb}{1.0, 0.49, 0.0}
\definecolor{Airforceblue}{rgb}{0.36, 0.54, 0.66}


\newcommand{\defeq}{\stackrel{\mathrm{def}}{=}}

\renewcommand{\hom}{\text{\sf hom}}
\newcommand{\td}{\text{\sf TD}}

\newcommand{\parent}{\text{\sf parent}}



\newcommand{\functionname}[1]{\text{\sf #1}}

\newcommand{\atoms}{\functionname{atoms}}

\newcommand{\vars}{\functionname{vars}}
\newcommand{\rel}{\functionname{rel}}

\newcommand{\arity}{\functionname{arity}}

\newcommand{\problemname}[1]{\text{\tt #1} }


\newcommand{\calA}{\mathcal A}
\newcommand{\calB}{\mathcal B}

\newcommand{\calD}{\mathcal D}

\newcommand{\calR}{\mathcal R}

\newcommand{\calI}{\mathcal I}

\newcommand{\Z}{\mathbb Z} 
\newcommand{\Q}{\mathbb Q} 
\newcommand{\R}{\mathbb R} 




\newcommand{\pr}{\mathop{\textnormal{Pr}}}

\newcommand{\be}{\begin{enumerate}}
\newcommand{\ee}{\end{enumerate}}
\newcommand{\bi}{\begin{itemize}}
\newcommand{\ei}{\end{itemize}}
\newcommand{\beq}{\begin{equation}}
\newcommand{\eeq}{\end{equation}}

\newcommand{\bp}{\begin{proof}}
\newcommand{\ep}{\end{proof}}
\newcommand{\bcor}{\begin{cor}}
\newcommand{\ecor}{\end{cor}}
\newcommand{\bthm}{\begin{thm}}
\newcommand{\ethm}{\end{thm}}
\newcommand{\blmm}{\begin{lmm}}
\newcommand{\elmm}{\end{lmm}}
\newcommand{\bdefn}{\begin{defn}}
\newcommand{\edefn}{\end{defn}}
\newcommand{\bprop}{\begin{prop}}
\newcommand{\eprop}{\end{prop}}
\newcommand{\bconj}{\begin{conj}}
\newcommand{\econj}{\end{conj}}
\newcommand{\bopm}{\begin{opm}}
\newcommand{\eopm}{\end{opm}}
\newcommand{\bprm}{\begin{prm}}
\newcommand{\eprm}{\end{prm}}
\newcommand{\brmk}{\begin{rmk}}
\newcommand{\ermk}{\end{rmk}}
\newcommand{\bclm}{\begin{claim}}
\newcommand{\eclm}{\end{claim}}
\newcommand{\bex}{\begin{ex}}
\newcommand{\eex}{\end{ex}}

\newcommand{\inner}[1]{\langle #1 \rangle}

\newcommand{\mv}[1]{\mathbf{#1}}

\theoremstyle{plain}                   
\newtheorem{thm}{Theorem}[section]
\newtheorem{lmm}[thm]{Lemma}
\newtheorem{prop}[thm]{Proposition}
\newtheorem{cor}[thm]{Corollary}

\theoremstyle{definition}              
\newtheorem{pbm}[thm]{Problem}

\newtheorem{opm}[thm]{Open Problem}
\newtheorem{prm}[thm]{Problem}
\newtheorem{conj}[thm]{Conjecture}
\newtheorem{ex}[thm]{Example}

\newtheorem{defn}[thm]{Definition}

\newtheorem{rmk}[thm]{Remark}
\newtheorem{fact}[thm]{Fact}
\newcounter{claimCounter}
\newtheorem{claim}[claimCounter]{Claim}

\newtheorem*{example*}{Example}

\newtheorem*{lmm*}{Lemma}

\definecolor{Red}{RGB}{255,204,204}
\definecolor{Green}{RGB}{204,255,204}
\definecolor{Blue}{RGB}{204,204,255}

\newcommand{\set}[1]{\{#1\}}                    
\newcommand{\setof}[2]{\{{#1}\mid{#2}\}}        

\newcommand{\nodes}{\texttt{nodes}}
\newcommand{\edges}{\texttt{edges}}
\newcommand{\eat}[1]{}



\newcommand{\dom}{\problemname{DOM}} 
\newcommand{\bcqc}{\problemname{BagCQC}} 
\newcommand{\bcqca}{\problemname{BagCQC-A}} 
\newcommand{\ii}{\problemname{II}} 
\newcommand{\mii}{\problemname{Max-II}} 
\newcommand{\iip}{\problemname{IIP}} 
\newcommand{\miip}{\problemname{Max-IIP}} 
\newcommand{\umiip}{\problemname{Uniform-Max-IIP}} 

\newcommand{\calM}{\mathcal M} 
\newcommand{\calN}{\mathcal N} 

\allowdisplaybreaks

\begin{document}

\title{Bag Query Containment  and Information Theory}

\author{
    Mahmoud Abo Khamis \\ {\small relational\underline{AI}} \and
    Phokion G.\ Kolaitis \\ {\small UC Santa Cruz} \and
    Hung Q. Ngo \\ {\small relational\underline{AI}} \and
    Dan Suciu \\ {\small University of Washington}
}

\date{}

\maketitle

\begin{abstract}
The query containment problem is a fundamental algorithmic problem in data management. While
this problem is well understood under set semantics, it is by far less understood under bag
semantics. In particular, it is a long-standing open question whether or not  the
conjunctive query containment problem  under bag semantics is decidable. We unveil tight
connections between information theory and the conjunctive query containment under bag
semantics. These connections are established using information inequalities, which are
considered to be the laws of information theory. Our first main result asserts that deciding
the validity of a generalization of information inequalities is many-one equivalent to the restricted case of conjunctive query containment in which the containing query is acyclic; thus, either both these problems are decidable or both are undecidable.  Our second main result identifies a new decidable case of the conjunctive query containment problem under bag semantics. Specifically, we give an exponential-time algorithm for conjunctive query containment under bag semantics, provided the containing query is chordal and admits a simple junction tree.
\end{abstract}

\section{Introduction}

\label{sec:intro}

Since the early days of relational databases, the query containment problem has been recognized as a fundamental algorithmic problem  in data management.
This problem asks:
given two queries $Q_1$ and $Q_2$, is it true that $Q_1(\calD)\subseteq
Q_2(\calD)$, for every database $\calD$? Here, $Q_i(\calD)$ is the result of evaluating the query $Q_i$ on the database $\calD$. Thus, the query containment problem has several different variants,  depending on whether the evaluation uses set semantics or bag semantics, and  whether $\calD$ is a set database or a bag database.
Query containment  under set semantics on set databases is the most extensively studied and well understood such variant. In particular, Chandra and Merlin \cite{DBLP:conf/stoc/ChandraM77} showed that, for this variant, the containment problem for conjunctive queries is NP-complete.

Chaudhuri and Vardi \cite{DBLP:conf/pods/ChaudhuriV93} were the first to raise the importance of studying the query containment problem under bag semantics. In particular, they raised the question of the decidability of the containment problem  for conjunctive queries under bag semantics. There are two variants of this problem: in the \emph{bag-bag} variant,  the evaluation uses bag semantics and the input database is a bag, while in the \emph{bag-set} variant, the evaluation uses bag semantics and the input database is a set. It is known that for conjunctive queries, the bag-bag variant and the bag-set variant are polynomial-time reducible to each other (see, e.g., \cite{DBLP:conf/pods/JayramKV06}); in particular, either both variants are decidable or both are undecidable. Which of the two is the case, however, remains an outstanding open question to date.

During the past twenty five years, the research on the query containment problem under bag semantics has produced a number of results about extensions of conjunctive queries and also about restricted classes of conjunctive queries. Specifically, using different reductions from Hilbert's 10th Problem, it has been shown that the containment problem under bag semantics is undecidable for both the class of unions of conjunctive queries \cite{DBLP:journals/tods/IoannidisR95} and  the class of conjunctive queries with inequalities~\cite{DBLP:conf/pods/JayramKV06}. It should be noted that, under set semantics, the containment problem for these two classes of queries is decidable; in fact, it is NP-complete for unions of conjunctive queries \cite{DBLP:journals/jacm/SagivY80}, and  it is $\Pi_2^{\text{P}}$-complete for conjunctive queries with inequalities \cite{DBLP:journals/jacm/Klug88,DBLP:journals/jcss/Meyden97}.
As regards to restricted classes of conjunctive queries, several decidable cases of the bag-bag variant were identified in
\cite{DBLP:journals/ipl/AfratiDG10}, including the case where both $Q_1$ and $Q_2$ are \emph{projection-free} conjunctive queries, i.e., no variable is existentially quantified. Quite recently, this decidability result was extended to the case where $Q_1$ is a projection-free conjunctive query and $Q_2$ is an arbitrary conjunctive query \cite{KM2019}; the proof is via a reduction to a decidable class of Diophantine inequalities. In a different direction, information-theoretic methods were used in \cite{HDE} to study the \emph{homomorphism domination exponent} problem, which generalizes the conjunctive query containment problem under bag semantics on graphs. In particular, it was shown in \cite{HDE} that the conjunctive query containment problem under bag semantics is decidable when  $Q_1$ is a series-parallel graph and $Q_2$ is a chordal graph. This was the first time that notions and techniques from information theory were applied to the study of the containment problem under bag semantics.

Notions and techniques from information theory have  found a number of uses in other
areas of database theory. For example, entropy and mutual information have been used to
characterize database dependencies \cite{DBLP:journals/tse/Lee87,DBLP:journals/tse/Lee87a}
and normal forms in relational and XML databases \cite{DBLP:journals/jacm/ArenasL05}. More
recently, information inequalities were used with much success to obtain tight bounds
on the size of the output of a query on a given database
\cite{DBLP:journals/siamcomp/AtseriasGM13,DBLP:journals/jacm/GottlobLVV12,
DBLP:journals/talg/GroheM14,DBLP:conf/pods/KhamisNS16,DBLP:conf/pods/Khamis0S17},
{\em and} to devise query plans for worst-case optimal join
algorithms~\cite{DBLP:conf/pods/KhamisNS16,DBLP:conf/pods/Khamis0S17}.

This paper unveils deeper connections between information theory and the query containment problem under bag semantics. These connections are established through the systematic use of information inequalities, which have been called  the ``laws of information theory'' \cite{pippenger1986} as they express constraints on the entropy and thus ``govern the impossibilities in information theory'' \cite{yeung2012book}.

An \emph{information inequality} is an inequality of the form
\begin{align}
    0 &\leq  \sum_{X \subseteq V} c_X h(X), \label{eqn:ii}
\end{align}
where $V$ is a set of $n$ random variables over finite domains, each coefficient $c_X$ is a
real number, i.e.  $c = (c_X)_{X \subseteq  V}$ is a $2^n$-dimensional real vector,
$h$ is the {\em entropy function} of a joint distribution over $V$ ({\em $V$-distribution}
henceforth). In particular, $h(X)$ denotes the marginal entropy of the variables in the set
$X\subseteq V$.

An information inequality may hold for the entropy function of some $V$-distribution,
but may not hold for all $V$-distributions.  Following
\cite{DBLP:journals/entropy/Chan11}, we say that an information inequality is \emph{valid}
if it holds for the entropy function of {\em every} $V$-distribution. This notion gives rise
to the following natural decision problem, which we denote as $\iip$: given {\em integer}
coefficients $c_X \in \Z$ for all $X\subseteq V$, is the information inequality~\eqref{eqn:ii}
valid?\footnote{Equivalently, one can allow the input coefficients to be rational numbers.}

In this paper, we will also study a generalization of this problem that involves taking maxima of linear combinations of entropies. A  {\em max-information  inequality} is an expression of the form
\begin{align}
    0 &\leq  \max_{\ell \in [k]} \sum_{X \subseteq V} c_{\ell,X} h(X), \label{eqn:mii}
\end{align}
where  $V$,  $X$, $h(X)$ are as before, and for each $\ell \in [k]$,
$c_\ell := (c_{\ell, X})_{X \subseteq V}$ is a $2^n$-dimensional real vector.
We say that a max information inequality is \emph{valid} if it holds for the
entropy function of every $V$-distribution. We write $\miip$ to denote the following
decision problem: given $k$ integer vectors $c_\ell$ of dimension $2^n$,
is the  max information inequality \eqref{eqn:mii} valid?
Clearly, $\iip$ is the special case of $\miip$ in which $k=1$.

Our first main result asserts that $\miip$ is \emph{many-one equivalent} to the restricted
case of the conjunctive query containment problem  under bag semantics in which $Q_1$ is an
arbitrary conjunctive query and $Q_2$ is an acyclic conjunctive query. In fact, we show that
these two problems are reducible to each other via exponential-time many-one reductions.
This result establishes a {\em new} and {\em tight} connection between information theory and database theory, showing that  $\miip$ and the conjunctive query containment problem under bag semantics with acyclic $Q_2$ are equally hard.

To the best of our knowledge, it is not known whether $\miip$ is decidable. In fact, even $\iip$ is not known to be decidable; in other words, it is not known if there is an algorithm for telling whether a given  information inequality with integer coefficients is valid.  Even though the decidability question about $\iip$ and about $\miip$ does not seem to have been raised explicitly by researchers in information theory, we note that there is a growing body of research aiming to ``characterize'' all valid information inequalities; moreover,  finding such a ``characterization'' is regarded as a central problem in modern information theory (see, e.g., the survey \cite{DBLP:journals/entropy/Chan11}).  It is reasonable to expect that a ``good characterization'' of valid information inequalities will also give an algorithmic criterion for the validity of information inequalities. Thus, showing that $\iip$ is undecidable would imply that no ``good characterization'' of valid information inequalities exists.

Our second main result identifies a new decidable case of the conjunctive query containment
problem under bag semantics. Specifically, we show that there is an exponential-time
algorithm for testing whether $Q_1$ is contained in $Q_2$ under bag semantics, where $Q_1$
is an arbitrary conjunctive query and $Q_2$ is a conjunctive query that is \emph{chordal}
and admits a {\em junction tree} that is {\em simple}. Here, a query is chordal if its
Gaifman graph $G$ is chordal, i.e., $G$ admits a tree decomposition whose bags induce
(maximal) cliques of $G$; such a tree
   decomposition is called a {\em junction tree}. A tree decomposition is {\em simple} if
every pair of adjacent bags in the tree decomposition share at most one common variable.
The result follows from a new class of decidable $\miip$ problems.  Note that this result is
incomparable to the aforementioned decidability result about series-parallel and chordal
graphs in \cite{HDE}, in two ways. First, the result in \cite{HDE} applies only to graphs
(i.e., databases with a single binary relation symbol), while our result applies to
arbitrary relational schemas. Second, our result imposes more restrictions on $Q_2$, but no restrictions on $Q_1$.

The work reported here reveals that the conjunctive query containment problem under bag semantics is tightly intertwined with the validity problem for information inequalities. Thus, our work sheds new light on both these problems and, in particular, implies that
any  progress made in one of these problems will translate to similar progress in the other.


\section{Definitions}

\label{sec:problem}

We describe here the two problems whose connection forms the main
result of this paper.

\subsection{Query Containment Under Bag Semantics}
\label{sec:bcqc}

\paragraph{Homomorphisms between relational structures}
We fix a {\em relational vocabulary}, which is a tuple
$\calR = (R_1, \ldots, R_m)$, where each symbol $R_i$ has an
associated arity $a_i$. A {\em relational structure} is
$\calA = (A, R_1^A, \ldots, R_m^A)$, where $A$ is a finite set (called
domain) and each $R_i^A$ is a relation of arity $a_i$ over the
domain $A$.  Given two relational structures $\calA$ and $\calB$ with
domains $A$ and $B$ respectively, a homomorphism from $\calB$ to
$\calA$ is a function $f : B \rightarrow A$ such that for all $i$, we
have $f(R^B_i) \subseteq R_i^A$.  We write $\hom(\calB,\calA)$ for the
set of all homomorphisms from $\calB$ to $\calA$, and denote by
$|\hom(\calB,\calA)|$ its cardinality.

\paragraph{Bag-Set Semantics}
A {\em conjunctive query} $Q$ with variables $\vars(Q)$ and atom set
$\atoms(Q) = \set{A_1, \ldots, A_k}$ is a conjunction:
\begin{align}
    Q(\mv x) & = A_1 \wedge A_2 \wedge \cdots \wedge A_k. \label{eq:query:def}
\end{align}
For each $j \in [k]$, the atom $A_j$ is of the form
$R_{i_j}(\mv x_j)$, where $\rel(A_j) \defeq R_{i_j}$ is a relation
name, and $\vars(A_j) \defeq \mv x_j$ is a function,
\begin{align}
    \vars(A_j) &: [\arity(\rel(A_j))] \rightarrow \vars(Q) \label{eqn:vars:as:functions}
\end{align}
associating a variable to each attribute position of $\rel(A_j)$.  We
allow repeated variables in an atom.  The variables $\mv x$ are called
{\em head variables}, and must occur in the body.

A {\em database instance} is a structure $\calD$ with domain $D$. The
answer of a query \eqref{eq:query:def} with head variables $\mv x$ is
a set of $\mv x$-tuples\footnote{An $\mv x$-tuple is a tuple that assigns each variable in $\mv x$
a value in $D$.} with multiplicities.  Formally, for each
$\mv d \in D^{\mv x}$, denote
$Q(\calD)[\mv d] \defeq \setof{f \in \hom(Q,\calD)}{f(\mv x)=\mv d}$.
The {\em answer to $Q$ on $\calD$ under the bag-set semantics} is the
mapping $\mv d \mapsto |Q(\calD)[\mv d]|$.
The bag-set semantics corresponds to a $\texttt{count(*)-groupby}$
query in SQL.

Given two queries $Q_1, Q_2$ with the same number of head variables,
we say that {\em $Q_1$ is contained in $Q_2$ under bag-set semantics},
and denote with $Q_1 \preceq Q_2$, if for every $\calD$, we have
$Q_1(\calD) \leq Q_2(\calD)$, where $\leq$ compares functions
point-wise, $\forall \mv d, |Q_1(\calD)[\mv d]| \leq |Q_2(\calD)[\mv d]|$.

\begin{pbm}[Query containment problem under bag-set semantics] \label{problem:bcqc}
  Given $Q_1, Q_2$, check whether $Q_1 \preceq Q_2$.
\end{pbm}

A query $Q$ is called a {\em Boolean query} if it has no head
variables, $|\mv x|=0$.  It is known that the query containment problem
under bag semantics can be reduced to that of Boolean queries under
bag semantics.  For completeness, we provide the proof in
Appendix~\ref{appendix:problem}, and only mention here that the
reduction preserves all special properties discussed later in this
paper: acyclicity, chordality, simplicity.  For that reason, in this
paper we only consider Boolean queries, and denote
Problem~\ref{problem:bcqc} by $\bcqc$.


\paragraph{Bag-bag Semantics} In our setting the input database $\calD$
is a set, only the query's output is a bag.  This semantics is known
under the term {\em bag-set} semantics.  Query containment has also
been studied under the {\em bag-bag} semantics, where the database may
also have duplicates.  This problem is known to be reducible to the
containment problem under bag-set semantics \cite{DBLP:conf/pods/JayramKV06}, by adding a new attribute
to each relation, and for that reason we do not consider it further in
this paper.  One aspect of the bag-bag semantics is that repeated
atoms change the meaning of the query, while repeated atoms can be
eliminated under bag-set semantics.  For example
$R(x)\wedge R(x) \wedge S(x,y)$ and $R(x) \wedge S(x,y)$ are different
queries under bag-bag semantics, but represent the same query under
bag-set semantics.  Since we restrict to bag-set semantics we assume
no repeated atoms in the query.

\paragraph{The Domination Problem}
We briefly review two related problems that are equivalent to $\bcqc$.
Given two relational structures $\calA$ and $\calB$,
we say that $\calB$ {\em dominates} $\calA$, and write
$\calA \preceq \calB$, if $\forall \calD$,
$|\hom(\calA,\calD)| \leq |\hom(\calB,\calD)|$.

\begin{pbm}[The domination problem, $\dom$] Given a vocabulary $\calR$,
  and two structures $\calA, \calB$, check if $\calB$ dominates
  $\calA$: $\calA \preceq \calB$.
\end{pbm}

$\dom$ and $\bcqc$ are essentially the same problem.
Kopparty and Rossman~\cite{HDE} considered the following
generalization:

\begin{pbm}[The exponent-domination problem] Given a rational number
  $c \geq 0$ and two structures $\calA$ and $\calB$, check whether
  $|\hom(\calA,\calD)|^c \leq |\hom(\calB,\calD)|$ for all structures
  $\calD$.
\end{pbm}

This problem is equivalent to $\dom$, because it can be reduced to
$\dom$ by observing that
$|\hom(n\cdot \calA,\calD)| = |\hom(\calA, \calD)|^n$, where
$n\cdot\calA$ represents $n$ disjoint copies of $\calA$~\cite[Lemma
2.2]{HDE}.  Conversely, $\dom$ is the special case $c=1$.

%
%

\subsection{Information Inequality Problems}

In this paper all logarithms are in base $2$.
For a random variable $X$ with values that are in a finite domain $D$,
its (binary) {\em entropy} is defined by
\begin{align}
    H(X) &:= - \sum_{x \in D} \pr[X=x] \cdot \log \pr[X=x] \label{eqn:entropy}
\end{align}
Note that in the above definition, $X$ can be a tuple of random variables, in which case
$H(X)$ is their joint entropy. The entropy $H(X)$ is a non-negative real number.

Let $V=\set{X_1, \ldots, X_n}$ be a
set of $n$ random variables jointly distributed over finite domains.
For each $\alpha \subseteq [n]$, the joint distribution induces a marginal distribution
for the tuple of variables $X_\alpha = (X_i : i \in \alpha)$. One can also equivalently think of
$X_\alpha$ as a vector-valued random variable.
Either way, the marginal entropy on $X_\alpha$ is defined by~\eqref{eqn:entropy} too,
where we replace $X$ by $X_\alpha$.
Define the function $h : 2^{[n]} \rightarrow \R_+$ as $h(\alpha) \defeq H(X_\alpha)$, for all
$\alpha \subseteq [n]$.  We call $h$ an {\em entropic function} (associated with the joint
distribution on $V$) and
identify it with a vector $h \in \R^{2^n}_+$.

The set of all entropic
functions is denoted\footnote{Most texts drop the component
  $h(\emptyset)$, which is always 0, and define
  $\Gamma_n^* \subseteq \R^{2^n-1}_+$.  We prefer to keep the
  $\emptyset$-coordinate to simplify notations.} by
$\Gamma_n^* \subseteq \R^{2^n}_+$.  With some abuse, we blur the
distinction between the set $[n]$ and the set of variables
$V = \set{X_1, \ldots, X_n}$, and write $h(X_\alpha)$ instead of
$h(\alpha)$.

An {\em information inequality}, or $\ii$, defined by a vector
$c = (c_X)_{X\subseteq V}$ $\in \R^{2^V}$, is an inequality of the form
\begin{align}
    0 &\leq \sum_{X \subseteq V} c_X h(X)  \label{eq:iti:0}
\end{align}
The information inequality is {\em valid} if it holds for all $h \in \Gamma_n^*$~\cite{DBLP:journals/entropy/Chan11}.

\begin{pbm}[$\ii$-Problem] Given a set $V$ and a collection of integers $c_X$, for
  $X \subseteq V$, check whether the information
  inequality~\eqref{eq:iti:0} is valid.
\end{pbm}

A {\em max-information inequality}, or $\mii$, is defined by $k$
vectors $\mv c_{\ell} := (c_{\ell, X})_{X\subseteq V} \in \R^{2^V}$, $\ell \in [k]$,
and is written as:
\begin{align}
    0 & \leq \max_{\ell \in [k]} \sum_{X \subseteq V} c_{\ell,X} h(X)  \label{eq:miti:0}
\end{align}
The $\mii$ is {\em valid} if it holds for all entropic functions
$h \in \Gamma_n^*$.

\begin{pbm}[$\mii$ Problem] Given a set $V$ and integers $c_{\ell,X}$, for
  $\ell \in [k]$ and $X \subseteq V$, check whether the $\mii$
  \eqref{eq:miti:0} is valid.
\end{pbm}

We denote the $\ii$- and $\mii$ problems by $\iip$ and $\miip$
respectively.  Both are co-recursively enumerable
\conferenceorfull{}{(Appendix~\ref{sec:background:it:long})} and it is open if
any of them is decidable.


%
%
%

\section{Main Results}

\label{sec:main:results}

\subsection{Connecting $\bcqc$ to  Information Theory}

We state our first main result, and defer its proofs to
Sec.~\ref{sec:connection} and~\ref{sec:max-iti:completeness}.  Recall
that a {\em many-one reduction} of a decision problem $A$ to another
decision problem $B$, denoted by $A \leq_m B$, is a computable
function $f$ such that for every input $X$, the yes/no answer to
problem $A$ on $X$ is the same as the yes/no answer to the problem $B$
on $f(X)$.  This is a special case of a Turing reduction,
$A \leq_T B$, which means an algorithm that solves $A$ given access to
an oracle that solves $B$.  Two problems are {\em many-one
  equivalent}, denoted by $A \equiv_m B$, if $A \leq_m B$ and
$B \leq_m A$.
%
%
%
%

Our main result is that the $\miip$ is many-one equivalent to the
query containment problem under bag semantics, when the containing
query is restricted to be acyclic.  We briefly review acyclic queries
here (we only consider $\alpha$-acyclicity in this paper~\cite{DBLP:journals/jacm/Fagin83}):

\begin{defn} \label{def:td} A {\em tree decomposition} of a query $Q$
  is a pair $(T,\chi)$ where $T$ is an undirected forest\footnote{We
    allow $Q$ to be disconnected, in which case $T$ can be a forest,
    but we continue to call it a tree decomposition.} and
  $\chi : \nodes(T) \rightarrow 2^{\vars(Q)}$ satisfies (a) the running
  intersection property: $\forall x \in \vars(Q)$,\linebreak
  $\setof{t \in \nodes(T)}{x \in \chi(t)}$ is connected in $T$, and
  (b) the coverage property: for every $A \in \atoms(Q)$, there exists
  $t \in \nodes(T)$ s.t. $\vars(A) \subseteq \chi(t)$.  The sets
  $\chi(t)$ are called the {\em bags}\footnote{Not to be confused with
    the bag semantics.} of the tree decomposition. A query $Q$ is {\em
    acyclic} if there exists a tree decomposition $(T,\chi)$ such
  that, for all $t \in \nodes(T)$, $\chi(t) = \vars(A)$ for some
  $A \in \atoms(Q)$.
\end{defn}


\begin{thm} \label{th:main:result}
  Let $\bcqca$ denote the $\bcqc$ problem $Q_1 \preceq Q_2$, where
  $Q_2$ is restricted to acyclic queries.  Then
  $\miip \equiv_m \bcqca$.
\end{thm}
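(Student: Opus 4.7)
My plan is to prove $\miip \equiv_m \bcqca$ by giving two (exponential-time) many-one reductions. Both rest on the classical bridge between hom-counts and Shannon entropies: for any query $Q$ and database $\calD$, the uniform distribution over $\hom(Q,\calD)$ is a joint distribution on $\vars(Q)$ whose total entropy equals $\log|\hom(Q,\calD)|$ and whose marginal on each atom $A$ has entropy at most $n_A := \log|R^\calD|$, where $R=\rel(A)$. Combined with the tightness of the polymatroid upper bound for acyclic queries, this lets us translate freely between containments of hom-counts and max-information inequalities.

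For the direction $\bcqca \leq_m \miip$, I would exploit the identity, valid because $Q_2$ is acyclic,
\[
\log|\hom(Q_2,\calD)| \;=\; \max\bigl\{\,h(\vars(Q_2)) : h \text{ an entropic function on } \vars(Q_2),\; h(\vars(A)) \leq n_A\ \forall A\in\atoms(Q_2)\,\bigr\},
\]
and enumerate the (exponentially many) vertices of the associated polymatroid polytope to rewrite the right-hand side as $\max_{\ell\in[k]}\sum_A x_{\ell,A}\, n_A$. Introducing auxiliary random variables $Y_A$, one per atom of $Q_2$, each uniform over its relation so that $H(Y_A)=n_A$, and coupling them with the entropic vector induced by $\hom(Q_1,\calD)$, I obtain an $\miip$ over $\vars(Q_1)\cup\{Y_A\}_A$ whose $k$ branches index the polytope vertices. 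Validity of this $\miip$ is equivalent to $Q_1 \preceq Q_2$ by a worst-case-database argument: every entropic vector on the extended variable set is realized asymptotically by a normalized sequence of databases.

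For the direction $\miip\leq_m\bcqca$, I would instead build $Q_1$ and an acyclic $Q_2$ realizing a given $\miip$. Here I lean on the worst-case database construction from \cite{HDE} and the AGM line of work: any entropic vector on $V$ is asymptotically realizable by a sequence of normalized databases, so $Q_1$ is chosen so that $h(\vars(Q_1))$ matches the LHS. The acyclic $Q_2$ is assembled as a forest-shaped junction tree of atom gadgets, with atom arities, variable identifications, and separator bags engineered so that the polymatroid LP $\max_h h(\vars(Q_2))$ subject to the atom constraints has exactly the $k$ vertex values $\sum_X c_{\ell,X}h(X)$---positive coefficients realized by leaf atoms, negative coefficients by junction-tree separators. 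The asymptotic identity $\log\sum_\ell 2^{L_\ell}\approx\max_\ell L_\ell$ then lifts the acyclic max-LP to the $\miip$'s max.

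The main obstacle I expect is the second direction: given an arbitrary $\miip$ with $k$ branches and arbitrary integer coefficients, producing a single \emph{acyclic} $Q_2$ whose polymatroid LP realizes precisely the prescribed vertex structure. Ensuring the constructed $Q_2$ is genuinely acyclic (not merely a disjoint collection of acyclic pieces requiring a wider decomposition), routing negative coefficients uniformly through junction-tree separators, and matching the asymptotic database constructions to the max structure of the $\miip$---all while keeping the reduction computable and independent of the target $h$---are the technically delicate pieces.
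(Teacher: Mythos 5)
There is a genuine gap, and it sits in the very first step of your $\bcqca \leq_m \miip$ direction. The claimed identity $\log|\hom(Q_2,\calD)| = \max\setof{h(\vars(Q_2))}{h(\vars(A))\leq n_A,\ \forall A\in\atoms(Q_2)}$ is false: the right-hand side is the entropy-based \emph{worst-case upper bound} on the output size, attained by \emph{some} database with the prescribed relation cardinalities, not by every $\calD$. For instance, with $Q_2 = R(y_1,y_2)\wedge S(y_2,y_3)$ one can choose $R^\calD$ and $S^\calD$ of size $N$ whose join is empty, while the right-hand side is $2\log N$. More fundamentally, the per-atom cardinalities $n_A$ are not a sufficient statistic for containment: $Q_1\preceq Q_2$ quantifies over a single shared database, and the database maximizing $|\hom(Q_1,\calD)|$ for given $n_A$ is generally not the one witnessing tightness of the bound for $Q_2$. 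Replacing $|\hom(Q_2,\calD)|$ by a cardinality-based upper bound therefore yields only a one-sided implication, so the resulting $\miip$ over $\vars(Q_1)\cup\set{Y_A}_A$ is not equivalent to $Q_1\preceq Q_2$ (and it is also unclear how to impose the needed side constraints $h(\vars(A'))\leq h(Y_A)$ inside a bare $\miip$, which is an unconstrained universally quantified inequality). The paper avoids all of this: the inequality it associates to the instance is \eqref{eqn:another:sufficient}, namely $h(\vars(Q_1)) \leq \max_{(T,\chi)\in\td(Q_2)}\max_{\varphi\in\hom(Q_2,Q_1)}(E_T\circ\varphi)(h)$ over entropic functions on $\vars(Q_1)$ alone, whose branches are indexed by \emph{homomorphisms $Q_2\to Q_1$} composed with the tree-decomposition expression $E_T$ of \eqref{eq:et}. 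Sufficiency holds for arbitrary $Q_2$ (Theorem~\ref{th:sufficient}); necessity for acyclic $Q_2$ is proved via Chan--Yeung group-characterizable, totally uniform witnesses (Theorem~\ref{th:necessary}). This indexing of the max by homomorphisms, rather than by vertices of a cardinality polytope, is the idea your reduction is missing.

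Your second direction is closer in spirit to the paper's (build $Q_1$ and an acyclic $Q_2$ realizing a given $\miip$), and you correctly flag the hard part, but without the $E_T\circ\varphi$ characterization you have no precise target to hit. The paper engineers $Q_1$ and $Q_2$ so that the homomorphisms $Q_2\to Q_1$ are forced, via the distinguished variables $\tilde Z$ and $U_1,U_2$, to correspond (after erasing adornments) exactly to the branches $E_\ell$ of a normalized $\miip$. This requires first putting the $\miip$ into $(n,p,q)$-uniform form (Lemma~\ref{lemma:n:n:p}) and then eliminating the coefficient $q$ on $h(V)$ by passing to $q$ i.i.d.\ copies of the variables (Lemma~\ref{lemma:adornment}); your appeal to the asymptotic identity $\log\sum_\ell 2^{L_\ell}\approx\max_\ell L_\ell$ does not substitute for these steps.
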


The proof of the theorem consists of three steps.  First, we describe
in Sec.~\ref{subsec:sufficient} a $\miip$ inequality that is
sufficient for containment, which is quite similar to, and inspired by
an inequality by Kopparty and Rossman~\cite{HDE}.  Second, we prove in
Sec.~\ref{subsec:necessary} that, when $Q_2$ is acyclic, then this
inequality is also necessary, thus solving the conjecture
in~\cite[Sec.3]{HDE}; our proof is based on Chan-Yeung's
group-characterizable entropic
functions~\cite{DBLP:conf/isit/Chan07,DBLP:journals/tit/ChanY02}.  In particular,
$\bcqca \leq_m \miip$.  We do not know if this can be strengthened to
$\bcqc$ and/or $\iip$ respectively.  Finally, we give the many-one
reduction $\miip \leq_m \bcqca$ in
Sec.~\ref{sec:max-iti:completeness}.


\subsection{Novel Decidable Class of $\bcqc$}

\label{subsec:results:decidable}

Our next two results consist of a novel decidable class of query
containment under bag semantics, and, correspondingly, a novel
decidable class of max-information inequalities.  We state here the
results, and defer their proofs to
Section~\ref{sec:decidability}.

We show  that containment is decidable when
$Q_2$ is {\em chordal} and admits a {\em simple} junction tree (decomposition);
to formally state the result, we define chordality, simplicity, and junction tree next.

A query $Q$ is said to be {\em chordal} if its Gaifman graph $G$ is chordal, i.e., there is
a tree decomposition of $G$ in which every bag induces a clique of $G$.
A tree decomposition of $G$ (and thus of $Q$)
where all bags induce {\em maximal cliques} of $G$ is called a {\em junction tree} in the
graphical models literature (see Def. $2.1$ in~\cite{DBLP:journals/ftml/WainwrightJ08}).

Fix a tree decomposition of a query $Q$, and let $t \in \nodes(T)$.
A tree decomposition is called {\em simple} if
$\forall (t_1,t_2) \in \edges(T)$, $|\chi(t_1)\cap \chi(t_2)| \leq 1$,
and is called {\em totally disconnected} if\footnote{Equivalently,
  $\edges(T)=\emptyset$, because any edge
  s.t. $\chi(t_1)\cap\chi(t_2)=\emptyset$ can be removed.}
$\forall (t_1,t_2) \in \edges(T)$,
$\chi(t_1)\cap \chi(t_2) = \emptyset$.
As an example of a totally disconnected tree decomposition, consider the query
$Q() \leftarrow R(a), S(b)$ and a tree decomposition of $Q$ with only two nodes $t_1$ and $t_2$
where $\chi(t_1) =\{a\}$ and $\chi(t_2)=\{b\}$.

Note that every acyclic query is chordal, but not necessarily simple;
for example, the query $Q() \leftarrow R(a,b,c), S(b,c,e)$
is a non-simple acyclic query.
Conversely a chordal query is not necessarily acyclic; for example, any $k$-clique query
with $k \geq 3$ is chordal.

\begin{thm} \label{th:decidable} Checking $Q_1 \preceq Q_2$ is
  decidable in exponential time when $Q_2$ is chordal and admits a simple
  junction tree.
\end{thm}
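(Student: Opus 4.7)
The plan is to reduce $Q_1 \preceq Q_2$ to the validity of a concrete max-information inequality $\Phi(Q_1,Q_2)$, obtained by specializing the sufficient $\miip$ of Sec.~\ref{subsec:sufficient} to the junction-tree structure of $Q_2$, and then to show that the simple-junction-tree hypothesis forces the entropic and polymatroidal cones to agree on the slice that $\Phi$ actually probes. This collapse turns validity into a finite linear program, yielding an algorithm running in time $2^{O(|\vars(Q_2)|)}$.

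For the reduction, I would start from the sufficient inequality of Sec.~\ref{subsec:sufficient}, which for arbitrary $Q_2$ upper-bounds a linear form in entropies of atom-variable sets of $Q_1$ by a linear form in entropies of atom-variable sets of $Q_2$. Using the junction tree $(T,\chi)$ of the chordal query $Q_2$ whose bags are its maximal cliques, I would rewrite the right-hand side in \emph{bag form}: since every atom-variable set of $Q_2$ is contained in some bag, monotonicity $h(\chi(t)) \ge h(\vars(A))$ lets the per-atom terms be replaced by per-bag terms, at the price of taking a max over the ways each atom of $Q_1$ may be placed into a bag of $T$. The result is a concrete $\miip$ $\Phi(Q_1,Q_2)$ with at most $|\nodes(T)|^{|\atoms(Q_1)|}$ terms, and its validity implies $Q_1 \preceq Q_2$.

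The decidability step is the crux. When $T$ is simple, adjacent bags share at most one variable, so the global joint distribution on $\vars(Q_2)$ consistent with prescribed bag marginals can be reconstructed by iterated single-variable gluing. Thus, for every polymatroid $h$ satisfying the tree-decomposition submodular equalities $h(\chi(s)\cup\chi(t))+h(\chi(s)\cap\chi(t)) = h(\chi(s))+h(\chi(t))$ on each edge of $T$, there exists a genuine distribution whose entropies agree with $h$ on all bags and their unions --- exactly the entropies appearing in $\Phi(Q_1,Q_2)$. Consequently, entropic validity of $\Phi$ over $\Gamma_n^*$ is equivalent to polymatroidal validity, and the latter is a finite LP with $2^{|\vars(Q_2)|}$ coordinates and polynomially many Shannon-type constraints per term of the max; solving the resulting LPs gives the claimed exponential-time bound. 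The main obstacle is exactly this entropic-to-polymatroidal collapse: it fails for junction trees with separators of size $\ge 2$, where conditional-independence statements $X \perp Y \mid Z$ with $|Z|\ge 2$ generate non-Shannon information inequalities (e.g.\ the Zhang--Yeung inequality), so no realization theorem is available. Simplicity is therefore not cosmetic but the precise boundary at which the $\miip$ arising from $\bcqc$ becomes Shannon-decidable.
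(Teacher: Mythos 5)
There are two genuine gaps. First, your argument only establishes one direction of the reduction: you show that validity of your inequality $\Phi(Q_1,Q_2)$ implies $Q_1\preceq Q_2$, and then you argue that validity of $\Phi$ is decidable. But an algorithm that decides $\Phi$ does not decide containment unless you also prove \emph{necessity}: that failure of $\Phi$ yields a database $\calD$ with $|\hom(Q_1,\calD)|>|\hom(Q_2,\calD)|$. This is the harder half and the place where chordality is actually used. The paper proves it as Lemma~\ref{lemma:necessary:chordal}: a counterexample that is a \emph{normal} entropic function (a non-negative combination of step functions, hence the entropy of a domain product of two-tuple relations) can be turned into a witness, because chordality makes every bag a clique and lets one establish the locality property~\eqref{eq:necessary:3} for each two-tuple factor. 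Normality of the counterexample is essential --- the parity function in Appendix~\ref{appendix:decidability} shows that a general entropic counterexample does not yield a witness when $Q_2$ is not acyclic. Your proposal never produces a witness, so as written it gives only a sufficient condition for containment, not a decision procedure.

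Second, the mechanism you propose for the entropic-to-polymatroidal collapse does not work. You claim that for a simple junction tree, every polymatroid satisfying the edge submodularity equalities is realized by an actual distribution on the bags and their unions, by single-variable gluing. Even granting consistent gluing along singleton separators, the glued distribution has conditional-independence structure across separators, so its entropy at the full set $\vars(Q_1)$ is forced to equal a tree-sum of bag entropies; it will not reproduce the value $h(\vars(Q_1))$ of an arbitrary polymatroid that \emph{violates} the inequality, which is exactly the case you need to transfer. Moreover a polymatroid restricted to a bag of four or more variables need not be (almost) entropic at all, so the per-bag marginals you want to glue may not exist. The paper's collapse (Theorem~\ref{th:simple}) is a one-sided \emph{domination}, not a realization: Lemma~\ref{lemma:h:domination} produces a normal polymatroid $h'\in\calN_n$ with $h'\leq h$ everywhere, $h'([n])=h([n])$, and $h'(\set{i})=h(\set{i})$ for all singletons. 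Simplicity enters precisely because every conditioning set in $E_T\circ\varphi$ is a singleton or empty, so $E_\ell(h')\leq E_\ell(h)$ while the left-hand side is preserved; hence a polymatroidal counterexample yields a \emph{normal} one, which is entropic and, crucially, is exactly the kind of counterexample Lemma~\ref{lemma:necessary:chordal} can convert into a database. Your instinct that simplicity is the precise boundary is right, but the reason is the preservation of singleton values under domination, not a marginal-realization theorem, and the collapse must land in $\calN_n$ rather than merely in $\Gamma_n^*$ for the necessity step to go through.
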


Next, we  complement Theorem~\ref{th:decidable} by showing
that, if $Q_1 \not\preceq Q_2$ then there exists a ``witness'' with a
simple structure.  This result is similar in spirit to other results
where a decision problem can be restricted to special databases: for
example, query containment under set semantics holds iff it holds on
the canonical database of $Q_1$~\cite{DBLP:conf/stoc/ChandraM77}, and
implication between functional dependencies holds iff it holds on all
relations with two tuples.

Let $Q_1$ be a query and $V = \vars(Q_1)$.  A relation
$P \subseteq D^{V}$ is called a {\em $V$-relation}.
A $V$-relation $P$ and $Q_1$ {\em induce} a database instance
$\Pi_{Q_1}(P) \defeq (D, R_1^D, \ldots, R_m^D)$ where,
\begin{eqnarray}
    \forall \ell \in [m]: & R_\ell^D & \defeq \bigcup_{A \in \atoms(Q_1): \rel(A)=R_\ell} \Pi_{\vars(A)}(P)
\label{eq:projection:db}
\end{eqnarray}
In other words, we project $P$ on each atom, and define $R_\ell^D$ as
the union of projections on atoms with relation name $R_\ell$.

The notation $\Pi_{\vars(A)}(P)$ requires some explanation, because
the atom $A$ may have repeated variables, thus $\vars(A)$ is a
function (described in~\eqref{eqn:vars:as:functions}).  Given a set of
integer indices $Y$ and a function $\varphi : Y \rightarrow {V}$, the
{\em generalized projection} is
$\Pi_\varphi(P) \defeq \setof{f \circ \varphi}{f \in D^{V}}$.  A tuple
$f \in D^V$ is a function $V \rightarrow D$, hence $f \circ \varphi$
just denotes function composition.  For example, if $Q_1 = R(x,x,y)$
and $P = \set{(a,b)}$, then $R^D = \Pi_{(x,x,y)}(P) = \set{(a,a,b)}$.
Obviously $P \subseteq \hom(Q_1,\Pi_{Q_1}(P))$, which means
$|P| \subseteq |\hom(Q_1,\Pi_{Q_1}(P))|$, and this implies:

\begin{fact}[Witness]
If there exists a $\vars(Q_1)$-relation $P$ such that
  $|P| > |\hom(Q_2, \Pi_{Q_1}(P))|$, then $Q_1 \not\preceq Q_2$,
  in which case $P$ is said to be a {\em witness} (for the fact that $Q_1 \not\preceq Q_2$).
\end{fact}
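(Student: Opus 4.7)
The plan is to take the database instance $\calD := \Pi_{Q_1}(P)$ as the candidate witness instance and show that it separates $Q_1$ and $Q_2$ on the count of homomorphisms. Since $Q_1,Q_2$ are Boolean, the containment $Q_1 \preceq Q_2$ is equivalent to $|\hom(Q_1,\calD)| \leq |\hom(Q_2,\calD)|$ for all $\calD$, so it suffices to exhibit a single $\calD$ where the reverse strict inequality holds. The hypothesis already supplies $|P| > |\hom(Q_2,\calD)|$, so the only work is to verify the easy lower bound
\begin{equation*}
    |\hom(Q_1,\calD)| \ \geq \ |P|,
\end{equation*}
via the natural injection $P \hookrightarrow \hom(Q_1,\calD)$.

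The key step is therefore to show $P \subseteq \hom(Q_1,\calD)$ when elements of $P$ are regarded as functions $V \to D$ with $V = \vars(Q_1)$. Fix $f \in P$ and an atom $A \in \atoms(Q_1)$ with $\rel(A) = R_\ell$ and $\vars(A) : [\arity(R_\ell)] \to V$. Under the generalized-projection convention, the tuple $f(\vars(A))$ is exactly the composition $f \circ \vars(A)$, so by the definition of $\Pi_{\vars(A)}(P)$ one has $f \circ \vars(A) \in \Pi_{\vars(A)}(P)$. But by definition~\eqref{eq:projection:db}, $\Pi_{\vars(A)}(P)$ is one of the sets whose union forms $R_\ell^D$, so $f \circ \vars(A) \in R_\ell^D$. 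Since this holds for every atom of $Q_1$, $f$ is a homomorphism $Q_1 \to \calD$. The map $f \mapsto f$ (from $P$ to $\hom(Q_1,\calD)$) is clearly injective because both sides live in $D^V$, giving $|P| \leq |\hom(Q_1,\calD)|$.

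Chaining the two inequalities yields $|\hom(Q_1,\calD)| \geq |P| > |\hom(Q_2,\calD)|$, which contradicts $Q_1 \preceq Q_2$ and so establishes $Q_1 \not\preceq Q_2$, with $P$ playing the role of the witness. There is no genuine obstacle here: the entire content of the fact is the careful bookkeeping around the generalized projection and the fact that repeated variables in an atom $A$ are handled correctly by the composition $f \circ \vars(A)$. The only mild subtlety worth flagging in the write-up is that, because $\vars(A)$ may be non-injective, one must verify that projection followed by lifting through $f$ produces the correct tuple in $R_\ell^D$; this is automatic from the function-composition definition of $\Pi_\varphi$ adopted in the paper, and no additional argument is required.
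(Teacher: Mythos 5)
Your proof is correct and matches the paper's own (very terse) justification: the paper simply notes that $P \subseteq \hom(Q_1,\Pi_{Q_1}(P))$, hence $|P| \leq |\hom(Q_1,\Pi_{Q_1}(P))|$, and the Fact follows immediately. You have merely unpacked why each $f \in P$, viewed as a function $\vars(Q_1) \to D$, is a homomorphism into $\Pi_{Q_1}(P)$ via the generalized-projection definition, which is exactly the intended argument.
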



We next define two special types of relations (and witnesses) that have interesting analogues in
information theory and thus arise naturally when doing reductions between the database
world and the information theory world.
Let $W$ be a set of integer indices.
Fix $\psi : W \rightarrow 2^{V}$ and a tuple $f \in D^{V}$.
For any index $y \in W$, we view $f(\psi(y))$ as an atomic value in the
domain $D^{\psi(y)}$.  Define the $W$-tuple
$\psi \cdot f \defeq (f(\psi(y)))_{y \in W}$; its components may
belong to different domains.

\begin{defn}[Product and normal relations] \label{def:normal:relation}
  A $V$-relation $P$ is a {\em product relation} if
  $P = \prod_{x \in V} S_x$, where each $S_x$ is a unary relation.
  A $W$-relation is called a {\em normal relation} if it is of the form
  $\setof{\psi \cdot f}{f \in P}$ where $P$ is some product $V$-relation and
  $\psi: W \rightarrow 2^{V}$ is some function.
\end{defn}

One can verify that every product relation is a normal relation.
For a simple illustration, consider the case when $V = \{X_1,X_2\}$.
A product relation on $V$ is
$\setof{(u,v)}{u, v \in [N]} =[N] \times [N]$.  A normal relation with
four attributes is $\setof{(uv,u,v,v)}{u, v \in [N]}$, where $uv$
denotes the concatenation of $u$ and $v$.
This normal relation corresponds to the map $\psi : [4] \to 2^V$ where
$\psi(1) = \{X_1, X_2\}$,
$\psi(2) = \{X_1\}$, and
$\psi(3) = \psi(4) = \{X_2\}$.
In a product relation all
attributes are independent, while a normal relation may have
dependencies: in our example the first attribute $uv$ is a key, and the
last two attributes are equal.

\begin{thm} \label{th:product:normal:databases} Let $Q_2$ be chordal,
  \begin{itemize}
      \item[(i)] If $Q_2$ admits a totally disconnected junction tree,
          then $Q_1 \not\preceq Q_2$ if and only if there is a product witness.
      \item[(ii)] If $Q_2$ admits a simple junction tree, then $Q_1 \not\preceq Q_2$
    if and only if there exists a normal witness.
  \end{itemize}
\end{thm}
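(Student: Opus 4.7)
The $(\Leftarrow)$ direction of both (i) and (ii) is immediate from the Witness Fact stated just before the theorem. For the $(\Rightarrow)$ direction, my plan is to route the argument through the information-theoretic characterization of Theorem~\ref{th:main:result}. Since a chordal query, viewed through its maximal-clique junction tree, is acyclic, the containment $Q_1 \preceq Q_2$ is equivalent to the validity of a specific max-information inequality $\calI$ in entropy variables indexed by $V = \vars(Q_1)$. Moreover, the sufficient-inequality construction in Section~\ref{subsec:sufficient} is built so that a $V$-relation $P$ witnessing $|P| > |\hom(Q_2,\Pi_{Q_1}(P))|$ corresponds, up to rescaling, to the uniform distribution on $P$ producing an entropy that violates $\calI$, and conversely. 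Hence the task reduces to: if $\calI$ fails, find a product (resp.\ normal) $V$-relation whose uniform-distribution entropy still violates $\calI$.

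For part (i), when the junction tree is totally disconnected, $Q_2$ splits into cliques on pairwise disjoint bags $\chi(t_1), \dots, \chi(t_s)$, and $\calI$ decomposes into modular-friendly bag inequalities. Given a counterexample $h^*$, I would replace it by its modular upper envelope $\hat h(X) = \sum_{x \in X} h^*(\{x\})$. By subadditivity $\hat h \ge h^*$ pointwise, so each bag term $h(\chi(t_i))$ weakly increases; a careful accounting of the remaining coefficients---using that the bags are disjoint, so each variable contributes to exactly one bag---will show that the $\hat h$-violation of $\calI$ is at least as large as the $h^*$-violation. Since $\hat h$, after integer rounding and rescaling, is the entropy of a product relation $\prod_x S_x$ with $\log|S_x| \approx h^*(\{x\})$, this yields a product witness.

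For part (ii), with a simple junction tree, the modularization is done per-bag but with a coupling through the single shared variable on each edge. Root the tree and, proceeding from the leaves upward, build on each bag $t$ an independent-uniform distribution on $\chi(t)\setminus\{z\}$, where $z = \chi(t)\cap \chi(\parent(t))$, and glue it to the parent bag through $z$. Concretely, assign each atom-tuple position an index $y$ with $\psi(y)\subseteq V$ recording which bag-variables contribute to it; the result is exactly a normal relation in the sense of Definition~\ref{def:normal:relation}, and its entropy takes the atomic-polymatroid form $\hat h(Y) = \sum_{x \in \bigcup_{y \in Y}\psi(y)}\log|S_x|$. The claim to verify is that if $h^*$ violates $\calI$ then so does $\hat h$; this proceeds by an induction on the junction tree, in which at each step the one-variable overlap plays the role of a coupling variable in a conditional-independence decomposition of the right-hand side of $\calI$.

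The main obstacle is the last verification in part (ii): making precise the match between the bag-entropy terms of $\calI$ and the $\psi$-structure of normal relations. Chordality of $Q_2$ ensures that $\calI$ is organized along the junction tree, but only under the simplicity assumption does the per-bag conditional-independence decomposition of a normal relation align cleanly with the bag terms of $\calI$; without simplicity, a bag-bag intersection of size $\ge 2$ would introduce higher-order correlations that normal relations cannot reproduce. Making this alignment tight, and absorbing the integer-rounding loss in the choice of $|S_x|$ into the strictness of the original violation, is where the bulk of the technical work will lie.
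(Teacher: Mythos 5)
Your high-level strategy is the right one and matches the paper's: reduce to the max-information inequality \eqref{eqn:another:sufficient}, show that a failing inequality must already fail on a modular (resp.\ normal) entropic function, and then realize that function as the entropy of a product (resp.\ normal) relation, which serves as the witness. The $(\Leftarrow)$ directions are indeed immediate from the Witness Fact. However, there are two genuine gaps in the $(\Rightarrow)$ direction.

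First, your opening claim that a chordal query ``viewed through its maximal-clique junction tree, is acyclic'' is false: the paper explicitly notes that any $k$-clique query with $k\geq 3$ is chordal but not acyclic. Consequently you cannot import the equivalence of Theorem~\ref{th:main:result}; the necessity direction (Theorem~\ref{th:necessary}) is proved only for acyclic $Q_2$, and its key step --- the locality property \eqref{eq:necessary:3}, which confines each bag-answer to a single row of $P$ --- genuinely fails for chordal $Q_2$ and arbitrary relations. The paper exhibits this with the parity relation and the triangle query in Appendix~\ref{appendix:decidability}: a totally uniform relation can violate the inequality without being a witness. The missing ingredient is Lemma~\ref{lemma:necessary:chordal}, which re-proves locality specifically for \emph{normal} relations by writing them as domain products of two-tuple step relations $P_W$ and using chordality (every bag is a clique, so any two bag variables co-occur in an atom) to force the $\{1,2\}$-pattern of each factor. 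Your proposal asserts the ``and conversely'' correspondence between inequality violations and witnesses without this argument, and that is precisely where the theorem's content lies.

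Second, your reduction from an arbitrary entropic counterexample to a modular/normal one runs in the wrong direction. You propose the modular \emph{upper} envelope $\hat h(X)=\sum_{x\in X}h^*(\{x\})\geq h^*$, which inflates both sides of the inequality; preservation of the violation then requires the increase of the left side to dominate that of the right, and your accounting rests on the bags of $Q_2$ being disjoint --- but the terms of $E_T\circ\varphi$ live on the $\varphi$-images in $\vars(Q_1)$, which can coincide or overlap even when the bags are disjoint. The paper instead uses Lemma~\ref{lemma:h:domination}: a normal (resp.\ modular) polymatroid $h'\leq h$ that agrees with $h$ at $V$ and at all singletons. Simplicity of the junction tree makes every conditioning set a singleton, so each term $h'(Y_i|X_i)=h'(X_iY_i)-h'(X_i)\leq h(X_iY_i)-h(X_i)$, the right-hand side only decreases, the left-hand side is preserved, and the violation survives. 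That domination lemma (proved by a nontrivial induction on the M\"obius inverse) is the technical heart you would still need to supply; the per-bag gluing you sketch for part (ii) does not substitute for it.
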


We prove both theorems in Section~\ref{sec:decidability}, using
the novel results on information-theoretic inequalities described
next, in Section~\ref{sec:background:it:short}.

\begin{ex} \label{ex:normal:database} We illustrate with the following
  queries:
{\small
  \begin{align*}
    Q_1 = & A(x_1,x_2) \wedge B(x_1,x_2) \wedge C(x_1,x_2)\wedge  A(x_1',x_2') \wedge B(x_1',x_2') \wedge C(x_1',x_2')\\
    Q_2 = & A(y_1,y_2) \wedge B(y_1,y_3) \wedge C(y_4,y_2)
  \end{align*}
}
$Q_2$ is acyclic with a simple junction tree:
$\set{y_1,y_3} - \set{y_1,y_2} - \set{y_2,y_4}$.  We prove that
$Q_1 \not\preceq Q_2$ has a normal witness:
  \begin{align*}
    P \defeq &  \setof{(u,u,v,v)}{u\in  [n], v \in [n]} \subseteq D^{\set{x_1,x_2,x_1',x_2'}}
  \end{align*}
  $P$ induces the database $\Pi_{Q_1}(P) = ([n], A^D,B^D, C^D)$, where
  $A^D = B^D = C^D = \setof{(u,u)}{u \in [n]}$, and
  $|P| = n^2 > |\hom(Q_2,\Pi_{Q_1}(P))| = n$ when $n > 1$, proving
  $Q_1 \not\preceq Q_2$.

  On the other hand, there is no product relation $P$ that can witness
  $Q_1 \not\preceq Q_2$.  Indeed, if
  $P = S_1 \times S_2 \times S_3 \times S_4$ where $S_1, \ldots, S_4$
  are unary relations, then the associated database $\Pi_{Q_1}(P)$ has
  relations
  $A^D = B^D = C^D \defeq (S_1\times S_2) \cup (S_3 \times S_4)$, and
  therefore
  $|\hom(Q_2, \Pi_{Q_1}(P))| \geq \max(|S_1 \times S_2|^2,|S_3 \times
  S_4|^2) \geq |S_1 \times S_2 \times S_3 \times S_4| = |P|$.
\end{ex}

\subsection{Novel Class of Shannon-Inequalities}

\label{sec:background:it:short}

Our decidability results are based on a new result on
information-theoretic inequalities, proving that certain max-linear
inequalities are essentially Shannon inequalities.  To present it, we
need to review some known facts about entropic functions.  We refer to
Appendix~\ref{sec:background:it:long} and to~\cite{Yeung:2008:ITN:1457455} for additional information.  Recall that the set of entropic functions over $n$
variables is denoted $\Gamma_n^* \subseteq \R^{2^n}$, and that we blur
the distinction between a set $V$ of $n$ variables  and $[n]$.

\begin{table*}
\begin{tabular}{|C{0.47\textwidth}|C{0.47\textwidth}|}
\hline
Database Theory & Information Theory \\
\hline\hline
$P\subseteq D^{ V}$\newline
A {\em relation} $P$ over a set of $n$ variables $ V$,
each of which has domain $D$&
$h\in\Gamma^*_n$\newline
An {\em entropic function} $h:2^{ V}\rightarrow \R_+$ over a set of $n$ variables $ V$.\newline
$h$ is defined by a uniform probability distribution $p$ over $P$.\\
\hline
$P = S_1\times\cdots\times S_n\subseteq D^{ V}$\newline
A {\em product relation} $P$ (Definition~\ref{def:normal:relation})&
$h(X) = \sum_{i\in X} h(i), \quad\text{forall $X \subseteq  V$}$\newline
A {\em modular function} $h\in\calM_n$\\
\hline
The set of product relations&
The set of modular functions $\calM_n$
\\
\hline
$P=P_1\otimes P_2$, \quad where $P_1\subseteq D_1^{ V}, P_2\subseteq D_2^{ V}, P \subseteq (D_1\times D_2)^{ V}$\newline
A {\em domain product} $P$ of two relations $P_1, P_2$, all of which are over the same variable set $ V$ (Definition~\ref{def:domain:product})&
$h = h_1+h_2, \quad\text{where $h, h_1, h_2\in \Gamma^*_n$}$\newline
A {\em sum} $h$ of two entropic functions $h_1, h_2$, all of which are over $n$ variables\\
\hline
$P_W\defeq \set{f_1, f_2}\subseteq D^{ V}$, $\text{for some $W \subseteq  V$, where}$\newline
$\begin{array}{lcl}
    f_1 &\defeq& (1,1,\ldots,1),\\
    f_2 &\defeq& (\underbrace{2,\ldots,2}_{ V - W}, \underbrace{1,\ldots,1}_W),
\end{array}$\newline
Given $W \subseteq  V$, the relation $P_W$ has two tuples $f_1, f_2$ differing only in positions $ V-W$. (See Section~\ref{sec:background:it:short})
&
$  h_W(X) \defeq
\begin{cases}
0 & \mbox{if $X \subseteq W$} \\
1 & \mbox{otherwise}
\end{cases}$\newline
Given $W\subseteq  V$, a {\em step function} $h_W$.
\\
\hline
$P = P_{W_1}\otimes P_{W_2}\otimes\cdots\otimes P_{W_m}$\newline
A {\em normal relation} $P$ over variable set $ V$ is a domain product of $m$ (not necessarily distinct) relations $P_{W_i}$ for $W_i\subseteq  V$\newline
(Another way to phrase Definition~\ref{def:normal:relation}) &
$\displaystyle{h = \sum_{W\subseteq  V} c_{W} h_W, \quad\text{where $c_W\geq 0$}}$\newline
A {\em normal entropy} $h\in\calN_n$ is a non-negative weighted sum of step functions $h_W$\\
\hline
The set of normal relations&
The set of normal functions $\calN_n$ $\equiv$ \newline
the cone closure of step functions\\
\hline
$P_W$, when $| V-W|=1$, becomes a product relation&
$h_W$, when $| V-W|=1$, becomes a modular function\\
\hline
Product relations are a proper subclass of normal relations&
Modular functions are a proper subclass of normal functions\newline
$\calM_n \subsetneq \calN_n$
\\
\hline
A {\em group-characterizable} relation~\cite{DBLP:conf/isit/Chan07} $P \defeq \setof{(aG_1, \ldots, aG_n)}{a \in G}$, where $G$ is a group and $G_1, \ldots, G_n$ are subgroups
&
An entropic function $h\in\Gamma^*_n$
\\
\hline
The set of group-characterizable relations&
$\Gamma^*_n$\\
\hline
--&
$\Gamma_n - \Gamma^*_n$\newline
Polymatroids that are not entropic have no analog in databases
\\
\hline
\end{tabular}
\caption{Translation between the database world and the information theory world.}
\label{tab:DB:IT}
\end{table*}

We begin by discussing closure properties of entropic functions and then introduce certain special classes of entropic functions.
For the benefit of the readers familiar
with database theory, we give in Table~\ref{tab:DB:IT} the mapping
between some of the database concepts used in this paper and their
information-theoretic counterparts.
For our discussion, it is useful to define the notion of the {\em entropy of a relation}.
Given a $V$-relation $P$, its \emph{entropy} is the entropy of the joint distribution on
$V$, uniform on the support of  $P$ (i.e., tuples in $P$).

First, the sum of two entropic functions is also an entropic function, that is, if $h_1, h_2 \in \Gamma_n^*$, then $h_1+h_2 \in \Gamma_n^*$. It follows that if $k$ is a positive integer and $h$ is an entropic function, then the function $h'=kh$ is also entropic. However, if $c>0$ is a positive real number and $h$ is an entropic function, then the function $h'=ch$ need not be entropic, in general.
In contrast, the function $h'=ch$ is entropic, if $c>0$ is a positive real number and $h$ is a \emph{step function}, defined as follows.
Let
$W\subsetneq V$ be a proper subset of $V$. The \emph{step function at $W$}, denoted by $h_W$, is the function
\begin{align*}
  h_W(X) = &
  \begin{cases}
    0 & \mbox{if $X \subseteq W$} \\
    1 & \mbox{otherwise.}
  \end{cases}
\end{align*}
Every step function $h_W$ is entropic.  To see this, consider the relation $P_W =  \set{f_1, f_2} \subseteq \{1,2\}^{V}$, where
 $f_1 = (1,1,\ldots,1)$ and $f_2  = (\underbrace{2,\ldots,2}_{V - W},\underbrace{1,\ldots,1}_W)$, that is, $f_2$ has $1$'s on the positions $W$
and $2$'s on all  other positions.
It is not hard to verify that $h_W$ is the entropy of the relation $P_W$,
and thus the step function $h_W$ is indeed entropic.

As mentioned above, if $c>0$ is a positive real number and $h_W$ is a step function, then the function $h'=ch_W$ is entropic; the proof of this fact is given in Appendix~\ref{sec:background:it:long}.
A {\em normal entropic} function, or simply \emph{normal} function, is a
non-negative linear combination of step functions, i.e., $\sum_{W \subsetneq V} c_W h_W$, for $c_W \geq 0$.
We write $\calN_n$ to denote the set of all normal  functions.
Since, as mentioned earlier, the sum of two entropic functions is entropic, it follows that every normal  function is entropic; thus, we have that  $\calN_n \subseteq \Gamma_n^*$.
In Appendix~\ref{sec:background:it:long},
we show that the  normal functions  are
precisely the entropic functions with a non-negative I-measure
(defined by Yeung~\cite{Yeung:2008:ITN:1457455}).  The term ``normal''
was introduced in~\cite{DBLP:conf/pods/KhamisNS16}.  One can check
that the entropy of every normal relation (Def.~\ref{def:normal:relation}) is a normal function.

\begin{ex} \label{ex:parity} The {\em parity function} is the entropy
  of the following relation with 3 variables:
  $P = \setof{(X,Y,Z)}{X, Y, Z \in \set{0,1}, X\oplus Y \oplus Z =
    0}$ where $\oplus$ is the exclusive OR.  More precisely, the entropy is $h(X)=h(Y)=h(Z)=1$,
  $h(XY)=h(XZ)=h(YZ)=h(XYZ)=2$.  We show in Sec.~\ref{sec:domination}
  that $h$ is not normal.
\end{ex}

A function $h : 2^{V} \rightarrow \R_+$ is called {\em modular} if
it satisfies $h(X \cup Y) + h(X \cap Y) = h(X) + h(Y)$ for all
$X,Y \subseteq V$, and $h(\emptyset)=0$. It is easy to show that $h$ is modular iff
$h(X_\alpha) = \sum_{i\in\alpha} h(X_i)$ for all $\alpha\subseteq V$.
 It is immediate to check that the
entropy of any product relation (Def.~\ref{def:normal:relation}) is
modular.
  We write $\calM_n$ to denote the set of all modular functions.
Every modular function is normal, hence it is also entropic.
To see this, given a modular function $h$, for each $i\leq n$,
define $W_i= V \setminus \{X_i\}$ and let $h_{W_i}$ be the associated step function at
$W_i$. It is now easy to verify that $h= \sum_{i=1}^n h(X_i) \cdot h_{W_i}$, thus $h$ is a normal function. In summary, we have $\calM_n \subseteq \calN_n \subseteq \Gamma_n^*$.

All entropic functions satisfy Shannon's \emph{basic} inequalities, called
{\em monotonicity} and {\em submodularity},
\begin{align}
    h(X) & \leq h(X\cup Y)  & h(X \cup Y) + h(X \cap Y) & \leq h(X)+h(Y) \label{eq:shannon}
\end{align}
for all $X, Y\subseteq V$. (Since $h(\emptyset) = 0$, monotonicity implies {\em
non-negativity} too.)
 A function
$h : 2^{V} \rightarrow \R_+$, $h(\emptyset)=0$, that satisfies
Eq.(\ref{eq:shannon}) is called a {\em polymatroid}, and the set of all
polymatroids is denoted by $\Gamma_n$.  Thus, $\Gamma^*_n \subseteq \Gamma_n$.
Zhang and Yeung \cite{DBLP:journals/tit/ZhangY97} showed that $\Gamma^*_n$ is properly contained in $\Gamma_n$, for every $n\geq 4$.
Any inequality derived by taking a non-negative linear combination of inequalities
\eqref{eq:shannon} is called a {\em Shannon inequality}.
In a follow-up paper \cite{zhang1998characterization}, Zhang and Yeung gave the first example of a
$4$-variable valid information inequality which is non-Shannon.

In summary, we have considered the chain of the following
four sets:
$\calM_n \subsetneq \calN_n \subsetneq \Gamma_n^* \subsetneq
\Gamma_n$.  Except for $\Gamma_n^*$, each of these sets is a polyhedral cone.
Using basic linear programming, one can show that it is decidable whether a max-linear
inequality holds on a polyhedral set. In contrast, (even) the topological closure of $\Gamma_n^*$ is not
polyhedral~\cite{DBLP:conf/isit/Matus07}; in fact,  it is conjectured to not even be
semi-algebraic~\cite{DBLP:journals/ijicot/GomezCM17}, and it is an
open question whether linear inequalities or max-linear inequalities on $\overline
\Gamma^*_n$ are decidable.

For a given vector $(c_X)_{X \subseteq V} \subseteq \R^{2^n}$
where $c_\emptyset=0$, we associate a {\em linear expression} $E$ which is the linear function
$E(h) \defeq \sum_{X \subseteq V} c_X h(X)$. As stated earlier,  a linear inequality
$E(h) \geq 0$ that is valid for all $h \in \Gamma_n^*$ is called an
{\em information inequality}; furthermore, a {\em
max information inequality} is one of the form $\max_\ell E_\ell(h) \geq 0$,
where $\forall \ell, E_\ell$ is a linear expression.

In this paper, for any variable sets $X,Y \subseteq V$, we write $h(XY)$ as a shorthand for
$h(X \cup Y)$, and define the {\em conditional entropy} to be
$h(Y|X) \defeq h(XY) - h(X)$.  Despite its name, the mapping
$Y \mapsto h(Y|X)$ is not always an entropic function
(Appendix~\ref{sec:background:it:long}), but it is always a limit of entropic
functions.
The submodularity law~\eqref{eq:shannon} can be written using conditional entropies as
\begin{align}
    h(XY|X) &\leq h(Y|X\cap Y) \label{eqn:submod}
\end{align}

\bdefn[Simple and unconditioned linear expressions]
We  call the term $h(Y|X)$ {\em simple} if
$|X|\leq 1$. A simple term $h(Y|X)$ is {\em unconditioned} if $X = \emptyset$.
A {\em conditional linear expression} is a linear expression $E$ of the form
$E(h) = \sum_{X\subseteq Y \subseteq V} d_{Y|X} \cdot h(Y|X)$, where $d_{Y|X}$ are
non-negative coefficients.
A conditional linear expression is said to be {\em simple} (respectively, unconditioned)
if $d_{Y|X}>0$ implies $h(Y|X)$ is simple (respectively, unconditioned).
\label{defn:simple-unconditioned}
\edefn

\bdefn[Decidable classes of inequalities]
A class $\calI$ of inequalities over variables
$h : 2^{[n]} \rightarrow \R_+$ is {\em decidable} if the problem of determining whether
a given inequality $I \in \calI$ holds for all $h\in\Gamma_n^*$ is decidable.
\label{defn:decidable-class}
\edefn

\bdefn[Essentially Shannon inequalities]
Let $\calI$ be a class of max-linear inequalities.  We say that
$\calI$ is {\em essentially Shannon} if, for every inequality $I$ in
$\calI$, $I$ holds for every $h \in \Gamma_n^*$ if and only if $I$ holds for every
$h \in \Gamma_n$. Any essentially Shannon class is decidable, because $\Gamma_n$ is
polyhedral.
\label{defn:essentially-shannon}
\edefn


\begin{thm} \label{th:simple} Consider a
  max-linear inequality of the following form, where $q > 0$, and $E_\ell$ are conditional
  linear expressions:
  \begin{align}
      q\cdot h(V) \leq & \max_{\ell\in [k]} E_\ell(h) \label{eq:special:case}
  \end{align}
  \begin{itemize}
      \item[(i)] Suppose that $E_\ell$ is unconditioned, $\forall \ell \in [k]$;  then
    inequality~\eqref{eq:special:case} holds $\forall h \in \calM_n$ if and only if
    it holds $\forall h \in \Gamma_n$.
\item[(ii)] Suppose that $E_\ell$ is simple, $\forall \ell \in [k]$; then, inequality
    \eqref{eq:special:case} holds $\forall h \in \calN_n$ if and only if it holds
    $\forall h \in \Gamma_n$.
  \end{itemize}
  In particular, the class of inequalities (\ref{eq:special:case}),
  where each $E_\ell$ is simple, is essentially Shannon and decidable.
  (Recall Definitions~\ref{defn:simple-unconditioned},~\ref{defn:decidable-class} and~\ref{defn:essentially-shannon}.)
\end{thm}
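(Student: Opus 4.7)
Both parts will be proved by contraposition: given a polymatroid $h \in \Gamma_n$ that strictly violates~\eqref{eq:special:case}, I exhibit a function $h'$ in the relevant smaller cone ($\calM_n$ for part~(i), $\calN_n$ for part~(ii)) that also violates it. The reverse implication of each equivalence is immediate from $\calM_n \subseteq \calN_n \subseteq \Gamma_n$, so the entire work lies in the contrapositive.

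For part~(i), the plan is a \emph{chain-rule modular approximation}. Fix any ordering $X_1, \ldots, X_n$ of $V$ and let $h' \in \calM_n$ be the modular function with $h'(\{X_i\}) \defeq h(X_i \mid X_1, \ldots, X_{i-1})$. Monotonicity gives $h' \geq 0$; the chain rule gives $h'(V) = h(V)$; and for every $Y \subseteq V$, expanding $h(Y)$ by the chain rule along the order induced on $Y$ together with submodularity (``more conditioning cannot increase entropy'') yields $h'(Y) \leq h(Y)$. Because each unconditioned $E_\ell$ has non-negative coefficients, this forces $E_\ell(h') \leq E_\ell(h) < q\, h(V) = q\, h'(V)$ for every $\ell$, so $h' \in \calM_n$ inherits the violation.

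For part~(ii), the plan is to reduce first via LP-duality to a single simple conditional expression, and then via a ``step-function optimality'' statement to the polymatroidal inequality. Since $\calN_n$ is the conic hull of the step functions $\{h_W : W \subsetneq V\}$ and each $E_\ell$ is linear, a normal $h' = \sum_W c_W h_W$ violates~\eqref{eq:special:case} iff $\sum_W c_W (q - E_\ell(h_W)) > 0$ for all $\ell$; by a Farkas/Gordan alternative, the non-existence of such $c \geq 0$ is equivalent to the existence of multipliers $\lambda_\ell \geq 0$ with $\sum_\ell \lambda_\ell = 1$ such that the convex combination $\tilde E \defeq \sum_\ell \lambda_\ell E_\ell$---still a simple conditional expression---satisfies $\tilde E(h_W) \geq q$ on every step function. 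The main obstacle is then to upgrade this step-function certificate to $\tilde E(h) \geq q\, h(V)$ on all of $\Gamma_n$; combined with $\max_\ell E_\ell(h) \geq \tilde E(h)$, this delivers $\Gamma_n$-validity. I anticipate proving the step-function optimality by exploiting the restricted form of a simple conditional $\tilde E$---only singleton conditioning, plus the constraint that the coefficient of each $h(x)$ is bounded above by the sum of coefficients of those $h(Y)$ with $Y \ni x$---to construct an explicit Shannon certificate from the step-function evaluations. This is subtle because non-step matroid rank functions (e.g., $U_{2,3}$, i.e., parity, as in Example~\ref{ex:parity}) are extreme rays of $\Gamma_n$ that are not non-negative combinations of step functions, so the proof must leverage the simple structure of $\tilde E$ rather than a blanket decomposition of the polymatroid.

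The ``in particular'' claim then follows: $\calN_n$- and $\Gamma_n$-validity coincide, which makes the inequality essentially Shannon (since $\Gamma_n^*$ is sandwiched between the two), and a normal counterexample can be detected in exponential time by enumerating the $2^n - 1$ step functions $h_W$, tabulating each $E_\ell(h_W)$, and solving the finite LP (with $k$ inequality constraints and $2^n - 1$ non-negative variables) that expresses feasibility of a normal violator.
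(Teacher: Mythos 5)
Part (i) of your proposal is correct and coincides with the paper's argument: your chain-rule function $h'(X)=\sum_{X_i\in X}h(X_i\mid X_1,\dots,X_{i-1})$ is exactly the modular minorant $h''$ that the paper obtains from the modularization lemma, and the transfer $E_\ell(h')\le E_\ell(h)$, $h'(V)=h(V)$ is the same calculation.

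Part (ii), however, has a genuine gap. Your Farkas/Ville step is sound: non-existence of a normal violator does yield multipliers $\lambda_\ell\ge 0$, $\sum_\ell\lambda_\ell=1$, such that $\tilde E\defeq\sum_\ell\lambda_\ell E_\ell$ (still a simple conditional expression) satisfies $\tilde E(h_W)\ge q$ on every step function, hence $\tilde E(h)\ge q\,h(V)$ on all of $\calN_n$. But the step you defer --- upgrading this $\calN_n$-validity of a single simple conditional inequality to validity on all of $\Gamma_n$ --- is precisely the $k=1$ instance of the statement being proved; your reduction collapses the max but does not touch the actual difficulty. The paper's entire technical investment lies exactly here: it proves (Lemma~\ref{lemma:h:domination}) that every polymatroid $h$ admits a normal minorant $h'\in\calN_n$ with $h'(V)=h(V)$ and $h'(\{i\})=h(\{i\})$ for every singleton, so that for a simple term $h(Y\mid X)$ with $|X|\le 1$ one gets $h'(XY)-h'(X)\le h(XY)-h(X)$ and the violation transfers from $\Gamma_n$ down to $\calN_n$ directly, with no need for the duality detour. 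That lemma is nontrivial (an induction on $n$ over the subset lattice via M\"obius inversion, combined with a separate max-construction for the mutual-information part), and nothing in your sketch substitutes for it: the phrase ``construct an explicit Shannon certificate from the step-function evaluations'' is a restatement of the goal, not an argument. You correctly sense the danger (the parity function is a normal-cone outsider that kills any blanket decomposition), but the observation that the negative weight on each $h(\{x\})$ is dominated by the positive weight on sets containing $x$ is not by itself enough to produce the certificate; some form of the singleton-preserving normal domination lemma is still required. As written, part (ii) is an outline of a plausible strategy whose central lemma is missing.
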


The proof of the theorem follows from  a technical lemma, which is of
independent interest:

\begin{lmm} \label{lemma:h:domination} Let
  $h : 2^{[n]} \rightarrow \R_+$ be any polymatroid.  Then there
  exists a normal polymatroid $h' \in \calN_n$ with the following
  properties:
  \begin{enumerate}
  \item \label{item:domination:1} $h'(X) \leq h(X)$, forall $X \subseteq [n]$,
  \item \label{item:domination:2} $h'([n]) = h([n])$,
  \item \label{item:domination:3} $h'(\set{i}) = h(\set{i})$, forall $i \in [n]$.
  \end{enumerate}
  In addition, there exists a modular function $h''\in \calM_n$ that
  satisfies conditions (\ref{item:domination:1}) and
  (\ref{item:domination:2}).
\end{lmm}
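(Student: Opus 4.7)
The plan is to handle the two assertions separately, starting with the simpler modular one.

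For the modular $h''$, I would invoke the classical greedy algorithm for polymatroids. Fix any permutation $\sigma$ of $[n]$, set $W_0 = \emptyset$ and $W_k = \{\sigma(1), \dots, \sigma(k)\}$ for $k \in [n]$, and let $a_{\sigma(k)} := h(W_k) - h(W_{k-1})$. Monotonicity of $h$ gives $a_{\sigma(k)} \geq 0$. The modular function $h''(X) := \sum_{i \in X} a_i$ then satisfies $h''([n]) = h([n])$ telescopically, and $h''(X) \leq h(X)$ for every $X \subseteq [n]$ by the standard submodular argument showing that a greedy base vector lies in the polymatroid polytope of $h$.

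For the normal $h'$, the modular $h''$ above generally fails to satisfy $h''(\{i\}) = h(\{i\})$, so non-modular combinations of step functions are needed. I plan to recast existence as the feasibility of the linear program in variables $c_W \geq 0$ indexed by $W \subsetneq [n]$:
\begin{align*}
\sum_{W} c_W &= h([n]), \\
\sum_{W \not\ni i} c_W &= h(\{i\}), \quad \forall\, i \in [n], \\
\sum_{W \not\supseteq X} c_W &\leq h(X), \quad \forall\, X \subseteq [n],
\end{align*}
which capture exactly the constraints on $h' = \sum_W c_W h_W$ required by the lemma. Passing to the I-measure via $\mu(a_S) := c_{[n] \setminus S}$, the problem becomes: find a non-negative measure $\mu$ on the non-empty atoms of the Venn diagram of $[n]$ whose total mass is $h([n])$, whose singleton marginals match those of $h$, and which majorizes $h$ on co-atoms in the sense $\sum_{\emptyset \neq S \subseteq W} \mu(a_S) \geq h([n]) - h([n] \setminus W)$ for all $|W| \leq n-2$. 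My plan is to prove feasibility by induction on $n$: apply the hypothesis to the polymatroid $h|_{2^{[n-1]}}$ to obtain such a $\mu$ on non-empty subsets of $[n-1]$, and then extend it to subsets containing $n$ by allocating additional mass on step functions $h_W$ with $n \notin [n]\setminus W$, with coefficients calibrated using conditional entropies $h(\{n\} \mid W)$ so that the singleton constraint at $n$ and the total mass constraint are simultaneously met.

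The main obstacle will be verifying that the inequality constraints $h'(X) \leq h(X)$ survive the inductive lift for every $X$ containing $n$, not merely for the distinguished cases $X = \{n\}$ and $X = [n]$; this is precisely where Shannon submodularity~\eqref{eq:shannon} must do heavy lifting, and it is plausibly where the calibration via $h(\{n\} \mid W)$ becomes mandatory. As a robust alternative I would pursue in parallel a direct polyhedral proof via Farkas' lemma: any certificate of infeasibility of the LP above is a non-negative combination of its constraint row-vectors equaling a strictly negative right-hand side, and unpacking such a certificate should yield a Shannon-type inequality violated by $h$, contradicting the polymatroid hypothesis. Either route ultimately reduces the lemma to a concrete polymatroid inequality provable from the Shannon axioms.
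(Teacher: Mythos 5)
Your treatment of the modular function $h''$ is correct and is exactly the paper's route (the greedy/Lov\'asz modularization $h''(X)=\sum_{i\in X}h(\set{i}\mid[i-1])$). For the normal $h'$, however, what you have is a correct reformulation plus a plan, not a proof, and the plan has a genuine gap at exactly the point you flag. The LP you write down is the right one, but its feasibility \emph{is} the lemma, and neither of your two routes closes it. The Farkas alternative is question-begging: a dual certificate of infeasibility only tells you that the functional $\sum_X\gamma_X h(X)-\alpha h([n])-\sum_i\beta_i h(\set{i})$ is non-negative on every step function $h_W$, i.e.\ on the cone $\calN_n$. Since $\calN_n\subsetneq\Gamma_n$ (the step functions are far from being all the extreme rays of $\Gamma_n$), non-negativity on step functions does not imply that the functional is a valid Shannon inequality; the implication ``non-negative on $\calN_n$ $\Rightarrow$ non-negative on $\Gamma_n$'' for this particular constraint family is precisely what the lemma asserts. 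So ``unpacking the certificate'' does not reduce to a routine Shannon derivation.

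The inductive route is closer to the paper's actual argument but is missing its key ingredient, and the induction is set up on the wrong object. The paper does not induct on the restriction $h|_{2^{[n-1]}}$; it inducts on the \emph{conditional} polymatroid $h(\cdot\mid\set{n})$, using the decomposition $h(X)=I(X;\set{n})+h(X\mid\set{n})$ for $X\not\ni n$. The inductive hypothesis yields a normal minorant of the conditional polymatroid that is exact at the top and at the atoms $\set{i,n}$ of the sublattice of sets containing $n$; the other summand $X\mapsto I(X;\set{n})$ is not a polymatroid at all, and the crucial trick is to replace it by the normal polymatroid $X\mapsto\max_{i\in X}I(\set{i};\set{n})$ (a separate lemma shows $X\mapsto\max_{i\in X}a_i$ is always normal), which minorizes $I(X;\set{n})$ by monotonicity of mutual information and agrees with it on singletons. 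Summing the two pieces, with a mass transfer of $h(\set{n})$ between the tops of the two sublattices verified on the M\"obius inverse, gives $h'$ and makes all three conditions check out. Nothing in your proposal supplies this max-construction or an equivalent device, and your calibration via $h(\set{n}\mid W)$ over the restriction $h|_{2^{[n-1]}}$ runs into exactly the difficulty you anticipate: mass added on step functions $h_W$ with $n\notin W$ perturbs $h'(\set{i})$ for $i<n$ and $h'(X)$ for general $X$, and no argument is given that the inequality constraints survive. As written, the proposal identifies where the difficulty lives but does not resolve it.
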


%
%

This lemma says that every polymatroid $h$ can be decreased to become a normal
polymatroid $h'$, while preserving the values at $[n]$ (all variables)
and at all singletons $\set{i}$.  If we drop the last condition, then
the existence of a modular function $h''$ follows from the
modularization lemma~\cite{DBLP:conf/pods/Khamis0S17}, which is based
on Lovasz's monotonization of submodular functions:
\begin{align*}
  h''(X) \defeq \sum_{i \in X} h(\set{i}|[i-1])
\end{align*}
The proof that one can also satisfy condition
(\ref{item:domination:3}), by relaxing from a modular function to a normal
one, is non-trivial and given in Sec.~\ref{sec:domination}.

\begin{proof}[Proof of Theorem~\ref{th:simple}] We prove the second
  item.  Let $E(h) \defeq \max_\ell E_\ell(h) - q\cdot h(V)$, where
  each $E_\ell$ has the form $\sum_i h(Y_i|X_i)$ with $|X_i| \leq 1$.
  Let $h \in \Gamma_n$, and let $h' \in \calN_n$ be the normal
  polymatroid in Lemma~\ref{lemma:h:domination}. For every $\ell$, we
  have
  $E_\ell(h') = \sum_i h'(X_iY_i) - \sum_i h'(X_i) \leq \sum_i
  h(X_iY_i) - \sum_i h(X_i) = E_\ell(h)$, because $|X_i|\leq 1$ and
  therefore $h'(X_i)=h(X_i)$.  Since $E(h')\geq 0$, we obtain
  $q \cdot h(V) = q \cdot h'(V) \leq \max_\ell E_\ell(h') \leq
  \max_\ell E_\ell(h)$ completing the proof. The first item of the
  theorem is proven similarly, and omitted.
\end{proof}

\begin{ex} \label{eq:e1e2e3} We illustrate Theorem~\ref{th:simple} here with an inequality
  needed later in Ex.~\ref{ex:vee}.  Consider
  $h(X_1X_2X_3) \leq \max(E_1,E_2,E_3)$, where:
  \begin{align*}
    E_1 &= h(X_1X_2)+h(X_2|X_1) \\
    E_2 &= h(X_2X_3)+h(X_3|X_2) \\
    E_3 &= h(X_1X_3)+h(X_1|X_3)
  \end{align*}
  Notice that all three expressions are simple, hence part (ii) of the theorem
  applies.
  In particular according to Theorem~\ref{th:simple}, in order to check
   whether the above inequality holds for all entropic $h\in\Gamma^*_3\supseteq \calN_3$, it is sufficient
   to check that the inequality holds for all polymatroids $h\in\Gamma_3$. (This latter check is much easier
   than the former because $\Gamma_3$ is polyhedral while $\Gamma^*_3$ is not.
   The non-trivial direction of the theorem is proving that if the inequality fails on some
   $h \in \Gamma_3$, then it must fail on some $h' \in \calN_3\subseteq\Gamma^*_3$.)
   In this example, it turns out the above inequality does indeed hold for all $h\in\Gamma_3$.
  In particular, using Shannon's submodularity law~\eqref{eqn:submod}, we infer
  $E_1 = h(X_1X_2)+h(X_2|X_1) \geq h(X_1X_2)+h(X_2|X_1X_3)$
  and, similarly for $E_2$ and $E_3$; therefore,
\begin{eqnarray*}
    \max(E_1,E_2,E_3) &\geq&  \frac 1 3 [E_1+E_2+E_3]\\
   &\geq&  \frac 1 3 \bigl[h(X_1X_2)+h(X_2|X_1X_3)
   + h(X_2X_3)+h(X_3|X_1X_2)
   + h(X_1X_3)+h(X_1|X_2X_3)\bigr]\\
   &=&   h(X_1X_2X_3).
\end{eqnarray*}
\end{ex}


\section{Reducing $\bcqca$ to $\miip$}
\label{sec:connection}

In this section, we prove that $\bcqca \leq_m \miip$, showing half of the
equivalence claimed in Theorem~\ref{th:main:result}.
We start by associating to each query
containment problem a max-information inequality.  We then prove, two
results: the inequality is always a sufficient condition for
containment, and it is also necessary when the containing query is
acyclic.
From now on, we will use only upper case to denote variables, both
random variables and query variables.

Before we begin, we need to introduce some notations.  Fix a relation
$P \subseteq D^{ V}$ and a probability distribution with mass function
$p : P \rightarrow [0,1]$.
If $X \subseteq { V}$ is a set of variables, and
$\varphi : Y \rightarrow  V$ is a function, then recall that
$\Pi_X(P)$ and $\Pi_\varphi(P)$ denote the standard, and the
generalized projections respectively.  We write $\Pi_X(p)$ for the
standard $X$-marginal of $p$, and write $\Pi_\varphi(p)$ for the {\em
  $\varphi$-pullback}\footnote{This is a generalization of the
  pullback in~\cite[Sec.4]{HDE}.}.  The latter is a probability
distribution on $\Pi_\varphi(P)$ defined as follows.  Start from the
standard marginal $\Pi_{\varphi(Y)}(p)$ on $\Pi_{\varphi(Y)}(P)$, then
apply the isomorphism $\Pi_\varphi(P) \rightarrow \Pi_{\varphi(Y)}(P)$
defined as $\Pi_\varphi(f) \mapsto \Pi_{\varphi(Y)}(f)$,
$\forall f \in P$.  Finally, if $E = \sum_i c_i h(Y_i)$ is a linear
expression of entropic terms, where each $Y_i \subseteq Y$, then we
denote by $E \circ \varphi \defeq \sum_i c_i h(\varphi(Y_i))$ the
result of applying the substitution $\varphi$ to each term in $E$.

\begin{ex} \label{ex:projection} Let $ V=\set{X_1,X_2,X_3}$,
  $P \subseteq D^{ V}$,
  $\varphi(Y_1)=X_1, \varphi(Y_2)=\varphi(Y_3)=X_2$.  The generalized
  projection is
  $\Pi_\varphi(P) = \setof{(a,b,b)}{(a,b,c)\in P}\subseteq
  D^{\set{Y_1,Y_2,Y_3}}$.  Its tuples are in 1-1 correspondence with
  the standard projection
  $\Pi_{\varphi(Y)}(P) = \Pi_{X_1X_2}(P) = \setof{(a,b)}{(a,b,c) \in
    P}$.  If $p$ is a distribution on $P$, then the $\varphi$-pullback
  is
  $\Pi_\varphi(p)(Y_1Y_2Y_3=abb) \defeq p(X_1X_2=ab) = \sum_{c}
  p(X_1X_2X_3=abc)$.  Notice that we do not need to define the
  pullback for $(a,b,c)$ where $b\neq c$, because
  $(a,b,c)\not\in \Pi_\varphi(P)$.  Consider now the linear expression
  $E = 3h(Y_1) + 4h(Y_2Y_3) -6h(Y_3)$. Then
  $E\circ \varphi = 3h(X_1)+4h(X_2)-6h(X_2)=3h(X_1)-2h(X_2)$.
\end{ex}

We will introduce now a fundamental expression, $E_T$, that connects
query containment to information inequalities; we discuss its history
in Sec.~\ref{sec:discussion}.  Fix a tree decomposition $(T,\chi)$ of
some query $Q$, and recall that $T$ may be a forest.  Choose a root
node in each connected component, thus giving an orientation of $T$'s
edges, where each node $t$ has a unique $\parent(t)$.  We associate to
$T$ the following linear expression of entropic terms:
\begin{align}
    E_{(T,\chi)}(h) & \defeq \sum_{t \in \nodes(T)} h(\chi(t) | \chi(t) \cap \chi(\parent(t)))
\label{eq:et}
\end{align}
where $\chi(\parent(t))=\emptyset$ when $t$ is a root node.  We
abbreviate $E_{(T,\chi)}$ with $E_T$ when $\chi$ is clear from the
context. Expression~\eqref{eq:et} is independent of the choice of the root
nodes, because one can check that
$E_T = \sum_{t \in \nodes(T)} h(\chi(t)) - \sum_{(t_1,t_2) \in
  \edges(T)} h(\chi(t_1)\cap \chi(t_2))$.

\subsection{A Sufficient Condition}

\label{subsec:sufficient}


Henceforth, let $\td(Q)$ denote the set of all tree decompositions of a given query $Q$.

\begin{thm} \label{th:sufficient} Let
  $Q_1, Q_2$ be two conjunctive queries, $n=|\vars(Q_1)|$.
  If the following $\mii$ inequality holds $\forall h \in \Gamma_n^*$:
   \begin{align}
      h(\vars(Q_1)) &\leq  \max_{(T,\chi) \in \td(Q_2)} \max_{\varphi \in \hom(Q_2,Q_1)} (E_T\circ\varphi)(h)  \label{eqn:another:sufficient}
   \end{align}
   then $Q_1 \preceq Q_2$.
\end{thm}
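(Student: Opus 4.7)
The plan is to fix an arbitrary database $\calD$ and show $|\hom(Q_1,\calD)| \leq |\hom(Q_2,\calD)|$ by exhibiting a distribution on $\hom(Q_2,\calD)$ whose entropy is at least $\log|\hom(Q_1,\calD)|$. Let $P = \hom(Q_1,\calD) \subseteq D^{\vars(Q_1)}$, let $p$ be the uniform distribution on $P$, and let $h = h_p$ be its entropic function, so that $h(\vars(Q_1)) = \log|P|$. Invoking the hypothesized max-information inequality at $h$, there exist a tree decomposition $(T,\chi) \in \td(Q_2)$ and a homomorphism $\varphi \in \hom(Q_2,Q_1)$ with $h(\vars(Q_1)) \leq (E_T \circ \varphi)(h)$.

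The crux is to assemble a tree-structured distribution $q'$ on $D^{\vars(Q_2)}$ whose marginal on each bag $\chi(t)$ equals the $\varphi|_{\chi(t)}$-pullback of $p$ and whose only statistical dependencies are those prescribed by $T$, i.e., a standard junction-tree / Markov random field assembly. Restricting either neighboring bag's pullback marginal to the separator $\chi(t) \cap \chi(\parent(t))$ produces the same distribution, namely the $\varphi$-pullback of $\Pi_{\varphi(\chi(t)\cap\chi(\parent(t)))}(p)$, so the local marginals are globally consistent; combined with the running-intersection property of $(T,\chi)$, this guarantees that $q'$ exists as a bona fide joint distribution.

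Next I would verify two properties of $q'$. \emph{Support}: for any atom $A = R(y_1,\ldots,y_k)$ of $Q_2$, coverage supplies a bag $t$ with $\vars(A) \subseteq \chi(t)$; since $\varphi \in \hom(Q_2,Q_1)$, the atom $R(\varphi(y_1),\ldots,\varphi(y_k))$ is in $Q_1$, so $(f(\varphi(y_1)),\ldots,f(\varphi(y_k))) \in R^\calD$ for every $f \in P$. Hence every $g$ in the support of $q'$ satisfies every atom of $Q_2$, proving $\text{supp}(q') \subseteq \hom(Q_2,\calD)$. \emph{Entropy identity}: the tree-Markov structure forces the chain-rule factorization $H(q') = \sum_t h_{q'}(\chi(t)) - \sum_{(t,t') \in \edges(T)} h_{q'}(\chi(t)\cap\chi(t')) = E_T(h_{q'})$, and the pullback identity gives $h_{q'}(Z) = h(\varphi(Z))$ for every set $Z$ contained in a single bag, so $E_T(h_{q'}) = (E_T \circ \varphi)(h)$.

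Chaining these observations yields $\log|\hom(Q_1,\calD)| = h(\vars(Q_1)) \leq (E_T\circ\varphi)(h) = E_T(h_{q'}) = H(q') \leq \log|\text{supp}(q')| \leq \log|\hom(Q_2,\calD)|$, which is exactly the desired containment. The main obstacle is the junction-tree assembly of $q'$ together with verifying that its joint entropy telescopes exactly to $E_T(h_{q'})$; a subtle point is that $\varphi$ need not be injective, so distinct variables of a bag may be identified under $\varphi$, and one must check that the pullback construction, the separator-consistency check, and the Markov chain-rule expansion all interact correctly with those coordinate identifications.
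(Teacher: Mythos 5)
Your proposal is correct and follows essentially the same route as the paper's proof in Appendix~\ref{appendix:sufficient}: take the uniform distribution on $\hom(Q_1,\calD)$, invoke the inequality to select a tree decomposition and homomorphism, stitch the pullback marginals into a junction-tree (Markov) distribution supported inside $\hom(Q_2,\calD)$, and chain $\log|\hom(Q_1,\calD)| = h(\vars(Q_1)) \leq (E_T\circ\varphi)(h) = E_T(h') = h'(\vars(Q_2)) \leq \log|\hom(Q_2,\calD)|$. The subtleties you flag (non-injective $\varphi$, separator consistency of the pullbacks, and the telescoping of the joint entropy to $E_T$) are exactly the points the paper isolates in its Lemmas~\ref{lemma:sufficient:1}--\ref{lemma:sufficient:3} and its inductive construction of $p'$.
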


The theorem is inspired by, and is similar to Theorem 3.1 by Kopparty and
Rossman~\cite{HDE}, with three differences. First, the result
in~\cite{HDE} applies only to graphs (i.e., databases with a single
binary relation symbol), while our result applies to arbitrary
relational schemas.  Second, we do not restrict $Q_2$ to be chordal.
Finally, \cite{HDE} restrict $h$ to entropies satisfying the
independence constraints defined by $Q_1$; while this restriction is
not needed to prove Theorem~\ref{th:sufficient}, it was needed
in~\cite{HDE} to prove necessity in a special case (Theorem 3.3
in~\cite{HDE}).  We will prove necessity in Theorem~\ref{th:necessary} in the next section without
needing this restriction.
Our proof of Theorem~\ref{th:sufficient} in this section is an
extension of the proof in~\cite{HDE}.
The proofs of both Theorem~\ref{th:sufficient} and \ref{th:necessary}
 use the following notation.  Give a node $t\in \nodes(T)$ of
tree decomposition of $Q$, we denote by $Q_t$ the
``subquery at $t$'', consisting of all atoms $A \in \atoms(Q)$ s.t.
$\vars(A) \subseteq \chi(t)$.  We can assume
w.l.o.g. (Appendix~\ref{appendix:problem}) that $\vars(Q_t) = \chi(t)$.
Before we present our proof of Theorem~\ref{th:sufficient},
we give an example, also from~\cite{HDE}, that illustrates the main idea of the proof.

\begin{ex} \label{ex:vee} This example is attributed to Eric Vee
  in~\cite{HDE}:
  \begin{eqnarray*}
    Q_1 &=& R(X_1,X_2)\wedge R(X_2,X_3) \wedge R(X_3,X_1),\\
    Q_2 &=& R(Y_1,Y_2)\wedge R(Y_1,Y_3).
  \end{eqnarray*}
We show that $Q_1 \preceq Q_2$.  Query $Q_2$ is acyclic, and its tree
decomposition $T$ is $\set{Y_1,Y_2} - \set{Y_1,Y_3}$, therefore:
\begin{align*}
    E_T &= h(Y_1Y_2) + h(Y_3|Y_1) = h(Y_1Y_2)+h(Y_1Y_3)-h(Y_1)
\end{align*}
There are three homomorphisms $\varphi : Q_2 \rightarrow Q_1$, hence
inequality (\ref{eqn:another:sufficient}) becomes:
\begin{align}
  h(X_1X_2X_3) & \leq \max(E_1,E_2,E_3) \label{ineq:vee}
\end{align}
where $E_1,E_2,E_3$ are the linear expressions in
Example~\ref{eq:e1e2e3}, where we have shown that the inequality holds
forall entropic $h$.  Theorem~\ref{th:sufficient} implies
$Q_1 \preceq Q_2$.  Here we prove the theorem
on this particular example.  Consider any database $\calD$, let
$P_1=\hom(Q_1,\calD)$, $p_1$ the uniform probability space on $P_1$,
and $h_1$ its entropy.  Since $h_1$ satisfies inequality
(\ref{ineq:vee}), one of the three terms on the right is larger than
the left, assume w.l.o.g. that this term corresponds to the
homomorphism $\varphi(Y_1)=X_1$,
$\varphi(Y_2)=\varphi(X_3)=X_2$. Thus,
$h_1(X_1X_2X_3) \leq h_1(X_1X_2) + h_1(X_2|X_1)$.  Let
$P_2 = \hom(Q_2,\calD)$. This is a relation with attributes
$Y_1,Y_2,Y_3$.  We define a probability distribution $p_2$ on $P_2$ as
follows: the marginal $p_2(Y_1,Y_2)$ is the same as $p_1(X_1,X_2)$,
and the conditional $p_2(Y_3|Y_1)$ is the same as $p_1(X_2|X_1)$.  In
particular, its entropy $h_2$ satisfies
$\log |P_2| \geq h_2(Y_1Y_2Y_3) = h_2(Y_1Y_2) + h_2(Y_3|Y_1) =
h_1(X_1X_2)+h_1(X_2|X_1) \geq h_1(X_1,X_2,X_3) = \log |P_1|$ proving
$Q_1 \preceq Q_2$.
\end{ex}

Finally, we give our general proof of Theorem~\ref{th:sufficient}.
To prove the theorem we need three lemmas.  The first lemma is
folklore, and represents the main property of tree decomposition used
for query evaluation.  If $f \in D^X$, $g \in D^Y$ agree on
$X \cap Y$, then $f \Join g$ is the unique tuple $\in D^{X \cup Y}$
that extends both $f$ and $g$.  If
$P_1 \subseteq D^X, P_2 \subseteq D^Y$, then
$P_1 \Join P_2 = \setof{f \Join g}{f \in P_1, g \in P_2}$.

\begin{lmm} \label{lemma:sufficient:1} Let $(T,\chi)$ be a tree
  decomposition for $Q$ and recall that
  $Q \equiv \bigwedge_{t \in\nodes(T)} Q_t$ where $Q_t$ is a
  conjunction of atoms $A$ s.t.  $\vars(A) \subseteq \chi(t)$.  Then,
  for every $\calD$,
  $\hom(Q,\calD) = \Join_{t \in \nodes(t)} \hom(Q_t,\calD)$.
\end{lmm}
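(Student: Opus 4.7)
The plan is to establish the set equality by proving both inclusions. For the forward inclusion $\hom(Q,\calD) \subseteq \Join_{t \in \nodes(T)} \hom(Q_t,\calD)$, I will take any $f \in \hom(Q,\calD)$ and show that its restriction $f_t \defeq f|_{\chi(t)}$ lies in $\hom(Q_t,\calD)$ for each $t$. This uses the coverage property: every atom $A \in \atoms(Q_t)$ satisfies $\vars(A) \subseteq \chi(t)$, so $f_t$ is defined on all of $\vars(A)$ and agrees with $f$ there, hence maps $A$ into the correct relation. Since the $f_t$'s are restrictions of the same function $f$, they trivially agree on pairwise intersections $\chi(t_1) \cap \chi(t_2)$ and join back to $f$ itself, certifying membership in the natural join.

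For the reverse inclusion, given a joinable family $(f_t)_{t \in \nodes(T)}$ with each $f_t \in \hom(Q_t,\calD)$, I will stitch them into a global map $f : \vars(Q) \to D$ by defining $f(x) \defeq f_t(x)$ for any $t \in \nodes(T)$ with $x \in \chi(t)$. The crux is to verify that this $f$ is well-defined, and here is where the running intersection property of the tree decomposition is indispensable: for any two nodes $t_1, t_2$ containing $x$, the set $\setof{t \in \nodes(T)}{x \in \chi(t)}$ induces a connected sub-tree of $T$, and propagating the pairwise-agreement condition $f_{t_1'}|_{\chi(t_1')\cap\chi(t_2')} = f_{t_2'}|_{\chi(t_1')\cap\chi(t_2')}$ along successive edges of this sub-tree yields $f_{t_1}(x) = f_{t_2}(x)$. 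Once $f$ is in hand, the coverage property immediately supplies, for each atom $A \in \atoms(Q)$, a bag $t$ with $A \in \atoms(Q_t)$, and since $f$ agrees with $f_t$ on $\chi(t) \supseteq \vars(A)$, the atom is mapped into the correct relation.

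I expect the well-definedness step to be the only one requiring care, since it is the unique place where the tree-decomposition axioms beyond coverage — namely running intersection — enter in a non-trivial way. Everything else amounts to routine bookkeeping about the natural join and the correspondence between homomorphisms and tuples of values satisfying atomic constraints.
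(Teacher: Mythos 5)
Your proof is correct, but the paper does not actually prove this lemma at all: it is explicitly dismissed as ``folklore,'' so there is no official argument to compare against. Your two-inclusion argument is the standard one and is sound. The one point worth flagging is your claim that the running intersection property is ``indispensable'' for well-definedness in the reverse inclusion: it is not. A tuple of the natural join $\Join_{t}\hom(Q_t,\calD)$ is, by definition, a family $(f_t)_t$ that agrees pairwise on \emph{all} common attributes $\chi(t_1)\cap\chi(t_2)$ (equivalently, a single tuple over $\bigcup_t\chi(t)$ whose restriction to each $\chi(t)$ lies in $\hom(Q_t,\calD)$), so for any two bags containing $x$ you get $f_{t_1}(x)=f_{t_2}(x)$ directly, with no need to propagate along a connected subtree. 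Your propagation argument along the subtree induced by $x$ is still valid (it just uses the weaker edge-only agreement), but it obscures the fact that the lemma holds for \emph{any} family of bags covering the atoms of $Q$, with no tree structure whatsoever --- a fact the paper itself relies on later, in the proof of Theorem~4.3, where it invokes the $\subseteq$ direction for ``any join decomposition (not necessarily acyclic) \ldots as long as the fragments cover all attributes.'' You should also note, as the paper does via Fact~A.2, that one tacitly assumes $\vars(Q_t)=\chi(t)$ so that $\hom(Q_t,\calD)$ is a relation over exactly the attributes $\chi(t)$; otherwise the join would range over $\bigcup_t\vars(Q_t)$ rather than $\vars(Q)$.
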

\begin{lmm}
  \label{lemma:sufficient:2} Fix a homomorphism
  $\varphi : Q_2 \rightarrow Q_1$, let $(T,\chi)$ be a tree
  decomposition of $Q_2$, $\calD$ be a database instance, and
  $P = \hom(Q_1,\calD)$.  Then, for every node $t \in \nodes(T)$,
  denoting $P'_t \defeq \Pi_{\varphi|_{\chi(t)}}(P)$ we have:
  \begin{align}
    P'_t \subseteq \hom(Q_t, \calD) \label{eq:lemma:sufficient:2}
  \end{align}
\end{lmm}

\begin{proof}
  Every tuple in $\Pi_{\varphi|_{\chi(t)}}(P)$ is the composition
  $f \circ \varphi|_{\chi(t)}$ for some $f \in P$.  The lemma follows
  from the fact that both $\varphi|_{\chi(t)}: Q_t \rightarrow Q_1$
  and $f : Q_1 \rightarrow \calD$ are homomorphisms.
\end{proof}

\begin{lmm} \label{lemma:sufficient:3}
    Let
  $p : P (\subseteq D^{ V}) \rightarrow [0,1]$ be a probability
  distribution, and $h:2^{ V} \rightarrow \R_+$ be its entropy.
    \bi
    \item[(1)]  If
  $\varphi:Y \rightarrow  V$ and $Z \subseteq Y$, then the
  $\varphi|_Z$-pullback of $p$, $\Pi_{\varphi|_Z}(p)$, is equal to the
  $Z$-marginal of $\Pi_\varphi(p)$.  In particular, if
  $h' : 2^Y \rightarrow \R_+$ is the entropy of $\Pi_\varphi(p)$,
  then, $\forall Z \subseteq Y$, $h'(Z) = h(\varphi(Z))$.
  \item[(2)] If
  $\varphi:  V' \rightarrow  V$ and $Y_1, Y_2 \subseteq  V'$,
  then the pull-back distributions $\Pi_{\varphi|_{Y_1}}(p)$ and
  $\Pi_{\varphi|_{Y_2}}(p)$ agree on the common variables
  $Y_1 \cap Y_2$.
  \ei
\end{lmm}

\begin{proof}
  (1) The $\varphi$-pullback $\Pi_\varphi(p)$ is defined to be the
  same as the $\varphi(Y)$-marginal of $p$.  Therefore its
  $Z$-marginal is the $\varphi(Z)$-marginal of $p$.  By definition,
  $\Pi_{\varphi|_Z}(p)$ is also the $\varphi(Z)$-marginal of $p$,
  hence they are equal.  Formally, given $f \in P$:
  \begin{eqnarray*}
    \Pi_\varphi(p)(Z = \Pi_Z(\Pi_\varphi(f))) &=&  \sum_{f': \Pi_Z(\Pi_\varphi(f'))=\Pi_Z(\Pi_\varphi(f))} p(f')\\
  &=& \sum_{f': \Pi_{\varphi(Z)}(f')=\Pi_{\varphi(Z)}(f)} p(f')\\
  &=& \Pi_{\varphi|_Z}(p)(Z = \Pi_{\varphi|_Z}(f))
  \end{eqnarray*}
  because $\Pi_Z \circ \Pi_\varphi = \Pi_{\varphi|_Z}$.  This
  discussion immediately implies that $h'(Z)=h(\varphi(Z))$, forall
  $Z$.

  (2) Let $Z = Y_1 \cap Y_2$.  By claim (1), the $Z$-marginal of
  $\Pi_{\varphi|_{Y_1}}(p)$ is $\Pi_{\varphi|_Z}(p)$ and similarly for
  the $Z$-marginal of $\Pi_{\varphi|_{Y_2}}(p)$, hence they are equal.
\end{proof}

\begin{proof}[Proof of Theorem~\ref{th:sufficient}] Let $\calD$ be any
  database with domain $D$, and let $P = \hom(Q_1,\calD)$.  Consider
  the uniform probability distribution $p : P \rightarrow [0,1]$,
  defined as $p(f) = 1/|P|$ for all tuples $f \in P$, and let
  $h$ be its entropy.  We have $h = \log|P|$ because $p$ is uniform.
  By assumption of the theorem, there exists a homomorphism
  $\varphi: Q_2 \rightarrow Q_1$ and a tree decomposition $(T, \chi)$
  of $Q_2$ such that:
  \begin{align}
  \log|P| = h(\vars(Q_1)) \leq & (E_T \circ \varphi)(h) \label{eq:sufficient:proof:1}
  \end{align}
%
%
  For each $t \in \nodes(T)$, consider the projections of $P$ and $p$
  on $\chi(t)$:
  \begin{eqnarray*}
    P'_t &\defeq& \Pi_{\varphi|_{\chi(t)}}(P),\\
    p'_t &\defeq& \Pi_{\varphi|_{\chi(t)}}(p).
  \end{eqnarray*}
  Lemma~\ref{lemma:sufficient:1} and Lemma~\ref{lemma:sufficient:2}
  imply:
  \begin{eqnarray}
    P' &\defeq&  \Join_{t \in \nodes(T)} P'_t\nonumber\\
    &\subseteq&   \Join_{t \in \nodes(T)} \hom(Q_t,\calD) \nonumber\\
    &=& \ \ \hom(Q_2, \calD) \label{eq:sufficient:proof:2}
  \end{eqnarray}
  We will construct a probability distribution
  $p' : P' \rightarrow [0,1]$, with entropy function
  $h' : 2^{\vars(Q_2)} \rightarrow \R_+$, such that the following
  hold:
  \begin{align}
    h'(\vars(Q_2)) = & E_T(h') \label{eq:sufficient:claim:1} \\
    E_T(h') = & (E_T \circ \varphi)(h) \label{eq:sufficient:claim:2}
  \end{align}
  Assuming the existence of a distribution $p'$ whose entropy function $h'$ satisfies~\eqref{eq:sufficient:claim:1} and~\eqref{eq:sufficient:claim:2},
  the proof of the theorem follows from:
  \begin{align*}
\log |\hom(Q_1,\calD)|&= \log |P|\\
&=  h(\vars(Q_1)) \leq (E_T \circ \varphi)(h) & \mbox{(by Eq.(\ref{eq:sufficient:proof:1}))}\\
         &= E_T(h') & \mbox{(by Eq.(\ref{eq:sufficient:claim:2}))} \\
         &= h'(\vars(Q_2)) & \mbox{(by Eq.(\ref{eq:sufficient:claim:1}))} \\
         &\leq \log |P'| & \mbox{(Since $P'$ is the support of $h'$)} \\
         &\leq \log |\hom(Q_2,\calD)| &\mbox{(By Eq.(\ref{eq:sufficient:proof:2}))}
  \end{align*}
It remains to show how to construct this distribution $p'$
that satisfies~\eqref{eq:sufficient:claim:1} and~\eqref{eq:sufficient:claim:2}.
  We will construct $p'$ by stitching together the pull-back
  distributions $p'_t$, for $t \in \nodes(T)$; this is possible
  because, by Lemma~\ref{lemma:sufficient:3} (2), any two induced
  probabilities $p'_{t_1}, p'_{t_2}$ agree on the common variables
  $\chi(t_1)\cap \chi(t_2)$.

  Formally, we start by listing $\nodes(T)$ in some order,
  $t_1, t_2, \ldots, t_m$, such that each child is listed after its
  parent. Let $P'_i \defeq \Join_{j=1,i} P'_{t_j}$, let $T_i$ be the
  subtree induced by the nodes $\set{t_1, \ldots, t_i}$, and
  $\vars(T_i) = \bigcup_{j=1,i} \chi(t_i)$ its variables.  We
  construct by induction on $i$ a probability distribution
  $p_i' : P_i' \rightarrow [0,1]$ such it agrees with
  $p_{t_1}', \ldots, p_{t_i}'$ on $\chi(t_1), \ldots, \chi(t_i)$
  respectively, and its entropy function
  $h_i' : 2^{\vars(T_i)} \rightarrow \R_+$ satisfies:
  \begin{align}
    h'_i(\vars(T_i)) = & E_{T_i}(h'_i) \label{eq:sufficient:proof:3} \\
    E_{T_i}(h'_i) = & (E_{T_i} \circ \varphi)(h) \label{eq:sufficient:proof:4}
  \end{align}
  To define $p_i'$, we need to extend $p_{i-1}'$ to the variables
  $\vars(T_i) - \vars(T_{i-1}) = \chi(t_i) - \chi(\parent(t_i))$.  We
  define $p_i'$ to satisfy the following: (1) $p_i'$ agrees with
  $p_{t_i}'$ on $\chi(t_i)$, (2) $p_i'$ agrees with $p_{i-1}'$ on the
  $\vars(T_{i-1})$, and (3) $\chi(t_i)$ is independent of
  $\vars(T_{i-1})$ given $\chi(t_i) \cap \chi(\parent(t_i))$.  Notice
  that (1) and (2) are not conflicting because $p_{t_i}'$ agrees with
  any other $p_j'$ on their common variables.  Formally, we define
  $p_i'$ through a sequence of three equations:
  \begin{eqnarray}
    p_i'(\chi(t_i) | \chi(t_i) \cap \chi(\parent(t_i))) &\defeq& p_{t_i}'(\chi(t_i) | \chi(t_i) \cap \chi(\parent(t_i))) \label{eq:sufficient:proof:5}\\
    p_i'(\chi(t_i) | \vars(T_{i-1})) &\defeq& p_i'(\chi(t_i) | \chi(t_i) \cap \chi(\parent(t_i)))  \label{eq:sufficient:proof:6}\\
    p_i'(\vars(T_i)) &\defeq& p_i'(\chi(t_i) | \vars(T_{i-1})) p_{i-1}'(\vars(T_{i-1})) \label{eq:sufficient:proof:7}
  \end{eqnarray}
  We check Eq.(\ref{eq:sufficient:proof:3}):
  \begin{align*}
    h'_i(\vars(T_i))
    &= h_i'(\chi(t_i) | \vars(T_{i-1})) + h_{i-1}'(\vars(T_{i-1})) & \mbox{(by Eq.(\ref{eq:sufficient:proof:7}))} \\
   &= h_i'(\chi(t_i) | \vars(T_{i-1})) + E_{T_{i-1}}(h'_{i-1}) & \mbox{(Induction)}\\
   &= h_i'(\chi(t_i) | \vars(T_{i-1})) + E_{T_{i-1}}(h'_i)  & \mbox{($h'_i$ is identical to $h'_{i-1}$ on $\vars(T_{i-1})$)}\\
   &= h_i'(\chi(t_i) | \chi(t_i)\cap \chi(\parent(t_i))) + E_{T_{i-1}}(h'_i)  & \mbox{(by Eq.(\ref{eq:sufficient:proof:6}))}\\
   &= E_{T_i}(h') &\mbox{(Definition of $E_T$)}
  \end{align*}
  We check Eq.(\ref{eq:sufficient:proof:4}).
  \begin{align*}
      E_{T_i}(h'_i) &= h_i'(\chi(t_i) | \chi(t_i) \cap \chi(\parent(t_i))) + E_{T_{i-1}}(h'_i) & \mbox{(Definition of $E_T$)}\\
  &= h_i'(\chi(t_i) | \chi(t_i) \cap \chi(\parent(t_i))) + (E_{T_{i-1}}\circ \varphi)(h) &  \mbox{(Induction)}\\
  &= h_{t_i}'(\chi(t_i) | \chi(t_i) \cap \chi(\parent(t_i))) + (E_{T_{i-1}}\circ \varphi)(h) &  \mbox{(by Eq.(\ref{eq:sufficient:proof:5}))}\\
  &= h(\varphi(\chi(t_i)) | \varphi(\chi(t_i) \cap \chi(\parent(t_i)))) + (E_{T_{i-1}}\circ \varphi)(h) &  \mbox{(Lemma~\ref{lemma:sufficient:3} (1))}\\
  &= (E_{T_i} \circ \varphi)(h) & \mbox{\quad\quad(Definition of $E_T$)}
  \end{align*}
  This completes the inductive proof.

  By setting $i=m$ (the number of nodes in $T$) in
  Eq.(\ref{eq:sufficient:proof:3}) and (\ref{eq:sufficient:proof:4}),
  we derive Eq.(\ref{eq:sufficient:claim:1}) and
  (\ref{eq:sufficient:claim:2}).

\end{proof}

\subsection{A Necessary Condition}

\label{subsec:necessary}

Next we prove that inequality~\eqref{eqn:another:sufficient}
is also a necessary condition for
containment $Q_1 \preceq Q_2$, when $Q_2$ is acyclic.
Our result answers positively the
conjecture by Kopparty and Rossman~\cite[Sect.3, Discussion 1]{HDE},
in the case when $Q_2$ is acyclic.  To prove the theorem, we consider
some entropy $h$ on which Eq.(\ref{eqn:another:sufficient}) fails, and
prove that the support of its probability distribution, $P$, is a
witness for $Q_1 \not\preceq Q_2$.  The key idea is to use Chan-Yeung's
group-characterizable entropic
functions~\cite{DBLP:conf/isit/Chan07,DBLP:journals/tit/ChanY02}, and show that $P$
can be chosen to be ``totally uniform''.  This allows us to relate
$|\hom(Q_2,\calD)|$ to the right-hand-side of Eq.(\ref{eqn:another:sufficient}).
More precisely, we prove the following.

\begin{thm} \label{th:necessary} Let $Q_2$ be acyclic.
If there exists an entropic function $h$ such that~\eqref{eqn:another:sufficient} does not
hold, namely,
  \begin{align}
      h(\vars(Q_1)) &>  \max_{(T,\chi) \in \td(Q_2)} \max_{\varphi \in \hom(Q_2,Q_1)} (E_T\circ\varphi)(h)
      \label{eqn:necessary}
  \end{align}
  then there exists a database $\calD$ such that
  $|\hom(Q_1,\calD)| > |\hom(Q_2,\calD)|$.
\end{thm}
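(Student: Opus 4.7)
The plan is, given an entropic function $h$ that violates~\eqref{eqn:necessary}, to construct a relation $P \subseteq D^{\vars(Q_1)}$ whose induced database $\calD \defeq \Pi_{Q_1}(P)$ serves as the desired witness. The construction relies on Chan--Yeung's group-characterization of entropic functions~\cite{DBLP:conf/isit/Chan07,DBLP:journals/tit/ChanY02}: up to limits and positive-integer scaling, every $h \in \Gamma_n^*$ is realized by a finite group $G$ with subgroups $(G_i)_{i \in \vars(Q_1)}$ through the ``totally uniform'' relation $P = \setof{(aG_1,\ldots,aG_n)}{a \in G}$, which has the crucial property that, for any $X \subseteq Y \subseteq \vars(Q_1)$, every tuple of $\Pi_X(P)$ extends to exactly $2^{h(Y|X)}$ tuples of $\Pi_Y(P)$.

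First, I would replace $h$ by a group-characterizable function that preserves the strict inequality in~\eqref{eqn:necessary}; continuity of $h \mapsto h(\vars(Q_1)) - \max_{T,\varphi}(E_T\circ\varphi)(h)$ together with density of (scaled) group-characterizable functions in $\Gamma_n^*$ makes this possible. I would further tensor with $N$ i.i.d.\ copies (replacing $h$ by $N\cdot h$) to create enough additive slack to absorb a combinatorial constant appearing below. Let $P$ be the resulting group-characterizable relation and $\calD = \Pi_{Q_1}(P)$. Since $P \subseteq \hom(Q_1,\calD)$ by the Witness Fact, we immediately obtain $|\hom(Q_1,\calD)| \geq |P| = 2^{h(\vars(Q_1))}$.

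The crux is to upper-bound $|\hom(Q_2,\calD)|$ by $|\hom(Q_2,Q_1)| \cdot \max_{\varphi \in \hom(Q_2,Q_1)} 2^{(E_T\circ\varphi)(h)}$, for a fixed tree decomposition $(T,\chi)$ of $Q_2$ in which every bag equals $\vars(A')$ for some atom $A'$ of $Q_2$ (such a $T$ exists because $Q_2$ is acyclic). For any $g \in \hom(Q_2,\calD)$ and any $A' \in \atoms(Q_2)$, $g(A')$ lies in $\Pi_{\vars(A)}(P)$ for some atom $A$ of $Q_1$ with $\rel(A) = \rel(A')$; using the repeated-variable bookkeeping of~\eqref{eqn:vars:as:functions} and the acyclic tree structure of $Q_2$, this atom-level choice lifts to a variable-level homomorphism $\varphi \in \hom(Q_2, Q_1)$ through which $g$ factors. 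Fixing $\varphi$ and rooting $T$, I would enumerate $g$ top-down on $T$: at node $t$, given $g$ on the separator $\chi(t)\cap \chi(\parent(t))$, the number of extensions to $\chi(t)$ equals the number of extensions in $\Pi_{\varphi(\chi(t))}(P)$ of the image tuple, which by total uniformity equals $2^{h(\varphi(\chi(t)) \mid \varphi(\chi(t)\cap \chi(\parent(t))))}$. Multiplying over $t$ gives the per-$\varphi$ bound $2^{(E_T\circ\varphi)(h)}$, and summing over $\varphi$ absorbs the combinatorial factor $|\hom(Q_2,Q_1)|$.

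Combining the bounds with the slack from Step~1 yields $|\hom(Q_1,\calD)| > |\hom(Q_2,\calD)|$, establishing $Q_1 \not\preceq Q_2$. The main obstacle is the atom-to-variable lifting: when a single $Q_2$-variable appears in several atoms that get mapped to different $Q_1$-atoms, the chosen atom-level assignment may be inconsistent as a variable-level map, and one must show that the contribution of such inconsistent assignments is still dominated by $|\hom(Q_2,Q_1)|$ times the max over proper variable-level $\varphi$, possibly by identifying $Q_2$-variables forced equal by the atom-level choices and observing that acyclicity is preserved under this identification. This is also precisely where acyclicity of $Q_2$ is essential: it gives both a tree decomposition with atom-sized bags and the clean factorization of the top-down enumeration, neither of which survives on cyclic $Q_2$.
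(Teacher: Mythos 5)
Your overall strategy is the same as the paper's: pass to a (scaled) group-characterizable, totally uniform relation $P$ via Chan--Yeung density, take $\calD=\Pi_{Q_1}(P)$, partition $\hom(Q_2,\calD)$ according to an underlying variable-level homomorphism $\varphi\in\hom(Q_2,Q_1)$, bound each class by a top-down degree count along an atom-bag tree decomposition to get $2^{(E_T\circ\varphi)(h)}$, and absorb the factor $|\hom(Q_2,Q_1)|$ into the slack. However, the step you yourself flag as ``the main obstacle'' --- lifting the atom-level choices to a well-defined $\varphi\in\hom(Q_2,Q_1)$ --- is a genuine gap, and your proposed workaround (identifying $Q_2$-variables forced equal by the atom-level choices and quotienting) does not address the actual failure mode. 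The problem is not that $Q_2$-variables get conflated; it is that a single domain value $c\in D$ in $\calD$ may be the image of \emph{several different} variables of $Q_1$ under the projections $\Pi_{\vars(A)}(P)$, so a tuple $g(A')\in R^D$ need not determine which $Q_1$-atom it came from, and the assignment ``$Q_2$-variable $\mapsto$ $Q_1$-variable'' can be multivalued or undefined rather than merely non-injective. The paper even exhibits this in a footnote: for $Q_1=R(X,X)\wedge R(X,Y)\wedge S(X,Y)$ and $P=\set{(a,a)}$, the unannotated $\calD$ admits no homomorphism back to $Q_1$ at all, so no partition indexed by $\hom(Q_2,Q_1)$ is available.

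The paper's resolution is a small but essential trick your proposal is missing: before projecting, \emph{annotate} each value of $P$ with the name of its column, replacing $(c_1,\dots,c_n)$ by $(("X_1",c_1),\dots,("X_n",c_n))$. This leaves $P$ totally uniform and leaves $|P|$ and all degrees unchanged, but now $\calD=\Pi_{Q_1}(P)$ carries a canonical homomorphism $e:\calD\rightarrow Q_1$ sending $("X",c)\mapsto X$, and for every $g\in\hom(Q_2,\calD)$ the composite $\varphi\defeq e\circ g$ is automatically a well-defined element of $\hom(Q_2,Q_1)$. This makes the partition $\hom(Q_2,\calD)=\bigcup_\varphi\hom_\varphi(Q_2,\calD)$ exact and also drives the proof of the locality property $\hom_{\varphi|_{\chi(t)}}(Q_t,\calD)\subseteq\Pi_{\varphi|_{\chi(t)}}(P)$ (using that each bag is covered by a single atom of $Q_2$, which you do have). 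With the annotation in place, the rest of your argument goes through essentially as you wrote it; without it, the ``inconsistent assignments'' you worry about are not obviously dominated and the counting does not close.
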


Together, Theorems~\ref{th:sufficient} and~\ref{th:necessary} prove
that $\bcqca \leq_m \miip$.  To prove Theorem~\ref{th:necessary}, we
need some definitions and lemmas, where we fix a relation
$P \subseteq D^{ V}$, for some set of variables $ V$, let
$p : P \rightarrow [0,1]$ be its uniform distribution
($p(f) \defeq 1/|P|$, for all $f \in P$), and
$h : 2^{ V} \rightarrow \R_+$ its entropy.

\begin{defn}
  We call $P$ {\em totally uniform} if every marginal of $p$ is also uniform.
\end{defn}

For any two sets $X, Y \subseteq  V$, and any tuple
$f_0 \in \Pi_X(P)$, define the $Y$-degree of $f_0$ as
\[\deg_P(Y|X=f_0) \defeq |\setof{\Pi_Y(f)}{f \in P, \Pi_X(f) = f_0}|.\]

\begin{lmm}
  Let $P$ be totally uniform.  Then, for any two sets
  $X,Y \subseteq  V$, the following hold:
  \begin{enumerate}
  \item \label{item:deg:1} $\deg_P(Y|X=f_0)$ is independent of the
    choice of $f_0$, and we denote it by $\deg_P(Y|X)$.
  \item \label{item:deg:2} $\deg_P(Y|X) = |\Pi_{XY}(P)| / |\Pi_X(P)|$
    and $h(Y|X) = \log(\deg_P(Y|X))$.
  \end{enumerate}
\end{lmm}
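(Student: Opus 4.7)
My plan is to reduce both items to the single observation that total uniformity of $P$ implies $h(Z) = \log |\Pi_Z(P)|$ for every $Z \subseteq V$. This is immediate: the marginal $\Pi_Z(p)$ is, by hypothesis, the uniform distribution on its support $\Pi_Z(P)$, and the entropy of the uniform distribution on a finite set is the logarithm of that set's cardinality. With this identity in hand, the whole lemma collapses to a counting statement together with a bookkeeping step for conditional entropy.

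To establish item~(\ref{item:deg:1}), I would fix $f_0 \in \Pi_X(P)$, set $Z = X \cup Y$, and first observe that the map $g \mapsto \Pi_Y(g)$ is a bijection between $\{g \in \Pi_Z(P) : \Pi_X(g) = f_0\}$ and $\{\Pi_Y(f) : f \in P,\ \Pi_X(f) = f_0\}$; any $Z$-tuple extending $f_0$ is forced to agree with $f_0$ on the overlap $X \cap Y$, so its $Y$-restriction determines it. Writing $M_{f_0}$ for the common cardinality, I would then compute $\Pi_X(p)(f_0)$ in two ways. By marginalizing the uniform measure $\Pi_Z(p)$ over the $Z$-tuples extending $f_0$, it equals $M_{f_0}/|\Pi_Z(P)|$. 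By uniformity of $\Pi_X(p)$, it also equals $1/|\Pi_X(P)|$. Equating the two yields $M_{f_0} = |\Pi_Z(P)|/|\Pi_X(P)|$, which is independent of $f_0$, simultaneously proving item~(\ref{item:deg:1}) and the first half of item~(\ref{item:deg:2}).

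The entropy identity in item~(\ref{item:deg:2}) then follows at once from the opening observation: $h(Y|X) = h(XY) - h(X) = \log|\Pi_{XY}(P)| - \log|\Pi_X(P)| = \log \deg_P(Y|X)$. The only delicate point in the whole argument is verifying the bijection in item~(\ref{item:deg:1}) when $X$ and $Y$ overlap; the key is that the $X \cap Y$-coordinates are already pinned down by $f_0$, so $Y$-restriction is injective on the relevant set. Everything else is a direct unwinding of definitions together with the total-uniformity hypothesis.
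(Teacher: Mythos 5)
Your proof is correct and follows essentially the same route as the paper's: both compute $p(X=f_0)$ by marginalizing the uniform $XY$-marginal to get $\deg_P(Y|X=f_0)/|\Pi_{XY}(P)|$, conclude independence of $f_0$ from uniformity of the $X$-marginal, and derive the entropy identity from $h(Z)=\log|\Pi_Z(P)|$. The only difference is that you make explicit the bijection identifying $Y$-projections with $XY$-projections of tuples extending $f_0$, which the paper leaves implicit.
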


\begin{proof}
  Item~\ref{item:deg:1} follows from the fact that the $X$-marginal of
  $p$ is uniform and, therefore,
  $p(X=f_0) = \deg(Y|X=f_0) / |\Pi_{XY}(P)|$ is independent of $f_0$.
  For item~\ref{item:deg:2},
  \[|\Pi_{XY}(P)| = \sum_{f_0 \in \Pi_X(P)} \deg_P(Y|X=f_0) =
  |\Pi_X(P)| \cdot \deg_P(Y|X),\]
  and
  \begin{align*}
    h(Y|X) &= h(XY)-h(X)\\ &=\log|\Pi_{XY}(P)|- \log|\Pi_X(P)|
    =\log(\deg_P(Y|X)).
  \end{align*}
\end{proof}

\begin{lmm} \label{lemma:join:uniform} If
  $P_1 \subseteq D^X, P_2 \subseteq D^Y$ and $P_2$ is totally uniform,
  then $|P_1 \Join P_2| \leq |P_1| \cdot \deg_{P_2}(Y|X\cap Y)$.
\end{lmm}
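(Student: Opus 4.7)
The plan is to bound the join size by partitioning $P_1$ according to its projection onto the join attributes $X \cap Y$, and then using total uniformity of $P_2$ to replace each fiber count by the common degree $\deg_{P_2}(Y \mid X \cap Y)$.

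First I would write
\[
    P_1 \Join P_2 = \bigcup_{f_0 \in \Pi_{X \cap Y}(P_1)} \setof{t \in D^{X \cup Y}}{\Pi_X(t) \in P_1,\; \Pi_Y(t) \in P_2,\; \Pi_{X \cap Y}(t) = f_0},
\]
and observe that for each $t_1 \in P_1$, the set of tuples of $P_1 \Join P_2$ whose $X$-projection equals $t_1$ is in bijection with
\[
    \setof{f \in P_2}{\Pi_{X \cap Y}(f) = \Pi_{X \cap Y}(t_1)}.
\]
Thus the number of extensions of $t_1$ in the join equals the number of $f \in P_2$ that agree with $t_1$ on $X \cap Y$.

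Next I would invoke total uniformity of $P_2$. Since $Y \supseteq X \cap Y$, the quantity $\deg_{P_2}(Y \mid (X \cap Y) = f_0)$ is just $|\setof{f \in P_2}{\Pi_{X \cap Y}(f) = f_0}|$. By item (1) of the preceding lemma, this value is independent of $f_0$ (provided $f_0 \in \Pi_{X \cap Y}(P_2)$), and we denote it by $\deg_{P_2}(Y \mid X \cap Y)$. Hence, for each $t_1 \in P_1$, the number of extensions in $P_1 \Join P_2$ is either $0$ (when $\Pi_{X \cap Y}(t_1) \notin \Pi_{X \cap Y}(P_2)$) or exactly $\deg_{P_2}(Y \mid X \cap Y)$ (otherwise); in either case it is at most $\deg_{P_2}(Y \mid X \cap Y)$.

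Finally, summing over all $t_1 \in P_1$ gives
\[
    |P_1 \Join P_2| \;=\; \sum_{t_1 \in P_1} \bigl|\setof{f \in P_2}{\Pi_{X \cap Y}(f) = \Pi_{X \cap Y}(t_1)}\bigr| \;\leq\; |P_1| \cdot \deg_{P_2}(Y \mid X \cap Y),
\]
which is the desired inequality. The only delicate step is item (1) of the preceding lemma; apart from that, the argument is bookkeeping on fibers, and I do not anticipate a substantive obstacle. Note that $P_1$ need not be totally uniform — the inequality is one-sided precisely because some tuples of $P_1$ may project outside $\Pi_{X \cap Y}(P_2)$ and thus contribute $0$ rather than $\deg_{P_2}(Y \mid X \cap Y)$.
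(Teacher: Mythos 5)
Your proof is correct and follows essentially the same route as the paper's one-line argument: sum over the tuples of $P_1$, observe that each tuple's number of extensions into the join is the number of tuples of $P_2$ agreeing with it on $X\cap Y$, and bound that by the common degree $\deg_{P_2}(Y\mid X\cap Y)$ guaranteed by total uniformity. Your extra care about tuples of $P_1$ whose $(X\cap Y)$-projection falls outside $\Pi_{X\cap Y}(P_2)$ (contributing $0$) is a welcome refinement that the paper glosses over, but it does not change the argument.
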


\begin{proof}
  \begin{align*}|P_1 \Join P_2| &\leq \sum_{f\in P_1} \deg_{P_2}(Y|X\cap Y =
  \Pi_{X\cap Y}(f))\\ &= |P_1| \deg_{P_2}(Y|X\cap Y).
  \end{align*}
\end{proof}

\begin{lmm} \label{lemma:group:characterizable} Suppose the $\mii$
  $\max_{i=1,q} E_i(h) \geq 0$ fails for some entropic function $h$.
  Then, for every $\Delta > 0$, there exists a totally uniform
  relation $P$ such that its entropy $h$ satisfies
  $\max_{i=1,q} E_i(h) + \Delta < 0$. In other words, we can find a
  totally uniform witness that fails the inequality with an arbitrary
  large gap $\Delta$.
\end{lmm}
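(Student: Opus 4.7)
The plan is to combine three ingredients: (1) Chan--Yeung's group-characterization of entropic functions, which furnishes totally uniform candidate relations; (2) a density/continuity argument that transfers failure of the inequality from the arbitrary entropic function $h$ to a group-characterizable entropic function; and (3) a scaling step via direct products that amplifies the negative gap past any prescribed $\Delta$.

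Set $\delta \defeq -\max_{i \in [q]} E_i(h) > 0$ by hypothesis. By Chan--Yeung~\cite{DBLP:conf/isit/Chan07,DBLP:journals/tit/ChanY02}, the set of group-characterizable entropic functions is dense in $\overline{\Gamma_n^*}$ after rescaling: there exist a finite group $G$, subgroups $G_1, \ldots, G_n \leq G$, and a positive scalar $\lambda$ such that the group-characterizable entropic function $h'(\alpha) \defeq \log\bigl(|G|/|\bigcap_{i\in\alpha} G_i|\bigr)$ satisfies $|h'(\alpha)/\lambda - h(\alpha)| < \epsilon$ for every $\alpha \subseteq V$. Since each $E_i$ is a linear expression with fixed integer coefficients, a small enough $\epsilon$ yields $\max_i E_i(h'/\lambda) < -\delta/2$, and hence by linearity $\max_i E_i(h') < -\lambda\delta/2$. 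The associated relation $P' \defeq \set{(aG_1,\ldots,aG_n) : a \in G}$ has entropy $h'$ under the uniform distribution on $G$, and is totally uniform: every marginal on $\alpha$ is supported on $G / \bigcap_{i \in \alpha} G_i$, on which each fiber contains $|\bigcap_{i\in\alpha} G_i|$ group elements, so the marginal is uniform on its support.

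To amplify the negative gap past $\Delta$, take the $k$-fold domain product $P \defeq (P')^{\otimes k}$, equivalently the group-characterizable relation associated to $(G^k; G_1^k,\ldots,G_n^k)$. The product of totally uniform relations is totally uniform, and its entropy is $k h'$, so $\max_i E_i(kh') = k\,\max_i E_i(h') < -k\lambda\delta/2$. Choosing $k$ large enough that $k\lambda\delta/2 > \Delta$ exhibits a totally uniform relation $P$ whose entropy violates the inequality by more than $\Delta$, as required. The main obstacle is invoking the Chan--Yeung density statement in precisely the rescaled form above and verifying that the continuous dependence of each $E_i$ on $h$ lets the strict negativity survive the approximation; once this step is secured, the total uniformity of group-characterizable relations and the direct-product amplification are routine.
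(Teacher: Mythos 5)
Your proof is correct and follows essentially the same route as the paper: Chan--Yeung density of group-characterizable entropic functions, the observation that group-characterizable relations are totally uniform (via the coset-fiber count $|\bigcap_{i\in\alpha}G_i|$), and gap amplification by an integer multiple of the entropy. The only difference is cosmetic — the paper amplifies first (replacing $h_0$ by $k\cdot h_0$, still entropic) and then invokes density, whereas you approximate first and then amplify via the $k$-fold group product; your explicit handling of the rescaling factor $\lambda$ in the density statement is in fact slightly more careful than the paper's.
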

\begin{proof} We use the following result on group-characterizable
    entropic functions~\cite{DBLP:journals/tit/ChanY02}.
  Fix a group $G$.  For every subgroup $G_1 \subseteq G$, denote
  $aG_1 \defeq \setof{ab}{b \in G_1}$.
  An entropic function $h \in \Gamma_n^*$ is called {\em
    group-characterizable} if there exists a group $G$ and subgroups
  $G_1, \ldots, G_n$ such that $h$ is the entropy of the uniform
  probability distribution on
  $P \defeq \setof{(aG_1, \ldots, aG_n)}{a \in G}$.
  Chan and Yeung~\cite{DBLP:journals/tit/ChanY02} proved that
  the set of group-characterizable entropic functions is dense in
  $\Gamma_n^*$; in other words, every $h \in \Gamma_n^*$ is the limit
  of group-characterizable entropic functions.  In particular, if a
  max-linear inequality is valid for all group-characterizable entropic
  functions, then it is also valid for all entropic functions.

  We show that, if $\max_i E_i(h) \geq 0$ fails, then it fails with a
  gap $> \Delta$ on a group-characterizable entropy.  Let $h_0$ be any
  entropic function witnessing the failure:
  $\max_{i=1,q} E_i(h_0) < 0$.  Choose any $\delta > 0$
  s.t. $\max_{i=1,q} E_i(h_0) + \delta < 0$, and define
  $k \defeq \lceil \Delta/\delta \rceil+1$.  Since
  $h \defeq k\cdot h_0 = h_0+h_0+\cdots+h_0$ is also entropic and
  $E_i(k\cdot h_0)=k\cdot E_i(h_0)$ for all $i$, we have that
  $\max_{i=1,q} E_i(h) + k\cdot \delta < 0$, and therefore
  $\max_{i=1,q} E_i(h) + \Delta < 0$.  By Chan-Yeung's density result, we
  can assume that $h$ is group-characterizable.

  Finally, we prove that the set $P$ defining a group-characterizable
  entropy is totally uniform.  This follows immediately from the fact
  that, under the uniform distribution, every tuple
  $(aG_1, \ldots, aG_n) \in P$ has probability
  $|G_1 \cap \cdots \cap G_n|/|G|$, and the marginal probability of
  any tuple $(aG_{i_1}, \ldots, aG_{i_k}) \in \Pi_{i_1\cdots i_k}(P)$
  has probability $|G_{i_1} \cap \cdots \cap G_{i_k}|/|G|$.
  (See Theorem 1 from~\cite{DBLP:conf/isit/Chan07}.)
\end{proof}

\begin{proof}[Proof of Theorem~\ref{th:necessary}]
    Let $(T,\chi)$ be a junction tree (decomposition) of $Q_2$, which exists because acyclic
    queries are chordal. Then,
    \begin{align}
      h(\vars(Q_1)) &>  \max_{(T',\chi) \in \td(Q_2)} \max_{\varphi \in \hom(Q_2,Q_1)}
      (E_{T'}\circ\varphi)(h)  \\
         &\geq \max_{\varphi \in \hom(Q_2,Q_1)} (E_T\circ\varphi)(h).
    \end{align}
    Fix $\Delta$ such that
  $\Delta > \log|\hom(Q_2,Q_1)|$, and let $P \subseteq D^{\vars(Q_1)}$
  be the totally uniform relation given by
  Lemma~\ref{lemma:group:characterizable}, whose entropy $h$
  satisfies:
  \begin{align}
    \log|P| = h(\vars(Q_1)) & > \Delta +  \max_{\varphi \in \hom(Q_2,Q_1)} (E_T\circ\varphi)(h)
  \label{eq:necessary:delta}
  \end{align}
  $P$'s columns are in 1-1 correspondence with
  $\vars(Q_1) = \set{X_1, \ldots, X_n}$.  We annotate each value with
  the column name, thus a tuple $f = (c_1, c_2, \ldots, c_n) \in P$
  becomes \[f = (("X_1",c_1), ("X_2",c_2), \ldots, ("X_n",c_n)).\]
  The
  annotated $P$ is isomorphic with the original $P$, hence still
  totally uniform.  Let $\calD = \Pi_{Q_1}(P)$ be the database
  obtained by projecting the annotated $P$ on the atoms of $Q_1$
  (Eq.(\ref{eq:projection:db})).  We have seen that
  $|\hom(Q_1,\Pi_{Q_1}(P))| \geq |P|$.  We will show that
  $|P| > |\hom(Q_2, \calD)|$, thus $P$ is a witness for
  $Q_1\not\preceq Q_2$.  To do this we need to upper bound
  $|\hom(Q_2,\calD)|$.

  Let $e : \calD \rightarrow Q_1$ be the homomorphism mapping every
  value $("X",c)$ to the variable $X$: this is a
  homomorphism\footnote{For example, let $Q_1 = R(X,X),R(X,Y),S(X,Y)$
    and let $P$ have a single tuple $(a,a)$.  First annotate $P$ to
    $((X,a),(Y,a))$. Then $R^D = \set{((X,a),(X,a)), ((X,a),(Y,a))}$,
    $S^D = \set{((X,a),(Y,a))}$.  Without the annotation, these
    relations would be $R^D = S^D = \set{(a,a)}$, and there is no
    homomorphsims to $Q$, since the tuple in $S^D$ cannot be mapped
    anywhere.}  because, by the definition of $\calD$,
  Eq.(\ref{eq:projection:db}), each tuple
  $f_0 = R_i(("X_{j_1}",c_1), ("X_{j_2}",c_2), \ldots)$ in $\calD$ is
  the projection of some $f \in P$ on the variables $\vars(A)$ of some
  $A \in \atoms(Q_1)$; then $e$ maps $f_0$ to $A$.  If we view a tuple
  $f \in P$ as a function $\vars(Q_1) \rightarrow D$, where $D$ is the
  domain, then $e \circ f$ is the identity function on $\vars(Q_1)$.
  Fix $\varphi \in \hom(Q_2,Q_1)$ and denote:
  \begin{align*}
      \hom_\varphi(Q_2,\calD) & \defeq  \setof{g \in \hom(Q_2,\calD)}{e \circ  g = \varphi}
  \end{align*}
  We have
  \begin{align}
    \hom(Q_2, \calD) &= \bigcup_{\varphi \in \hom(Q_2,Q_1)}  \hom_\varphi(Q_2,\calD) \nonumber\\
    |\hom(Q_2, \calD)| &= \sum_{\varphi \in \hom(Q_2,Q_1)}  |\hom_\varphi(Q_2,\calD)| \label{eq:necessary:1}
  \end{align}
  We will compute an upper bound for $|\hom_\varphi(Q_2,\calD)|$, for
  each homomorphism $\varphi$.  We claim:
  \begin{align}
      \hom_\varphi(Q_2,\calD) & \subseteq \Join_{t \in \nodes(T)} \Pi_{\varphi|_{\chi(t)}}(P) \label{eq:necessary:4}
  \end{align}
  where $\varphi|_{\chi(t)}$ is the restriction of $\varphi$ to
  $\chi(t)$, and $\Pi_{\varphi|_{\chi(t)}}(P)$ is the generalized
  projection (Sec.~\ref{subsec:results:decidable}), i.e. it is a relation with
  attributes $\chi(t)$.  The reason for partitioning $\hom(Q_2,\calD)$
  into subsets $\hom_\varphi(Q_2, \calD)$ is so we can apply
  inequality (\ref{eq:necessary:4}) to each set: notice that the right-hand-side
  depends on $\varphi$.  To prove the claim (\ref{eq:necessary:4}), we
  first observe:
  \begin{align}
      \hom_\varphi(Q_2,\calD) & \subseteq \ \Join_{t \in \nodes(T)} \hom_{\varphi|_{\chi(t)}}(Q_t,\calD) \label{eq:necessary:2}
  \end{align}
  This is a standard property of any join decomposition (not
  necessarily acyclic): every tuple $g \in \hom(Q_2,\calD)$ is the
  join of its fragments $\Pi_{\chi(t)}(g) \in \hom(Q_t,\calD)$, as
  long as the fragments cover all attributes of $g$.  Next we prove
  the following {\em locality property}:
  \begin{align}
      \hom_{\varphi|_{\chi(t)}}(Q_t,\calD) & \subseteq \Pi_{\varphi|_{\chi(t)}}(P) \label{eq:necessary:3}
  \end{align}
  It says that every answer of $Q_t$ on $\calD$ can be found in a
  single row of $P$.  Here we use the fact that $Q_2$ is acyclic
  therefore there exists some $B\in \atoms(Q_2)$ s.t.
  $\vars(B) = \chi(t)$.  Then, any homomorphism
  $g_0 \in \hom_{\varphi|_{\chi(t)}}(Q_t,\calD)$ maps $B$ to some
  tuple $f_0\in \calD$.  By construction of $\calD$, there exists some
  $A \in \atoms(Q_1)$ such that $f_0 \in \Pi_{\vars(A)}(P)$; in
  particular, $f_0 = \Pi_{\vars(A)}(f)$ for some $f \in P$.  Thus
  $g_0$, when viewed as a tuple over variables $\chi(t)$, can be found
  in a single row $f \in P$, more precisely\footnote{We include here
    the rigorous, but rather tedious argument.  Since $g_0$ is a
    homomorphism it ``maps'' the atom $B$ to the tuple $f_0$, meaning
    $(g_0 \circ \vars(B)) = f_0 = (f \circ \vars(A))$ (all are
    functions $[\arity(B)] \rightarrow D$, where $D$ is the domain).
    Since $\vars(B) : [\arity(B)] \rightarrow \chi(t)$ is surjective,
    it has a right inverse, which implies $g_0 = f \circ \psi$ for
    some $\psi$.} $g_0 = \Pi_\psi(f)$, from some function
  $\psi : \chi(t) \rightarrow \vars(Q_1)$.  Noticed that we have used
  in an essential way the fact that $\chi(t)$ is covered by a single
  atom $B$: we will need to remove this restriction later when we prove
  Theorem~\ref{th:decidable} (Lemma~\ref{lemma:necessary:chordal} in
  Section~\ref{sec:decidability}).  From here it is immediate to
  show that $\psi=\varphi|_{\chi(t)}$, by composing with $e$:
  $\varphi|_{\chi(t)} = e \circ g_0 = e \circ f \circ \psi = \psi$
  because $e \circ f$ is the identity on $\vars(Q_1)$.  This completes
  the proof of Eq.(\ref{eq:necessary:3}), which, together with
  Eq.(\ref{eq:necessary:2}), proves the claim
  Eq.(\ref{eq:necessary:4}).

  Finally, we will upper bound the size of the join in
  (\ref{eq:necessary:4}), by applying repeatedly
  Lemma~\ref{lemma:join:uniform}.  This is possible because each
  projection $\Pi_{\varphi|_{\chi(t)}}(P)$ is totally uniform.
  Formally, fix an order of $\nodes(T)$, $t_1, t_2, \ldots, t_m$, such
  that every child occurs after its parent, and compute the join
  (\ref{eq:necessary:4}) inductively, applying
  Lemma~\ref{lemma:join:uniform} to each step.  If
  $S_i \defeq \Join_{j=1,i} \Pi_{\varphi|_{\chi(t_j)}}(P)$, then the
  lemma implies
  $|S_i| = |S_{i-1} \Join \Pi_{\varphi|_{\chi(t_i)}}(P)| \leq
  |S_{i-1}| \deg_{\Pi_{\varphi|_{\chi(t_i)}}(P)}(\chi(t_i) | \chi(t_i)
  \cap \chi(\parent(t_i)))$, and this proves:
  \begin{align}
    |\Join_{t \in \nodes(T)} \Pi_{\varphi|_{\chi(t)}}(P)| \leq
    \prod_{i=1,m}\deg_{\Pi_{\varphi|_{\chi(t_i)}}(P)}(\chi(t_i)|\chi(t_i) \cap \chi(\parent(t_i))
   \label{eq:necessary:5}
  \end{align}
  Let $p' \defeq \Pi_{\varphi|_{\chi(t_i)}}(p)$ be the
  $\varphi|_{\chi(t_i)}$-pullback of $p$.  Its entropy satisfies
  $h'(Z) = h(\varphi(Z)) = (h\circ \varphi)(Z)$ for all
  $Z \subseteq \chi(t_i)$, implying
  $\log\deg_{\Pi_{\varphi|_{\chi(t_i)}}(P)}(Y|Z)= (h \circ \varphi)(Y|Z)$.
  This observation, together with (\ref{eq:necessary:4}) and
  (\ref{eq:necessary:5}) allow us to relate $\hom(Q_2,\calD)$ to
  $(E_T\circ \varphi)(h)$:
  \begin{align*}
    \log |\hom_\varphi(Q_2,\calD)|
    & \leq \sum_{i=1,m} \log\deg_{\Pi_{\varphi|_{\chi(t_i)}}(P)}(\chi(t_i)|\chi(t_i) \cap \chi(\parent(t_i))) \\
    &= \sum_{i=1,m} (h\circ \varphi)((\chi(t_i)|\chi(t_i) \cap \chi(\parent(t_i))) = (E_T\circ\varphi)(h) \\
    &< h(\vars(Q_1)) - \Delta = \log|P| - \Delta &\mbox{(By Eq.(\ref{eq:necessary:delta}))}
  \end{align*}
  Equivalently, $|\hom_\varphi(Q_2,\calD)| < |P|/2^\Delta$.  We sum
  up (\ref{eq:necessary:1}):
  \begin{align*}
      |\hom(Q_2,\calD)|  < |\hom(Q_2,Q_1)|\frac{|P|}{2^\Delta} < |P|
  \end{align*}
  completing the proof.
\end{proof}

We remark that inequality~\eqref{eqn:necessary} is slightly stronger than necessary to prove
containment. In the proof, we only need the inequality to hold for some junction tree.
Conversely, Theorem~\ref{th:sufficient} can also be stated such that we only consider
non-redundant tree decompositions, of which junction trees are a special case.

\section{Reducing $\miip$ to $\bcqca$}

\label{sec:max-iti:completeness}

The results of the previous section imply $\bcqca \leq_m \miip$.  We
now prove the converse, $\miip \leq_m \bcqca$; in other words we show
that $\miip$ can be reduced to the containment problem
$Q_1 \preceq Q_2$, with acyclic $Q_2$.

\begin{thm} \label{th:miip:to:bcqca}
  $\miip \leq_m \bcqca$.
\end{thm}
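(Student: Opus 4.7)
The plan is to invoke the characterization proved in Theorems~\ref{th:sufficient} and~\ref{th:necessary}: when $Q_2$ is acyclic, $Q_1\preceq Q_2$ holds if and only if the max-information inequality
\[
h(\vars(Q_1)) \;\leq\; \max_{(T,\chi)\in\td(Q_2)}\;\max_{\varphi\in\hom(Q_2,Q_1)}(E_T\circ\varphi)(h)
\]
is valid for every $h\in\Gamma^*_n$. Thus, reducing $\miip$ to $\bcqca$ amounts to showing that any max-information inequality can be realized, up to a syntactic transformation, as the characterization inequality of some pair $(Q_1,Q_2)$ with $Q_2$ acyclic.

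First, I would normalize the input. Given an instance $0\leq\max_{\ell\in[k]}\sum_{X\subseteq V}c_{\ell,X}h(X)$, clear denominators, multiply through by a large positive integer $M$, and add $M\cdot h(V)$ to both sides to obtain an equivalent inequality of the form $M\cdot h(V)\leq\max_{\ell} E'_\ell(h)$, where each $E'_\ell$ is a signed integer combination of entropies $h(X)$ for $X\subseteq V$. Since $(E_T\circ\varphi)(h)=\sum_{t\in\nodes(T)} h(\varphi(\chi(t)))-\sum_{(t_1,t_2)\in\edges(T)}h(\varphi(\chi(t_1)\cap\chi(t_2)))$ contributes $+1$ to every bag and $-1$ to every edge separator after the $\varphi$-pullback, choosing $M$ large enough provides sufficient slack to realize any signed integer combination of entropies as an expression of this shape.

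Second, I would build the queries. Let $\vars(Q_1)$ consist of $V$ together with auxiliary variables that are functionally determined by $V$ through rigid atoms, so that $h(\vars(Q_1))=M\cdot h(V)$ on every witness distribution. Assemble $Q_2$ from $k$ acyclic gadgets $G_1,\ldots,G_k$ so that, for each $\ell\in[k]$, a designated tree decomposition $T_\ell$ of $Q_2$ together with a designated homomorphism $\varphi_\ell\in\hom(Q_2,Q_1)$ realizes $(E_{T_\ell}\circ\varphi_\ell)(h)=M\cdot h(V)+E'_\ell(h)$. Decorate the gadgets with fresh, asymmetric relation symbols so that the only admissible pairs $(T,\varphi)$ contributing to the max are exactly $\{(T_\ell,\varphi_\ell)\}_{\ell\in[k]}$, and include extra bookkeeping atoms so that $Q_2$ is acyclic and every junction tree of $Q_2$ coincides with some $T_\ell$.

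The main obstacle is precisely this last requirement. The maximum in the characterization ranges over every tree decomposition of $Q_2$ and every homomorphism $Q_2\to Q_1$, so any stray pair $(T,\varphi)$ not among the designed ones could contribute an unintended expression and spoil the equivalence. Eliminating strays requires a careful fingerprinting argument: distinguished relation symbols that kill symmetries in $Q_1$ and pin down each admissible homomorphism to one of the $\varphi_\ell$, together with redundant atoms in $Q_2$ that force every junction tree to coincide with some $T_\ell$. Once this control is established, the characterization inequality of $(Q_1,Q_2)$ coincides with the normalized input max-information inequality, so $Q_1\preceq Q_2$ if and only if the input is valid, completing the many-one reduction $\miip\leq_m\bcqca$.
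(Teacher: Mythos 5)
Your high-level strategy coincides with the paper's: realize the given $\miip$ as the characterization inequality of Theorems~\ref{th:sufficient} and~\ref{th:necessary} for a constructed pair $(Q_1,Q_2)$ with $Q_2$ acyclic. However, two of your central steps do not work as stated. First, you propose to force $h(\vars(Q_1)) = M\cdot h(V)$ by adjoining auxiliary variables that are \emph{functionally determined} by $V$. Adding determined variables never increases joint entropy, so this yields $h(\vars(Q_1)) = h(V)$, not $M\cdot h(V)$. The only way to scale the left-hand side by an integer $q$ is to take $q$ \emph{disjoint, independent} copies of the variables (the paper's $Q_1 = Q_1^{(1)}\wedge\cdots\wedge Q_1^{(q)}$); but once you do that, the homomorphisms $Q_2\rightarrow Q_1$ can mix copies arbitrarily, producing ``adorned'' versions of the expressions $E_i$ whose terms carry different copy indices. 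Handling these is exactly the content of Lemma~\ref{lemma:adornment}, which shows the max over all adornments is equivalent to the original max (one direction via subadditivity $h'(V^{(1)}\cdots V^{(q)})\leq\sum_\ell h'(V^{(\ell)})$, the other by instantiating i.i.d.\ copies). Your proposal has no analogue of this lemma, and without it the equivalence breaks.

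Second, your plan to eliminate all stray pairs $(T,\varphi)$ so that only $k$ designated ones contribute to the max is both unproved (you defer it as ``the main obstacle'') and aimed at the wrong target. In any construction where $Q_2$ has several connected components and $Q_1$ has several copies or subqueries, there are combinatorially many homomorphisms ($3^5$ in the paper's running example); the actual proof does not kill them but instead shows that \emph{every} homomorphism yields an adornment of some $E_i$ and that all constant adornments occur, which is precisely the hypothesis of Lemma~\ref{lemma:adornment}. The concrete device is the tuple of variables $\tilde Z$ and the distinguished variable $U$, which pin each homomorphism's image of the whole chain of atoms to a single subquery $Q_{1,i}^{(\ell)}$. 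Relatedly, your normalization step (``choose $M$ large enough for slack'') ignores the structural constraint that $(E_T\circ\varphi)(h)$ must have its negative terms appear as separators of adjacent bags of a tree decomposition of $Q_2$; the paper enforces this by rewriting $-h(X)$ as $h(V|X)-h(V)$ (which produces the required chain condition) and by introducing a fresh variable $U$ to guarantee connectedness, a step whose correctness itself relies on conditional entropies being limits of entropic functions. These are the essential technical ideas of the reduction, and they are missing from your proposal.
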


The proof has two parts.  First, we convert the $\miip$ in
Eq.~\eqref{eq:miti:0} into a form that resembles
Eq.\eqref{eqn:another:sufficient}, then we construct
$Q_1$ and $Q_2$.

\subsection{$\miip \leq_m \umiip$}

Consider a general $\miip$ (Eq.(\ref{eq:miti:0})), which we repeat
here:
\begin{align}
    0 \leq \max_{\ell \in [k]} E_\ell(h) \label{eq:miti:e}
\end{align}
where $E_\ell(h) \defeq \sum_{X \subseteq V} c_{\ell,X} h(X)$.
In order to reduce it to a query containment problem, we start by
making the expressions $E_\ell$ uniform.  More precisely, for fixed
natural numbers $n,p,q$, we say that an expression $E$ is
$(n,p,q)$-uniform if:
\begin{align}
  E(h) = n\cdot h(U) + \sum_{j=0,p} h(Y_j|X_j) - q\cdot h(V)
  \label{eq:npq-uniform}
\end{align}
where $V$ is the set of all variables, $U$ is a single variable
called the {\em distinguished variable}, and $X_j, Y_j$, for $j=0,p$, are
(not necessarily distinct) sets of variables, satisfying the following
conditions:
\begin{description}
\item[Chain condition] $X_0 = \emptyset$ and
  $X_j \subseteq Y_{j-1} \cap Y_j$ for $j=1,p$.
\item[Connectedness] $U \in X_j$ for $j=1,p$
\end{description}

A {\em $\umiip$} is a $\miip$, Eq.(\ref{eq:miti:e}), such that
there exist numbers $n,p,q$ and a variable $U$ s.t. all expressions
$E_\ell$ in Eq.(\ref{eq:miti:e}) are $(n,p,q)$-uniform, and have $U$
as a distinguished variable.  Notice that $n,p,q$ and $U$ are the same
in all expressions $E_\ell$.  Clearly, a $\umiip$ is a special
case of a $\miip$.  We prove:

\begin{lmm} \label{lemma:n:n:p} $\miip \leq_m \umiip$.
  Moreover, the reduction can be done in polynomial time.
\end{lmm}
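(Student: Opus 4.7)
I reduce a general $\miip$ instance $0 \le \max_{\ell \in [k]} E_\ell(h)$, with $E_\ell(h) = \sum_{X \subseteq V} c_{\ell, X} h(X)$ in integer coefficients, to a $\umiip$ instance sharing common parameters $(n, p, q)$ and a common distinguished variable $U$, in polynomial time. The reduction proceeds in three stages. First, I introduce a fresh variable $U \notin V$ and work over the extended set $V^* \defeq V \cup \{U\}$. Validity of the $\miip$ is preserved under this extension: every $V$-entropic function extends to a $V^*$-entropic one by declaring $U$ deterministic (so that $h(U) = 0$ and $h(S \cup \{U\}) = h(S)$ for all $S \subseteq V$), and every $V^*$-entropic function restricts back to a $V$-entropic one. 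This fixes $U$ as the common distinguished variable for the target $\umiip$.

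Second, I rewrite each $E_\ell$ as an $(n_\ell, p_\ell, q_\ell)$-uniform expression over $V^*$. The template takes $Y_{\ell, 0}$ to be $V^*$ and uses the chain rule $h(V^*) = h(U) + \sum_i h(Z_i \mid U Z_1 \cdots Z_{i-1})$, ordered with $U$ first along an enumeration $Z_1, \ldots, Z_{|V|}$ of $V$, to decompose $-q_\ell h(V^*)$ into a sequence of conditional entropy terms whose conditioning sets all contain $U$. For each summand with $c_{\ell, X} > 0$, I insert $c_{\ell, X}$ copies of the pair $(X_j, Y_j) = (\{U\}, X \cup \{U\})$, each contributing $h(X \cup \{U\} \mid \{U\})$; for each $c_{\ell, X} < 0$, I insert $|c_{\ell, X}|$ copies of $(X \cup \{U\}, V^*)$, each contributing $h(V^* \mid X \cup \{U\})$. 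Connectedness $U \in X_j$ is immediate, and the chain condition $X_j \subseteq Y_{j-1} \cap Y_j$ is enforced by ordering all positive-contribution pairs first (they use $X_j = \{U\}$, which fits any chain whose bags contain $U$), inserting a transition pair $(\{U\}, V^*)$, and then appending the negative-contribution pairs (whose $Y_j = V^*$ dominates every subsequent intersection). The integers $n_\ell, q_\ell$ are tuned so that on any $V^*$-distribution with deterministic $U$ the resulting uniform expression evaluates exactly to $E_\ell(h\vert_V)$. Third, I uniformize by setting $n \defeq \max_\ell n_\ell$, $p \defeq \max_\ell p_\ell$, $q \defeq \max_\ell q_\ell$, and padding each individual expression to match these common parameters via trivial terms $h(\{U\} \mid \{U\}) = 0$ to raise $p_\ell$ (the chain condition is preserved because $U$ lies in every $Y_{\ell, j}$ with $j \ge 1$), and analogous chain-rule trades between $h(U)$ and $h(V^*)$ to raise $n_\ell$ and $q_\ell$, all of which respect the chain and connectedness conditions.

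\textbf{Main obstacle.} The principal technical challenge is verifying that validity of the constructed $\umiip$ (on all $V^*$-entropic functions) coincides with validity of the original $\miip$ (on all $V$-entropic functions). The forward direction follows immediately by restricting the $\umiip$ to $V^*$-extensions with deterministic $U$: on these the uniform expression reduces exactly to $E_\ell(h\vert_V)$ by construction. The reverse direction is more delicate, because on $V^*$-distributions where $U$ is not deterministic the uniform expression differs from $E_\ell(h\vert_V)$ by a linear combination of conditional entropies $h(U \mid X)$, whose sign depends on the coefficients $c_{\ell, X}$. Ensuring that the max over $\ell$ of these uniform expressions stays non-negative whenever the original $\miip$ is valid is the subtle step: it is arranged by choosing the chain decomposition so that the residual $h(U \mid \cdot)$ terms contribute non-negatively, possibly at the cost of splitting a single $E_\ell$ into several uniform expressions to cover all sign patterns, which only increases $k$ polynomially. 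Polynomial time is preserved because the distinct $(Y_j, X_j)$ pairs in each uniform expression can be enumerated by the non-zero coefficients of $E_\ell$, and their multiplicities are encoded compactly via the unary counters $n_\ell, p_\ell, q_\ell$.
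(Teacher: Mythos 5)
Your construction follows the same skeleton as the paper's (expand integer coefficients into unit terms, rewrite each negative term $-h(X)$ as $h(V|X)-h(V)$, adjoin a fresh distinguished variable $U$ to every set, pad to equalize the parameters), but the correctness argument has a genuine gap in the reverse direction, and you have flagged it yourself without resolving it. Verifying the constructed $\umiip$ only on extensions where $U$ is deterministic establishes that validity of the original $\miip$ is \emph{necessary} for validity of the new one; for a many-one reduction you also need sufficiency, i.e.\ that validity of the original $\miip$ forces the new inequality to hold on \emph{every} entropic function over $V\cup\{U\}$, including those where $U$ is correlated with $V$. Your proposed fix --- ``choosing the chain decomposition so that the residual $h(U\mid\cdot)$ terms contribute non-negatively, possibly splitting a single $E_\ell$ into several uniform expressions to cover all sign patterns'' --- is not a proof, and adding extra disjuncts under the $\max$ weakens the inequality, which threatens to break the equivalence in exactly the direction you are trying to establish.

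The missing idea is that no case analysis on $U$ is needed: with the coefficients tuned as $E_\ell'(h') = (n+1)h'(U) + \sum_j h'(UY_j\mid UX_j) - (n+1)h'(UV)$, every term telescopes so that $E_\ell'(h')$ equals $E_\ell$ evaluated at the \emph{conditional} entropy $h(\cdot) \defeq h'(\cdot\mid U) = h'(\cdot\,U) - h'(U)$, identically for every $h'$ --- not only when $U$ is deterministic. The function $h'(\cdot\mid U)$ is generally not entropic, but it is a convex combination of the entropic functions $h'(\cdot\mid U{=}u)$ and hence lies in $\bar\Gamma_n^*$; since a valid $\miip$ over $\Gamma_n^*$ remains valid on $\bar\Gamma_n^*$ by continuity, $\max_\ell E_\ell(h'(\cdot\mid U)) \geq 0$, which gives $\max_\ell E_\ell'(h')\geq 0$ with no sign analysis and no splitting of expressions. (Your own tuning of $n_\ell,q_\ell$ in fact already achieves this identity if you carry the computation through for general $h'$ rather than only for deterministic $U$; the step you are missing is recognizing that the result is $E_\ell$ of a limit of entropic functions and invoking continuity.) Without this argument the reduction is not established.
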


\begin{proof}
  Every $E_\ell$ in Eq.(\ref{eq:miti:e}) has the form
  $\sum_{X \subseteq V} c_{\ell, X} h(X)$.  By expanding each
  positive coefficient as $c_{\ell,X} = 1+1+ \cdots$ and each negative
  coefficient as $c_{\ell,X} = -1 -1 - \cdots$, we can write:
  \begin{align*}
      E_\ell(h) = \sum_{i=1}^{m_\ell} h(Y_i) - \sum_{j=1}^{n_\ell}h(X_j) 
    =  \sum_{i=1}^{m_\ell}h(Y_i) + \sum_{j=1}^{n_\ell}h(V|X_j) - n_\ell\cdot h(V)
  \end{align*}
  Define $X_0 \defeq \emptyset$ and add $h(V|X_0) - h(V)$
  ($=0$) to $E_\ell$:
  \begin{align}
    E_\ell(h) = \sum_{i=1}^{m_\ell}h(Y_i) + \sum_{j=0}^{n_\ell}h(V|X_j) - (n_\ell+1)\cdot h(V)
\label{eq:e:ell}
  \end{align}
  The second sum is a chain, because $X_0=\emptyset$ and every $X_j$
  is contained in $V$.  Let $n \defeq \max_\ell n_\ell$.  We add
  $n-n_\ell$ terms $h(V) - h(V)$ to the expression $E_\ell$,
  resulting in two changes to the expression (\ref{eq:e:ell}): the
  term $-(n_\ell+1) \cdot h(V)$ is replaced by
  $-(n+1)\cdot h(V)$, and the sum $\sum_{i=1,m_\ell}h(Y_i)$
  becomes $\sum_{i=1,m_\ell + n-n_\ell} h(Y_i)$ where the $n-n_\ell$
  new terms are $Y_i \defeq V$.  We combine the two sums
  $\sum_i h(Y_i) + \sum_j h(V| X_j)$ into a single sum by writing
  $h(Y_i)$ as $h(Y_i|\emptyset)$, and thus $E_\ell$ becomes:
  \begin{align}
    E_\ell(h) = \sum_{j=0}^{p_\ell}h(Y_j|X_j) - (n+1)\cdot h(V) \label{eq:e:ell:2}
  \end{align}
  Notice that Eq.(\ref{eq:e:ell:2}) still satisfies the chain condition:
  $X_0=\emptyset$, and
  $X_j \subseteq Y_{j-1} \cap Y_j$ for $j=1,p_\ell$.  Our next step is
  to enforce the connectedness condition.

  Let $U$ be a fresh variable. We will denote by $h$ an entropic
  function over the variables $V$, and by $h'$ an entropic
  function over the variables $UV$.  For $\ell \in [k]$, denote by
  $E_\ell'$ the following expression:
  \begin{align}
\hspace{-3mm}
E_\ell'(h') = (n+1)\cdot h'(U) +  \sum_{j=0}^{p_\ell}h'(UY_j|UX_j) - (n+1)\cdot h'(UV)  \label{eq:e:ell:prime}
  \end{align}
  We claim: $\forall h, 0 \leq \max_\ell E_\ell(h)$ iff
  $\forall h', 0 \leq \max_\ell E_\ell'(h')$.
  For the $\Leftarrow$ direction, assume
  $\forall h': 0 \leq \max_\ell E_\ell'(h')$ and let $h$ be any
  entropic function over the variables $V$.  We extended it to an
  entropic function $h'$ over the variables $UV$, by defining $U$
  to be a constant random variable.  In other words,
  $h'(X) \defeq h(X-\set{U})$ forall $X \subseteq UV$; in
  particular $h'(U)=0$.  Then $E_\ell'(h')=E_\ell(h)$, forall
  $\ell \in [k]$, and the claim follows from
  $0 \leq \max_\ell E_\ell'(h') = \max_\ell E_\ell(h)$.
  For the $\Rightarrow$ direction, let $h'$ be any entropic function
  over the variables $U V$, and denote $h(-) \defeq h'(-|U)$ the
  conditional entropy.  The conditional entropy $h$ is not necessarily
  entropic, but it is the limit of entropic functions (see
  Appendix~\ref{sec:background:it:long}), hence it satisfies
  $0 \leq \max_\ell E_\ell(h)$.  Then,
  $E_\ell'(h') = \sum_{j=0}^{p_\ell}h'(UY_j|UX_j) - (n+1)\cdot h'(UV|U) = \sum_{j=0}^{p_\ell}h(Y_j|X_j) - (n+1) \cdot h(V) =
  E_\ell(h)$,
  and the claim follows from
  $0 \leq \max_\ell E_\ell(h) = \max_\ell E'_\ell(h')$.

  To enforce $X_0=\emptyset$ in the chain condition, we write
  $E_\ell'$ as:
  \begin{align*}
    E_\ell'(h') = n\cdot h'(U) +  \bigl(h'(U) + \sum_{j=0}^{p_\ell}h'(UY_j|UX_j)\bigr) - (n+1)\cdot h'(UV)
  \end{align*}
  Finally, we need to ensure that all numbers $p_\ell$ are equal, and,
  for that, we set $p \defeq 1+\max_\ell p_\ell$ and add $p-p_\ell-1$
  terms $h'(U|U)$ to $E_\ell'(h')$. Comparing it with
  Eq.(\ref{eq:npq-uniform}), the new $E_\ell'$ is an
  $(n,p,n+1)$-uniform expression, proving the lemma.
\end{proof}

\subsection{A Technical Lemma}

The $\umiip$ has some arbitrary $q$, while
Eq.(\ref{eqn:another:sufficient}) has $q=1$.  We prove here a
technical lemma showing that an $(n,p,q)$-uniform $\miip$ is
equivalent to some $\umiip$ with $q=1$.  We do this by
introducing new random variables.

Let $V$ be a set of variables.  For each variable $Z \in V$, we
create $q$ fresh copies $Z^{(\ell)}$, $\ell=1\ldots q$, called adornments of
$Z$.  If $X$ is a set of variables, then $X^{(\ell)}$ is the set where
all variables are adorned with $\ell$.  We will denote by $h$ an
entropic function over the original variables $V$, and by $h'$ an
entropic function over the adorned variables
$V^{(1)}\cdots V^{(q)}$.  If $F = \sum_i c_i h'(X_i^{(\ell_i)})$
is a linear expression over adorned variables, then its erasure,
$\epsilon(F) \defeq \sum_i c_i h(X_i)$, is defined as the expression
obtained by erasing every adornment; we also say that $F$ is an
adornment of $\epsilon(F)$.  Conversely, if $E = \sum_i c_i h(X_i)$ is
an expression over the original variables, then a constant adornment
is an expression of the form
$E^{(\ell)} = \sum_i c_i h'(X_i^{(\ell)})$, i.e. all terms are adorned
by the same $\ell$; clearly $\epsilon(E^{(\ell)}) = E$.

\begin{lmm} \label{lemma:adornment}
  Let $E_1, \ldots, E_k$ be linear expressions over variables $V$,
  and $F_1, \ldots, F_m$ be linear expressions over adorned variables
  $V^{(1)}, \ldots, V^{(q)}$ for some $q \geq 1$, such that
  (a) each $F_j$ is an adornment of some $E_i$,
  i.e. $\epsilon(F_j)=E_i$, and (b) all constant adornments are
  included, i.e for every $E_i$ and every $\ell$ there exists
  $F_j = E_i^{(\ell)}$.  Then the following two statements are
  equivalent:
  \begin{align}
      \forall h: \ q \cdot h(V) \leq & \max_{i \in [k]} E_i(h) \label{eq:equiv:1}\\
    \forall h': \ h'(V^{(1)}\cdots V^{(q)}) \leq & \max_{j=1,m} F_j(h') \label{eq:equiv:2}
  \end{align}
\end{lmm}

\begin{proof}
  (\ref{eq:equiv:1}) $\Rightarrow$ (\ref{eq:equiv:2}) follows from:
  \begin{align*}
    h'(V^{(1)}\cdots V^{(q)}) &\leq \sum_{\ell = 1, q} h'(V^{(\ell)})\\
    & \leq q \max_{\ell=1,q} h'(V^{(\ell)})\\
    &\leq \max_{\ell=1,q} \max_{i \in [k]} E_i^{(\ell)}(h') && \mbox{(Eq.(\ref{eq:equiv:1}) applied to $V^{(\ell)}$)}\\
 &\leq \max_{j=1,m} F_j(h') && \mbox{(Assumption (b))}
  \end{align*}
  (\ref{eq:equiv:2}) $\Rightarrow$ (\ref{eq:equiv:1}) Let $h$ be an
  entropic function over variables $V$.  That means that there
  exists a joint distribution over random variables $V$ whose
  entropy is given by $h$.  For each random variable $Z$, create $q$
  i.i.d. copies $Z^{(\ell)}$, for $\ell=1,q$, and denote by $h'$ the
  entropy function of the new random variables
  $V^{(1)}, \ldots, V^{(q)}$.  Thus, for any adorned set
  $X^{(\ell)}$, $h'(X^{(\ell)}) = h(X)$, and, if
  $E_i = \epsilon(F_j)$, then $E_i(h)=F_j(h')$.  The claim follows
  from:
  \begin{align*}
      q \cdot h(V) &= h'(V^{(1)}) + \cdots + h'(V^{(q)}) && \mbox{(By $h(V) = h'(V^{(\ell)})$, forall $\ell$)} \\
  &= h'(V^{(1)}\cdots V^{(q)}) && \mbox{(Independence)}\\
  &\leq \max_{j=1,m} F_j(h') && \mbox{(Eq.(\ref{eq:equiv:2}))} \\
  &\leq \max_{i \in [k]} E_i(h) && \mbox{(Assumption (a))}
  \end{align*}
\end{proof}

\subsection{$\umiip \leq_m \bcqca$}

\label{subsec:reduction}

Given an $(n,p,q)$-uniform $\miip$ problem (\ref{eq:equiv:1}),
$q \cdot h(V) \leq \max_i E_i$, where
\begin{align}
  E_i = n \cdot h(U) + \sum_{j=0,p} h(Y_{ij} | X_{ij}), \label{eq:e:i}
\end{align}
we will construct two queries $Q_1, Q_2$ such that $Q_1 \preceq Q_2$
iff condition (\ref{eq:equiv:2}) holds, which we have proven is
equivalent to (\ref{eq:equiv:1}).  Recall that the distinguished
variable $U$ occurs everywhere, except in the sets $X_{i0}$ which, by
definition, are $\emptyset$.  We first substitute everywhere the
single variable $U$ with two variables, $U = U_1U_2$.  This does not
affect the $\miip$, since we can simply treat $U_1U_2$ as a joint
variable.

The query $Q_2$ will have one atom for each term of the expression
$E_i$ in (\ref{eq:e:i}), which is possible because, by uniformity, all
expressions $E_i$ have the same number of terms.  In particular, there
will be an atom $R_j$ corresponding to the term $h(Y_{ij} | X_{ij})$,
however, the number of variables $Y_{ij}$ depends on $i$.  For that
reason, we consider their disjoint union, as follows.  For each
variable $V \in V$ and each $i,j$, let $V^{ij}$ be a fresh copy of
$V$; if $W = \set{V_1, V_2, \ldots}$ is a set, then we denote by
$W^{ij} \defeq \set{V_1^{ij}, V_2^{ij}, \ldots}$.  We define
$\tilde Y_j \defeq \bigcup_{i \in [k]} Y_{ij}^{ij}$, for $j=0,p$, and
$\tilde X_j \defeq \bigcup_{i \in [k]} X_{ij}^{i(j-1)}$, for $j=1,p$, and
$\tilde X_0 \defeq \emptyset$.  We notice that
$|\tilde Y_j| = \sum_i |Y_{ij}|$, the sets
$\tilde Y_0, \ldots, \tilde Y_p$ are disjoint, and, since the chain
condition $X_{ij} \subseteq Y_{i(j-1)}$ holds in (\ref{eq:e:i}), we
also have $\tilde X_j \subseteq \tilde Y_{j-1}$; of course,
$\tilde X_j$ is disjoint from $\tilde Y_j$.  We define $Q_2$ as:
\begin{align*}
  Q_2 = S_1(\tilde U_1) \wedge \cdots \wedge S_n(\tilde U_n) \wedge R_0(\tilde X_0 \tilde Y_0 \tilde Z)\wedge \cdots \wedge R_p(\tilde X_p \tilde Y_p \tilde Z)
\end{align*}
All relation symbols are distinct.  The relations $S_1, \ldots, S_n$
are binary, and $\tilde U_1, \ldots, \tilde U_n$ are disjoint sets of
two fresh variables each, and $\tilde Z$ is a fresh set of $k$ variables.
Thus, each relation $R_j$ has arity $(\sum_i (|X_{ij}| + |Y_{ij}|))+k$.
All variables occurring in $R_j$ are distinct (since
$\tilde X_j \subseteq \tilde Y_{j-1}$, which is disjoint from
$\tilde Y_j$) and they occur in the order that corresponds to the order
$X_{1j}\ldots X_{kj} Y_{1j}\ldots Y_{kj}$ of the original variables,
followed by the $k$ variables $\tilde Z$.  Any two consecutive atoms
$R_{j-1}, R_j$ share the variables $\tilde X_j$ and $\tilde Z$, and
therefore the tree decomposition of $Q_2$ consists of $n$ isolated
components plus a chain:
\begin{align}
  T: &&& \set{\tilde U_1} \ \ \ldots \ \ \set{\tilde U_n} \label{eq:hardness:tree}\\ &&&
\set{\tilde X_0, \tilde Y_0,\tilde Z} \stackrel{\tilde X_1,\tilde Z}{-} \set{\tilde X_1, \tilde Y_1,\tilde Z} \stackrel{\tilde X_2, \tilde Z}{-} \set{\tilde X_2 \tilde Y_2, \tilde Z} \cdots \stackrel{\tilde X_p,\tilde Z}{-} \set{\tilde X_p, \tilde Y_p,\tilde Z}
\nonumber
\end{align}
The query $Q_1$ consists of $q$ isomorphic sub-queries:
\begin{align*}
  Q_1 = Q_1^{(1)} \wedge \cdots \wedge Q_1^{(q)}
\end{align*}
which have disjoint sets of variables.  We describe now the subquery
$Q_1^{(\ell)}$.  Its variables consist of adorned copies
$V^{(\ell)}$ of the variables $V$, and the query is in turn a
conjunction of $k$ sub-queries (which are no longer disjoint):
\begin{align*}
  Q_1^{(\ell)} = Q_{1,1}^{(\ell)} \wedge \cdots \wedge Q_{1,k}^{(\ell)}
\end{align*}
To define its atoms, we need some notations.  Recall that the
distinguished variables $U_1U_2$ occur everywhere (except $X_{i0}$
which is empty).  Then, for every $i$, we define the the following
sequences of variables:
\begin{align*}
    \hat X_{ij}^{(\ell)} &=  \underbrace{U_1^{(\ell)}\cdots U_1^{(\ell)}}_{|X_{1j}|} \cdots \underbrace{X_{ij}^{(\ell)}}_{|X_{ij}|}\cdots \underbrace{U_1^{(\ell)}\cdots U_1^{(\ell)}}_{|X_{kj}|}\\
    \hat Y_{ij}^{(\ell)} &=  \underbrace{U_1^{(\ell)}\cdots U_1^{(\ell)}}_{|Y_{1j}|} \cdots \underbrace{Y_{ij}^{(\ell)}}_{|Y_{ij}|}\cdots \underbrace{U_1^{(\ell)}\cdots U_1^{(\ell)}}_{|Y_{kj}|}\\
    \hat Z_i^{(\ell)} &=  \underbrace{U_1^{(\ell)}}_1 \cdots \underbrace{U_1^{(\ell)}}_{i-1} \underbrace{U_2^{(\ell)}}_{i} \underbrace{U_1^{(\ell)}}_{i+1} \cdots \underbrace{U_1^{(\ell)}}_k
\end{align*}
That is, the length of $\hat X_{ij}^{(\ell)}$ is the same as that of
the concatenation $X_{1j}X_{2j}\ldots X_{kj}$, and has the
distinguished variables $U_1^{(\ell)}$ on all positions except the
positions of $X_{ij}$, where it has the adornment $X_{ij}^{(\ell)}$.
(As a special case, $\hat X_{i0}^{(\ell)} = \emptyset$.)
Note that the length of $\hat
X_{ij}^{(\ell)}$ is independent of $i$, and $|\hat X_{ij}^{(\ell)}|  = |\tilde X_j|$
(the variables from $Q_2$).
Similarly for $\hat Y_{ij}^{(\ell)}$.  The sequence $\hat Z_i$ has
length $k$ and contains $U_1^{(\ell)}$ everywhere except for position
$i$ where it has $U_2^{(\ell)}$.  Then, query $Q_{1,i}^{(\ell)}$ is:
\begin{align*}
    Q_{1,i}^{(\ell)} =&  S_1(U^{(\ell)}) \wedge \cdots \wedge S_n(U^{(\ell)}) \wedge \nonumber\\
  & R_0(\hat X_{i0}^{(\ell)} \hat Y_{i0}^{(\ell)} \hat Z_i^{(\ell)}) \wedge R_1(\hat X_{i1}^{(\ell)} \hat Y_{i1}^{(\ell)} \hat Z_i^{(\ell)}) \wedge \cdots \wedge R_p(\hat X_{ip}^{(\ell)} \hat Y_{ip}^{(\ell)} \hat Z_i^{(\ell)})
\end{align*}
Notice that the variables of the atom $R_j$ are just $Y_{ij}^{(\ell)}$
(which contains $U_1^{(\ell)}, U_2^{(\ell)}$, and $X_{ij}^{(\ell)}$), and
some variables are repeated several times.

We start by noticing that every homomorphism $\varphi : Q_2 \rightarrow Q_1$
must map all atoms in the chain $R_0\cdots R_p$ to the same sub-query
$Q_1^{(\ell)}$: this is because the chain is connected and, if one
atom is mapped to an atom whose variables are adorned with $\ell$,
then all atoms must be mapped to atoms adorned similarly with $\ell$.
We claim something stronger, that $\varphi$ maps the entire chain to the
same sub-query $Q_{1,i}^{(\ell)}$.  This is enforced by the variables
$\tilde Z$ of $Q_2$: if one atoms is mapped to the sub-query $Q_{1,i}^{(\ell)}$, then
$\varphi(\tilde Z_i)=U_2^{(\ell)}$ and $\varphi(\tilde Z_{i'})=U_1^{(\ell)}$ forall $i'\neq i$, implying that all
other atoms are mapped to the same sub-query.

By Theorems \ref{th:sufficient} and~\ref{th:necessary}, we have:
\begin{align}
\hspace{-4mm} Q_1\preceq Q_2  && \mbox{ iff } && \forall h', h'(\vars(Q_1)) \leq & \max_{\varphi \in \hom(Q_2,Q_1)} (E_T\circ\varphi)(h')
\label{eq:hardness:equiv:1}
\end{align}
We claim that the following are equivalent:
\begin{align}
  \forall h', h'(\vars(Q_1)) \leq & \max_{\varphi \in \hom(Q_2,Q_1)} (E_T\circ\varphi)(h')
& \mbox{ iff } \nonumber\\
\forall h, q \cdot h( V) \leq & \max_i E_i(h),
\end{align}
where $E_i$ is given by (\ref{eq:e:i}).
The claim implies the theorem: $Q_1 \preceq Q_2$ iff
$\forall h, h( V) \leq \max_i E_i(h)$.  To prove the claim, we will
use Lemma~\ref{lemma:adornment}, and, for that, we need to verify the
conditions of the lemma.  We start by applying the definition of $E_T$
(Eq.(\ref{eq:et})), where $T$ is the tree decomposition of $Q_2$,
Eq.(\ref{eq:hardness:tree}), and obtain (recall that $\tilde X_0 = \emptyset$):
\begin{align*}
    E_T =  h(\tilde U_1) + \cdots + h(\tilde U_n) + h(\tilde Y_0 \tilde Z) + \sum_{j=1,p} h(\tilde X_j \tilde Y_j \tilde Z|\tilde X_j \tilde Z)
\end{align*}
Consider a homomorphism $\varphi \in \hom(Q_2, Q_1)$. By the previous
discussion, it maps all atoms in the chain to the same subquery
$Q_{1,i}^{(\ell)}$ for some $\ell$ and $i$.  We illustrate it by
showing $Q_2$ and $\varphi(Q_2)$ next to each other:
{
\begin{align*}
    Q_2 &=  S_1(\tilde U_1) \wedge \cdots \wedge S_n(\tilde U_n) \wedge R_0(\tilde X_0 \tilde Y_0 \tilde Z)\wedge \cdots \wedge R_p(\tilde X_p \tilde Y_p \tilde Z)\\
    \varphi(Q_2) &=  S_1(U^{(\ell_1)}) \wedge \cdots \wedge S_n(U^{(\ell_n)})
    \wedge
    R_0(\hat X_{i0}^{(\ell)}\hat Y_{i0}^{(\ell)} \hat Z_i^{(\ell)})\wedge \cdots
                 \wedge R_p(\hat X_{ip}^{(\ell)} \hat Y_{ip}^{(\ell)} \hat Z_i^{(\ell)})
\end{align*}
}
Next, we apply the substitution $\varphi$ to $E_T$ to obtain
$E_T \circ \varphi$.  Since each of the original expressions $E_i$ in
Eq.(\ref{eq:e:i}) was $(n,p,q)$-uniform, $U$ occurs in every set
$Y_{ij}$ and $X_{ij}$ (except for $X_{i0}$).
By construction,
$\hat Z_i^{(\ell)}$ is a sequence consisting only of the variables
$U_1^{(\ell)}$ and $U_2^{(\ell)}$, thus the following set inclusions
hold (except for $\hat Z_i^{(\ell)} \subseteq \hat X_{i0}^{(\ell)}$):
$\hat Z_i^{(\ell)} \subseteq \hat X_{ij}^{(\ell)} \subseteq
\hat Y_{ij}^{(\ell)}$, and we obtain:
\begin{align*}
    E_T \circ \varphi &=  h(U^{(\ell_1)}) + \cdots + h(U^{(\ell_n)}) + h(\hat Y_{i0}^{(\ell)} \hat Z_i^{(\ell)})+
\sum_{j=1,p} h(\hat X_{ij}^{(\ell)} \hat Y_{ij}^{(\ell)} \hat Z_i^{(\ell)}|\hat X_{ij}^{(\ell)} \hat Z_i^{(\ell)})\\
                      &=  h(U^{(\ell_1)}) + \cdots + h(U^{(\ell_n)}) + h(Y_{i0}^{(\ell)}) + \sum_{j=1,p} h(Y_{ij}^{(\ell)}|X_{ij}^{(\ell)})
\end{align*}
Clearly its erasure is precisely $\epsilon(E_T \circ \varphi) = E_i$
from Eq.(\ref{eq:e:i}) (recall that $X_{i0}=\emptyset$), proving
condition (a) of the lemma.  Conversely, for each adornment
$E_i^{(\ell)}$ there exists a homomorphism
$\varphi : Q_2 \rightarrow Q_1$ such that
$E_T \circ \varphi = E_i^{(\ell)}$, which proves condition (b),
completing the proof of Th.~\ref{th:miip:to:bcqca}.

\begin{ex}
    We will illustrate the main idea of our reduction from $\miip$ to $\bcqca$ by reducing an $\iip$ to a $\bcqca$.
    Consider the following $\iip$:\footnote{This $\iip$ holds, but our goal is not to check
      it, but to reduce it to $\bcqca$.}
    \begin{align}
        0 &\leq h(X_1) + 2h(X_2) + h(X_3) - h(X_1X_2)-h(X_2X_3) \label{eq:ex:iti:1}
    \end{align}
    We start by rewriting the inequality as:
    \begin{align}
        3h(X_1X_2X_3) &\leq h(X_1)+h(X_2)+h(X_2) + h(X_3) \nonumber \\
      & +h(X_1X_2X_3)+h(X_3|X_1X_2)+h(X_1|X_2X_3)\label{eq:ex:iti:2}
    \end{align}
    From the right-hand-side we derive two queries $Q_1,Q_2$.  Query $Q_1$ has 9
    variables, $X_i^{(\ell)}$, $i=1,3$, $\ell=1,3$, while $Q_2$ has 13
    variables:
    \begin{align*}
      Q_1 &= Q_1^{(1)} \wedge Q_1^{(2)} \wedge Q_1^{(3)} \\
      \ell=1,3:\ \   Q_1^{(\ell)} & = S_1(X_1^{(\ell)})\wedge S_2(X_2^{(\ell)})\wedge S_3(X_2^{(\ell)}) \wedge S_4(X_3^{(\ell)})\\
     & \wedge R_1(X_1^{(\ell)},X_2^{(\ell)},X_3^{(\ell)}) \wedge  R_2(X_1^{(\ell)},X_2^{(\ell)},X_1^{(\ell)},X_2^{(\ell)},X_3^{(\ell)})
     \wedge R_3(X_2^{(\ell)},X_3^{(\ell)},X_1^{(\ell)},X_2^{(\ell)},X_3^{(\ell)}) \\
      Q_2 &= S_1(U_1)\wedge S_2(U_2) \wedge S_3(U_3) \wedge S_4(U_4) \\
          & \wedge R_1(Y_1^0, Y_2^0, Y_3^0) \wedge R_2(Y_1^0, Y_2^0, Y_1^1, Y_2^1, Y_3^1)
          \wedge R_3(Y_2^1, Y_3^1, Y_1^2, Y_2^2, Y_3^2)
    \end{align*}
    We apply Eq.(\ref{eqn:another:sufficient}) to $Q_1$ and $Q_2$.
    $\td(Q_2)$ has a single tree because $Q_2$ is acyclic.  $Q_1$ has 3
    connected components, and $Q_2$ has 5, therefore there are $3^5$
    homomorphisms $Q_2 \rightarrow Q_1$.
    Eq.(\ref{eqn:another:sufficient}) becomes:
    \begin{multline}
  h(X_1^{(1)}X_2^{(1)}X_3^{(1)}X_1^{(2)}X_2^{(2)}X_3^{(2)}X_1^{(3)}X_2^{(3)}X_3^{(3)})\leq\\
   \max_{\ell_1,\ldots,\ell_5 = 1,3}  (h(X_1^{(\ell_1)})+h(X_2^{(\ell_2)})+h(X_2^{(\ell_3)}) + h(X_3^{(\ell_4)})+\\
   h(X_1^{(\ell_5)}X_2^{(\ell_5)}X_3^{(\ell_5)})+h(X_3^{(\ell_5)}|X_1^{(\ell_5)}X_2^{(\ell_5)})+h(X_1^{(\ell_5)}|X_2^{(\ell_5)}X_3^{(\ell_5)}))\label{ex:iip-to-bcqca}
    \end{multline}
  By Theorems~\ref{th:sufficient} and~\ref{th:necessary} and because $Q_2$ is acyclic,
  the $\mii$~\eqref{ex:iip-to-bcqca} holds for all entropic $h$ if and only if
  $Q_1 \preceq Q_2$.
  Moreover Lemma~\ref{lemma:adornment} proves that this $\mii$ is
  equivalent to the $\ii$ in Eq.(\ref{eq:ex:iti:2}), completing the
  reduction from~\eqref{eq:ex:iti:1} to the $\bcqca$ instance $Q_1 \preceq Q_2$.  Our example only illustrated the reduction from $\iip$;
  Lemma~\ref{lemma:n:n:p} addresses the challenges introduced by
  $\miip$.
  \end{ex}

\section{Proving Decidability of a Novel Class of $\bcqc$}

In this section, we aim to prove the decidability of our novel class of $\bcqc$ that was
presented  earlier in Section~\ref{subsec:results:decidable}.
In particular, we prove Theorems~\ref{th:decidable} and~\ref{th:product:normal:databases}.
The proofs of both theorems rely on Theorem~\ref{th:simple}, which in turn relies on
Lemma~\ref{lemma:h:domination}.
Therefore, we first prove that lemma in Section~\ref{sec:domination},
and then we prove both theorems in Section~\ref{sec:decidability}.

\subsection{Proof of Lemma~\ref{lemma:h:domination}}

\label{sec:domination}

\begin{lmm}[Re-statement of Lemma~\ref{lemma:h:domination}]
    Let
    $h : 2^{[n]} \rightarrow \R_+$ be any polymatroid.  Then there
    exists a normal polymatroid $h' \in \calN_n$ with the following
    properties:
    \begin{enumerate}
    \item \label{item:domination:1:app} $h'(X) \leq h(X)$, forall $X \subseteq [n]$,
    \item \label{item:domination:2:app} $h'([n]) = h([n])$,
    \item \label{item:domination:3:app} $h'(\set{i}) = h(\set{i})$, forall $i \in [n]$.
    \end{enumerate}
    In addition, there exists a modular function $h''\in \calM_n$ that
    satisfies conditions (\ref{item:domination:1:app}) and
    (\ref{item:domination:2:app}).
  \end{lmm}

  Before we prove the lemma, we need some preliminaries.
Recall that we blurred the distinction between a set of $n$ variables
$ V$ and the set $[n]$.  In this section we will use only $[n]$.  Let
$L \defeq  2^{[n]}$ be the lattice of subsets of $[n]$.  Given a function
$h : L \rightarrow \R_+$, we define its dual $g : L \rightarrow \R_+$
as its M\"obius inverse~\cite{DBLP:conf/pods/KhamisNS16}:
\begin{align}
\forall X: && h(X) = & \sum_{Y: Y \supseteq X} g(Y),
&
g(X) = & \sum_{Y: Y \supseteq X} (-1)^{|Y-X|} h(Y) \label{eq:hg}
\end{align}
For any set $S \subseteq L$ we define:
\begin{align}
  g(S) \defeq & \sum_{X \in S} g(X) \label{eq:gs}
\end{align}
Notice that $g(L) = h(\emptyset)$.
\begin{fact} \label{fact:normal} Let $h : L \rightarrow \R_+$ be any
  function.  Then $h$ is a normal polymatroid (i.e. $h \in \calN_n$) iff its
  M\"obius inverse $g$ satisfies: $g(L) =0$, $g([n]) \geq 0$ and
  $g(X) \leq 0$ forall $X \neq [n]$.
\end{fact}
\begin{proof}
  First we check that the M\"obius inverse of a step function $h_W$
  satisfies the required properties, for $W\subsetneq V$:
  \begin{align*}
    h_W(X) = &
               \begin{cases}
                 0 & \mbox{if $X \subseteq W$} \\
                 1 & \mbox{otherwise}
               \end{cases}
                   &
                     g_W(X) = &
                                \begin{cases}
                                  1 & \mbox{if $X =  V$} \\
                                  -1 & \mbox{if $X = W$} \\
                                  0 & \mbox{otherwise}
                                \end{cases}
  \end{align*}
  The converse follows by observing that every $g$ with the required
  properties is a non-negative linear combination of the $g_W$'s:
  $g = \sum_{W \subsetneq [n]} (-g(W))\cdot g_W$, therefore
  $h = \sum_{W \subsetneq [n]} (-g(W))\cdot h_W$.
\end{proof}
Fact~\ref{fact:normal} can be used, for example, to show that the
parity function $h$ (Example~\ref{ex:parity}) is not normal.  Indeed,
it's M\"obius inverse given by Eq.\eqref{eq:hg} at $\emptyset$ is
$g(\emptyset)=1$, which implies that $h$ is not normal.
Fact.~\ref{fact:normal} will be our key ingredient to prove
Lemma~\ref{lemma:h:domination}: in order to construct the required
normal polymatroid $h'$, we will instead construct its dual $g'$ and
check that it satisfies the conditions in Fact.~\ref{fact:normal}.  We
also need a technical lemma:

\begin{lmm} \label{lemma:max}
  Let $a_1, \ldots, a_n \geq 0$ be $n$ non-negative numbers.  Define:
  \begin{align}
    h(X) = & \max \setof{a_i}{i \in X}  \label{eq:h:max}
  \end{align}
Then $h$ is a normal polymatroid.
\end{lmm}

\begin{proof}
  Assume w.l.o.g. $a_1 \leq a_2 \leq \cdots \leq a_n$ and define
  $\delta_i = a_{i+1} - a_i$ for $i = 0, 1, \ldots, n-1$, where
  $a_0=0$.  Define $g : 2^{[n]} \rightarrow \R$:
  \begin{align*}
    g(X) \defeq &
             \begin{cases}
               a_n & \mbox{ if $X = [n]$} \\
               - \delta_i & \mbox{ if $X=[i]$, ($= \set{1,2,\ldots,i}$), for some $i < n$} \\
               0 & \mbox{ otherwise}
             \end{cases}
  \end{align*}
We check that $g$ is the dual of $h$ by verifying:
\begin{align*}
  h(X) = & a_{\max(X)} = - \delta_{\max(X)} - \delta_{\max(X)+1} - \cdots -\delta_{n-1} + a_n = \sum_{Y: X \subseteq Y} g(Y)
\end{align*}
We assumed above that $\max(\emptyset) =0$.
\end{proof}

Finally, we need to recall the definitions of the {\em conditional
  entropy} and the {\em conditional mutual information}:
\begin{align}
  h(i|X) = & h(\set{i}\cup X) - h(X)\nonumber\\ I(i;j|X) = & h(\set{i}\cup X) +  h(\set{j}\cup Y) - h(X) - h(\set{i,j} \cup X)
\label{eq:cond:i}
\end{align}
and observe that, denoting
$[X,Y] \defeq \setof{Z}{X \subseteq Z \subseteq Y}$, we have:
\begin{align}
  h(X)         = & g([X,[n]]) \label{eq:g:to:h} \\
  h(i|X)       = & -g([X,[n]-\set{i}]) \label{eq:g:to:cond} \\
  I(i;j|X)     = & -g([X,[n]-\set{i,j}]) \label{eq:g:to:i}
\end{align}

We are now ready to prove Lemma~\ref{lemma:h:domination}.
\bp[Proof of Lemma~\ref{lemma:h:domination}]
We will proceed by
induction on $n$.  Split the lattice $L = 2^{[n]}$ into two disjoint
sets $L = L_1 \cup L_2$ where:
\begin{align*}
L_1 = & [\emptyset, [n-1]] & L_2 = [\set{n}, [n]]
\end{align*}
In other words, $L_1$ contains all subsets without $n$, while $L_2$
contains all subsets that include $n$.  Then:

\begin{itemize}
\item $g(L_2) = h(\set{n})$.  It follows $g(L_1) = - h(\set{n})$.
\item Subtract $h(\set{n})$ from $g([n])$ and add it to $g([n-1])$,
  and call $g_1, g_2$ the new functions on $L_1, L_2$ respectively.
  Formally:
  \begin{align*}
    g_1(X) = &
               \begin{cases}
                 g([n-1]) + h(\set{n}) & \mbox{ if $X=[n-1]$}\\
                 g(X) & \mbox{ if $X \subset [n-1]$}
               \end{cases}
\\
    g_2(X\cup \set{n}) = &
               \begin{cases}
                 g([n]) - h(\set{n}) & \mbox{ if $X=[n-1]$}\\
                 g(X \cup \set{n}) & \mbox{ if $X \subset [n-1]$}
               \end{cases}
  \end{align*}
  Notice that $g_1(L_1)=0$ and $g_2(L_2)=0$.
\item One can check that the dual\footnote{Strictly speaking we cannot
    talk about the dual of $g_2$ because we defined the dual only for
    functions $g : 2^{[m]} \rightarrow \R$.  However, with some abuse,
    we identify the lattice $L_2$ with $2^{[n-1]}$, and in that sense
    the dual of $g_2 : L_2 \rightarrow \R$ is a function
    $h_2 : L_2 \rightarrow \R$.} of $g_2$ is the {\em conditional}
  polymatroid\footnote{Proof:
    $h_2(X) = \sum_{Y: X \subseteq Y \subseteq [n]} g_2(Y) = \sum_{Y:
      X \subseteq Y \subseteq [n]} g(Y) - h(\set{n}) = h(X) -
    h(\set{n}) = h(X|\set{n})$.}, defined as $h_2:L_2 \rightarrow \R$:
  \begin{align*}
    \forall X \in L_2:     h_2(X) \defeq & h(X|\set{n})
  \end{align*}
\item We apply induction to $h_2$ and obtain a normal polymatroid
  $h_2' : L_2 \rightarrow \R$ satisfying properties (\ref{item:domination:1}),
  (\ref{item:domination:2}), and (\ref{item:domination:3}) that are stated in Lemma~\ref{lemma:h:domination}:
  \begin{align*}
    h_2'(X) \leq h_2(X) = & h(X|\set{n}) \\
    h_2'([n]) = h_2([n]) = & h([n]|\set{n}) \\
    h_2'(\set{i,n}) = h_2(\set{i,n}) = & h(\set{i}|\set{n}) & \mbox{ since $\set{i,n}$ is an atom in $L_2$}
  \end{align*}
  Notice that $h_2'(\set{n})= 0$, since $\set{n}$ is the bottom of
  $L_2$.  Let $g_2'$ be the dual of $h_2'$, thus $g_2'(X) \leq 0$
  forall $X \neq [n]$ (because $h_2'$ is normal).
\item One can check that the dual of $g_1$ is the
  function\footnote{Proof:
    \begin{align*}
      h_1(X) = & \sum_{Y: X \subseteq Y \subseteq [n-1]} g_1(Y)
                 = h(\set{n}) + \sum_{Y: X \subseteq Y \subseteq [n-1]} g(Y)\\
      =& h(\set{n}) + \sum_{Y: X \subseteq Y \subseteq [n]} g(Y) - \sum_{Y: X \subseteq Y \subseteq [n-1]} g(Y\cup\set{n})\\
      = & h(\set{n}) + h(X) - h(X \cup \set{n}) = I(X;\set{n})
    \end{align*}
  }
  \begin{align*}
     h_1(X) \defeq & I(X;\set{n})
  \end{align*}
  This is no longer a polymatroid.  Instead, here we use
  Lemma~\ref{lemma:max} and define the normal polymatroid
  $h_1' : L_1 \rightarrow \R$:
  \begin{align*}
  h_1'(X) \defeq & \max_{i\in X} h_1(\set{i}) = \max_{i \in X} I(\set{i};\set{n})
  \end{align*}
  Let $g_1': L_1 \rightarrow \R$ be its dual.  Thus, $g_1'(X) \leq 0$
  forall $X \neq [n-1]$, and
  $g_1'([n-1]) = \max_{i \in [n-1]} I(\set{i}; \set{n})$.
\item We combine $g_1', g_2'$ into a single function
  $g' : L (=L_1 \cup L_2) \rightarrow \R$ as follows.  $g'$ agrees
  with $g_1'$ on $L_1$ and with $g_2'$ on $L_2$ except that we
  subtract a mass of $h(\set{n})$ from $g_1'([n-1])$ and add it to
  $g_2'([n])$.  Formally:
  \begin{align*}
    g'(X) \defeq &
                   \begin{cases}
                     g_2'([n]) + h(\set{n}) & \mbox{if $X=[n]$}\\
                     g_1'([n-1]) - h(\set{n}) & \mbox{if $X = [n-1]$}  \\
                     g_1'(X) & \mbox{if $X \in L_1, X \neq [n-1]$}\\
                     g_2'(X) & \mbox{if $X \in L_2, X \neq [n]$}
                   \end{cases}
  \end{align*}
\item We claim that for every $X \neq [n]$, $g'(X) \leq 0$.  This is
  obvious for all cases above (since $g_1', g_2'$ are normal), except
  when $X = [n-1]$.  Here we check:
  $g'([n-1]) = g'_1([n-1]) - h(\set{n}) = \max_{i \in [n-1]}
  I(\set{i};\set{n}) - h(\set{n}) \leq 0$ because
  $I(\set{i};\set{n}) \leq h(\set{n})$.
\item Denote $h': L (=L_1 \cup L_2) \rightarrow \R$ the dual of $g'$;
  we have established that $h'$ is a normal polymatroid.  The
  following hold:
  \begin{align}
\forall x \in L_1:
    h'(X) = & \sum_{Y: X \subseteq Y \subseteq [n]} g'(Y) \nonumber\\= &
 \sum_{Y: X \subseteq Y \subseteq [n-1]} g'(Y) +  \sum_{Y: X \subseteq Y \subseteq [n-1]} g'(Y\cup\set{n})\nonumber\\
 = & \sum_{Y: X \subseteq Y \subseteq [n-1]} g_1'(Y) +  \sum_{Y: X \subseteq Y \subseteq [n-1]} g_2'(Y\cup\set{n})
 \nonumber\\= &h_1'(X) + h_2'(X \cup \set{n}) \label{eq:aux1}\\
\forall X \in L_2:
    h'(X) = & \sum_{Y: X \subseteq Y \subseteq [n]} g'(Y) \nonumber\\=& h(\set{n})  + \sum_{Y: X \subseteq Y \subseteq [n]} g_2'(Y) = h(\set{n}) + h_2'(X) \label{eq:aux2}
  \end{align}
\item We check that $h'$ satisfies properties
  (\ref{item:domination:1}), (\ref{item:domination:2}), and
  (\ref{item:domination:3}) that are stated in Lemma~\ref{lemma:h:domination}:
  \begin{align*}
\forall X \in L_1:
 h'(X) = & h_1'(X) + h_2'(X \cup \set{n})  &\mbox{ by Eq.(\ref{eq:aux1})} \\
    \leq & h_1(X) + h_2(X \cup \set{n}) \\= & I(X;\set{n}) + h(X | \set{n}) = h(X)\\
\forall X \in L_2:
 h'(X) = & h(\set{n}) + h_2'(X)  &\mbox{ by Eq.(\ref{eq:aux2})}  \\
    \leq & h(\set{n}) +  h_2(X) \\=& h(\set{n}) + h(X | \set{n}) = h(X)\\
 h'([n]) = &  h(\set{n}) + h_2'([n])  &\mbox{ by Eq.(\ref{eq:aux2})}  \\
         = & h(\set{n}) + h_2([n]) \\=& h(\set{n}) + h([n]|\set{n}) = h([n]) \\
\forall i \in [n-1]:
  h'(\set{i}) = & h_1'(\set{i}) + h_2'(\set{i,n}) &\mbox{ by Eq.(\ref{eq:aux1})}\\
             =  & h_1(\set{i}) + h_2(\set{i,n}) \\=& I(\set{i};\set{n}) + h(\set{i}|\set{n}) = h(\set{i}) \\
    h'(\set{n}) = & h(\set{n}) + h_2'(\set{n}) = h(\set{n}) + 0 &\mbox{ by Eq.(\ref{eq:aux2})}
  \end{align*}
\end{itemize}

This completes the proof.
\ep

We illustrate the main idea of the above proof using the following example,
which is based on the parity
function, also shown in Fig.~\ref{fig:example}.

\begin{figure*}[th!]
\centering

\tikzstyle{lattice-node} = []
\tikzstyle{lattice-edge} = []
\newcommand{\hgpair}[2]{$({\color{DarkGreen}#1}, {\color{red}#2})$}
\newcommand{\hglabel}[3]{
    \node[scale = .7] at ($(#1)+(0, -.3)$){\hgpair{#2}{#3}};
}
\newcommand{\sublattice}[1]{
    \begin{scope}[shift={(-2,1)}]
        \fill[#1] (.2,-.2)--(2.3,-1.2)--(2.3, -3.2)--(.2,-2.2)--cycle;
    \end{scope}
}

\begin{tikzpicture}[scale=1.3, every node/.style={transform shape}]


    \begin{scope}[shift={(2,-2)}]
        \node[lattice-node] at (0, -1.5) (0) {$\emptyset$};
        \node[lattice-node] at (-1.5, -.5) (x) {$1$};
        \node[lattice-node] at (0, -.5) (y) {$2$};
        \node[lattice-node] at (1.5, -.5) (z) {$3$};
        \node[lattice-node] at (-1.5, .5) (xy) {$12$};
        \node[lattice-node] at (0, .5) (xz) {$13$};
        \node[lattice-node] at (1.5, .5) (yz) {$23$};
        \node[lattice-node] at (0, 1.5) (xyz) {$123$};

        \path[lattice-edge] (0) edge (x);
        \path[lattice-edge] (0) edge (y);
        \path[lattice-edge] (0) edge (z);
        \path[lattice-edge] (x) edge (xy);
        \path[lattice-edge] (x) edge (xz);
        \path[lattice-edge] (y) edge (xy);
        \path[lattice-edge] (y) edge (yz);
        \path[lattice-edge] (z) edge (xz);
        \path[lattice-edge] (z) edge (yz);
        \path[lattice-edge] (xy) edge (xyz);
        \path[lattice-edge] (xz) edge (xyz);
        \path[lattice-edge] (yz) edge (xyz);

        \hglabel{0}  {0}{+1}
        \hglabel{x}  {1}{-1}
        \hglabel{y}  {1}{-1}
        \hglabel{z}  {1}{-1}
        \hglabel{xy} {2}{0}
        \hglabel{xz} {2}{0}
        \hglabel{yz} {2}{0}
        \hglabel{xyz}{2}{+2}

        \node[scale=1.5] at (0, -2.25) {\hgpair{h}{g}};
    \end{scope}

    \begin{scope}[shift={(8,-2)}]
        \sublattice{yellow!50!white}
        \node[gray!60!white, scale = 1.5] at (-.75, -.5) {$L_1$};
        \begin{scope}[shift={(1.5,1.05)}]
            \sublattice{cyan!30!white}
            \node[gray!60!white, scale = 1.5] at (-.75, -.5) {$L_2$};
        \end{scope}
        \node[rotate = -25] at (-1, -1.8) {\hgpair{h_1}{g_1}};
        \node[rotate = -25] at (1, 1.4) {\hgpair{h_2}{g_2}};

        \node[lattice-node] at (0, -1.5) (0) {$\emptyset$};
        \node[lattice-node] at (-1.5, -.5) (x) {$1$};
        \node[lattice-node] at (0, -.5) (y) {$2$};
        \node[lattice-node] at (1.5, -.5) (z) {$3$};
        \node[lattice-node] at (-1.5, .5) (xy) {$12$};
        \node[lattice-node] at (0, .5) (xz) {$13$};
        \node[lattice-node] at (1.5, .5) (yz) {$23$};
        \node[lattice-node] at (0, 1.5) (xyz) {$123$};

        \path[lattice-edge] (0) edge (x);
        \path[lattice-edge] (0) edge (y);
        \path[lattice-edge] (0) edge (z);
        \path[lattice-edge] (x) edge (xy);
        \path[lattice-edge] (x) edge (xz);
        \path[lattice-edge] (y) edge (xy);
        \path[lattice-edge] (y) edge (yz);
        \path[lattice-edge] (z) edge (xz);
        \path[lattice-edge] (z) edge (yz);
        \path[lattice-edge] (xy) edge (xyz);
        \path[lattice-edge] (xz) edge (xyz);
        \path[lattice-edge] (yz) edge (xyz);

        \hglabel{0}  {0}{+1}
        \hglabel{x}  {0}{-1}
        \hglabel{y}  {0}{-1}
        \hglabel{z}  {0}{-1}
        \hglabel{xy} {1}{+1}
        \hglabel{xz} {1}{0}
        \hglabel{yz} {1}{0}
        \hglabel{xyz}{1}{+1}
    \end{scope}

    \begin{scope}[shift={(8,-8)}]
        \sublattice{yellow!50!white}
        \node[gray!60!white, scale = 1.5] at (-.75, -.5) {$L_1$};
        \begin{scope}[shift={(1.5,1.05)}]
            \sublattice{cyan!30!white}
            \node[gray!60!white, scale = 1.5] at (-.75, -.5) {$L_2$};
        \end{scope}
        \node[rotate = -25] at (-1, -1.8) {\hgpair{h_1'}{g_1'}};
        \node[rotate = -25] at (1, 1.4) {\hgpair{h_2'}{g_2'}};

        \node[lattice-node] at (0, -1.5) (0) {$\emptyset$};
        \node[lattice-node] at (-1.5, -.5) (x) {$1$};
        \node[lattice-node] at (0, -.5) (y) {$2$};
        \node[lattice-node] at (1.5, -.5) (z) {$3$};
        \node[lattice-node] at (-1.5, .5) (xy) {$12$};
        \node[lattice-node] at (0, .5) (xz) {$13$};
        \node[lattice-node] at (1.5, .5) (yz) {$23$};
        \node[lattice-node] at (0, 1.5) (xyz) {$123$};

        \path[lattice-edge] (0) edge (x);
        \path[lattice-edge] (0) edge (y);
        \path[lattice-edge] (0) edge (z);
        \path[lattice-edge] (x) edge (xy);
        \path[lattice-edge] (x) edge (xz);
        \path[lattice-edge] (y) edge (xy);
        \path[lattice-edge] (y) edge (yz);
        \path[lattice-edge] (z) edge (xz);
        \path[lattice-edge] (z) edge (yz);
        \path[lattice-edge] (xy) edge (xyz);
        \path[lattice-edge] (xz) edge (xyz);
        \path[lattice-edge] (yz) edge (xyz);

        \hglabel{0}  {0}{0}
        \hglabel{x}  {0}{0}
        \hglabel{y}  {0}{0}
        \hglabel{z}  {0}{-1}
        \hglabel{xy} {0}{0}
        \hglabel{xz} {1}{0}
        \hglabel{yz} {1}{0}
        \hglabel{xyz}{1}{+1}
    \end{scope}

    \begin{scope}[shift={(2,-8)}]
        \node[lattice-node] at (0, -1.5) (0) {$\emptyset$};
        \node[lattice-node] at (-1.5, -.5) (x) {$1$};
        \node[lattice-node] at (0, -.5) (y) {$2$};
        \node[lattice-node] at (1.5, -.5) (z) {$3$};
        \node[lattice-node] at (-1.5, .5) (xy) {$12$};
        \node[lattice-node] at (0, .5) (xz) {$13$};
        \node[lattice-node] at (1.5, .5) (yz) {$23$};
        \node[lattice-node] at (0, 1.5) (xyz) {$123$};

        \path[lattice-edge] (0) edge (x);
        \path[lattice-edge] (0) edge (y);
        \path[lattice-edge] (0) edge (z);
        \path[lattice-edge] (x) edge (xy);
        \path[lattice-edge] (x) edge (xz);
        \path[lattice-edge] (y) edge (xy);
        \path[lattice-edge] (y) edge (yz);
        \path[lattice-edge] (z) edge (xz);
        \path[lattice-edge] (z) edge (yz);
        \path[lattice-edge] (xy) edge (xyz);
        \path[lattice-edge] (xz) edge (xyz);
        \path[lattice-edge] (yz) edge (xyz);

        \hglabel{0}  {0}{0}
        \hglabel{x}  {1}{0}
        \hglabel{y}  {1}{0}
        \hglabel{z}  {1}{-1}
        \hglabel{xy} {1}{-1}
        \hglabel{xz} {2}{0}
        \hglabel{yz} {2}{0}
        \hglabel{xyz}{2}{+2}

        \node[scale=1.5] at (0, -2.25) {\hgpair{h'}{g'}};
    \end{scope}

    \draw[->, >=stealth, line width = 4] (4.25, -2) -- (5.75, -2);
    \draw[->, >=stealth, line width = 4] (5.75, -8) -- (4.25, -8);
    \draw[->, >=stealth, line width = 4] (8,-4.25) -- (8,-5.75);

\end{tikzpicture}
\caption{Illustration of Example~\ref{ex:parity:g}.
    The top-left corner shows the lattice $L=2^{[3]}$, where each node is annotated with a pair $(h, g)$, which are the values of the original $h$ and $g$ of the parity function.
    The bottom-left corner shows the final $(h', g')$ satisfying the conditions of Lemmaa~\ref{lemma:h:domination}, including normality.}
\label{fig:example}
\end{figure*}

\begin{ex}
  \label{ex:parity:g}
  Recall the parity function, and it's M\"obius inverse:
  \begin{eqnarray*}
    h(\emptyset)=0,\quad  h(1)=h(2)=h(3)=1, \\ h(12)=h(13)=h(23)=h(123)=2,\\
    g(123)=2,\quad g(12)=g(13)=g(23)=0,\\ g(1)=g(2)=g(3)=-1,\quad g(\emptyset)=+1.
  \end{eqnarray*}
  The parity function is not normal, because $g(\emptyset) > 0$.  The
  lattice $L=2^{[3]}$ is shown on the top left of
  Fig.~\ref{fig:example}.

  We partition $L = L_1 \cup L_2$, and move a mass of $+1$ from
  $g(123)$ to $g(12)$ (so that both lattices are balanced,
  i.e. $g_1(L_1)=0, g_2(L_2)=0$); this is show in the top right. We
  compute $h_1, h_2$ from $g_1, g_2$.  Notice that $h_1$ is not a
  polymatroid.

  We define $h_1'$ using the max-construction (Lemma~\ref{lemma:max})
  and define $h_2' = h_2$ (since it is already normal).  Notice that
  $h_1'=0$.  From $h_1', h_2'$ we compute $g_1', g_2'$.  Lower right
  of Figure~\ref{fig:example}.

  Finally we combine the two functions $g_1', g_2'$ and obtain the
  functions $h', g'$ shown in the lower left.  $h'$ is normal, is
  dominated by $h$, and agrees with $h$ on the atoms and the maximum
  element of the lattice.
\end{ex}

\subsection{Proof of Theorem~\ref{th:decidable} and~\ref{th:product:normal:databases}}

\label{sec:decidability}

\begin{thm}[Re-statement of Theorem~\ref{th:decidable}] Checking $Q_1 \preceq Q_2$ is
decidable in exponential time when $Q_2$ is chordal and admits a simple
junction tree.
\end{thm}

\begin{thm}[Re-statement of Theorem~\ref{th:product:normal:databases}] Let $Q_2$ be chordal,
\begin{itemize}
    \item[(i)] If $Q_2$ admits a totally disconnected junction tree,
        then $Q_1 \not\preceq Q_2$ if and only if there is a product witness.
    \item[(ii)] If $Q_2$ admits a simple junction tree, then $Q_1 \not\preceq Q_2$
    if and only if there exists a normal witness.
\end{itemize}
\end{thm}

In order to prove the above theorems, we need a technical lemma.
In Theorem~\ref{th:necessary} we proved that, when $Q_2$ is acyclic and
Eq.(\ref{eqn:another:sufficient}) fails, then $Q_1 \not\preceq Q_2$.
Our next lemma is a variation of that result: when $Q_2$ is chordal and
Eq.(\ref{eqn:another:sufficient}) fails on a normal entropic function,
then $Q_1 \not\preceq Q_2$. Recall that a junction tree is a special tree decomposition.

\begin{lmm} \label{lemma:necessary:chordal} Let $Q_2$ be chordal and admit a simple junction tree $T$, and let $E_T$ be its linear
  expression, Eq.(\ref{eq:et}).  If there exists a normal entropic
  function $h$ (i.e. with a non-negative I-measure) such that:
  \begin{align}
    h(\vars(Q_1)) & >  \max_{\varphi \in \hom(Q_2,Q_1)} (E_T\circ\varphi)(h)
\label{eq:necessary:chordal}
  \end{align}
  then there exists a database instance $\calD$ such that
  $|\hom(Q_1,\calD)| > |\hom(Q_2,\calD)|$.
\end{lmm}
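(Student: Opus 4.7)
The plan is to adapt the proof of Theorem~\ref{th:necessary} to the chordal-with-simple-junction-tree setting, with two key substitutions. First, instead of invoking Chan and Yeung's density result to realize $h$ as a totally uniform group-characterizable relation, I would exploit the assumption that $h$ is normal and realize it directly as the entropy of a normal relation $P \subseteq D^{\vars(Q_1)}$, built as a domain product of step-function relations $P_W$ in the sense of Definition~\ref{def:normal:relation}. Scaling $h$ to $k\cdot h$ for a sufficiently large integer $k$ produces an arbitrary gap in~\eqref{eq:necessary:chordal} while preserving normality, since $\calN_n$ is closed under addition. After annotating $P$'s columns with their variable names and forming $\calD = \Pi_{Q_1}(P)$, the natural erasure map $e : \calD \to Q_1$ satisfies $e \circ f = \mathrm{id}_{\vars(Q_1)}$ for every $f \in P$, and partitions $\hom(Q_2, \calD) = \bigcup_{\varphi \in \hom(Q_2,Q_1)} \hom_\varphi(Q_2, \calD)$ exactly as in the proof of Theorem~\ref{th:necessary}.

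The bulk of the argument then aims to upper-bound each $|\hom_\varphi(Q_2, \calD)|$ by the product $\prod_{t \in \nodes(T)} \deg_{\Pi_{\varphi|_{\chi(t)}}(P)}(\chi(t) \mid \chi(t) \cap \chi(\parent(t)))$, which by Lemma~\ref{lemma:sufficient:3} translates into $(E_T\circ \varphi)(h)$. Summed over all $\varphi$ and combined with the chosen gap, this yields $|\hom(Q_2,\calD)| < |P| \leq |\hom(Q_1,\calD)|$, witnessing $Q_1 \not\preceq Q_2$. The join-bound step itself is essentially identical to Theorem~\ref{th:necessary}: order the nodes of $T$ so that each parent precedes its children, and apply Lemma~\ref{lemma:join:uniform} inductively, noting that each projection $\Pi_{\varphi|_{\chi(t)}}(P)$ inherits the relevant uniformity along $\chi(t)$ from $P$.

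The main obstacle, and the only genuine departure from Theorem~\ref{th:necessary}, is the locality property
\[
\hom_{\varphi|_{\chi(t)}}(Q_t, \calD) \subseteq \Pi_{\varphi|_{\chi(t)}}(P).
\]
In the acyclic case this was immediate because $\chi(t) = \vars(B)$ for some single atom $B \in \atoms(Q_2)$, so a homomorphism of $B$ into $\calD$ came, by construction of $\calD$, from a single row of $P$. When $Q_2$ is only chordal, $\chi(t)$ is a maximal clique in the Gaifman graph and need not be covered by any single atom; however, every pair $X,Y \in \chi(t)$ still lies in some atom, so each pair of values $(g_0(X), g_0(Y))$ produced by a homomorphism $g_0$ of $Q_t$ is the projection of \emph{some} row of $P$ onto $(\varphi(X), \varphi(Y))$. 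The nontrivial claim is that these pairwise witnesses can be promoted to a single global row $f \in P$ with $g_0 = \Pi_{\varphi|_{\chi(t)}}(f)$. This is where normality enters essentially: $P$ factors as a domain product $\bigotimes_i P_{W_i}$, a row of $P$ is specified by an independent binary choice in each factor, and the rigid structure of step-function relations (each $P_{W_i}$ has only two tuples, agreeing on $W_i$) forces pairwise agreement across the clique to extend to a consistent global choice. The simplicity of $T$ is used here to ensure that the single shared variable between adjacent bags does not over-constrain the glued row during the subsequent inductive join bound.

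Once the locality property is established, the rest of the proof mirrors Theorem~\ref{th:necessary} with the obvious replacements. I expect the locality promotion to be the only delicate step, and it is sensitive both to the chordality of $Q_2$ (guaranteeing pairwise atom coverage within each bag) and to the normality of $h$ (allowing the pairwise-to-global extension via the product structure of $P$): without either hypothesis the parity function (Example~\ref{ex:parity}) provides a concrete obstruction in which pairwise projections of a relation agree while no common row exists.
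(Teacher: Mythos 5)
Your proposal is correct and follows essentially the same route as the paper: realize the normal $h$ (after scaling to create a gap) as the entropy of a domain product of two-tuple step-function relations, reuse the partition-and-join machinery of Theorem~\ref{th:necessary} verbatim, and re-establish the locality property $\hom_{\varphi|_{\chi(t)}}(Q_t,\calD)\subseteq \Pi_{\varphi|_{\chi(t)}}(P)$ factor-by-factor, using chordality (every pair of bag variables is covered by an atom, via Fact~\ref{fct:BCQC:proj}) together with the rigid two-row structure of each $P_{W_i}$ to promote pairwise agreement to a single global row. One small correction: simplicity of $T$ plays no role in the join bound or anywhere else inside this lemma's proof --- it is needed only upstream, in deriving Theorem~\ref{th:decidable}, where Theorem~\ref{th:simple} (which requires the conditional expressions $E_T\circ\varphi$ to be simple) is invoked to obtain a \emph{normal} counterexample in the first place.
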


We first show how to use the lemma and the essentially-Shannon
inequalities in Theorem~\ref{th:simple} to prove
Theorems~\ref{th:decidable} and~\ref{th:product:normal:databases}.
Assume $Q_2$ is chordal and has a simple junction tree $T$.  We prove: $Q_1 \preceq Q_2$ iff
Eq.(\ref{eqn:another:sufficient}) holds.  It suffices to prove that
Eq.(\ref{eqn:another:sufficient}) is necessary, because sufficiency
follows from Theorem~\ref{th:sufficient}.  Suppose
Eq.(\ref{eqn:another:sufficient}) fails.  Then there exists an entropic function $h$ such that (\ref{eq:necessary:chordal}) holds where $T$ in (\ref{eq:necessary:chordal}) is a simple junction tree of $Q_2$.
Since $T$ is simple,
the conditional linear expressions on the right-hand-side of \eqref{eq:necessary:chordal} are also simple.
By Theorem~\ref{th:simple}, there exists a {\em normal} entropic function $h$ such that (\ref{eq:necessary:chordal}) holds.
Then, by Lemma~\ref{lemma:necessary:chordal},
$Q_1 \not\preceq Q_2$.  This proves that
Eq.(\ref{eqn:another:sufficient}) is necessary and sufficient for
containment.  Furthermore, Eq.(\ref{eqn:another:sufficient}) is
decidable, since it is an essentially-Shannon inequality, and this
completes the proof of Theorems~\ref{th:decidable}.  The proof of
Theorem~\ref{th:product:normal:databases} follows immediately from the
fact that the set of normal entropic functions $\calN_n$ is the cone
generated by the entropies of normal relations, and the set of modular
functions $\calM_n$ is the cone generated by the entropies of product
relations.

It remains to prove Lemma~\ref{lemma:necessary:chordal}; the lemma
generalizes Theorem 3.2 of~\cite{HDE} to arbitrary vocabularies
(beyond graphs).  To prove the theorem, we will update the proof of
Theorem~\ref{th:necessary}, where we used acyclicity of $Q_2$: more
precisely we need to re-prove the locality property,
Eq.(\ref{eq:necessary:3}).  We repeat it here:
\begin{align*}
  \hom_{\varphi|_{\chi(t)}}(Q_t,\calD) \subseteq & \Pi_{\varphi|_{\chi(t)}}(P)
\end{align*}
We start by observing that this property fails in general.
\begin{ex} Let $Q_1 = R(X_1,X_2),S(X_2,X_3),T(X_3,X_1)$ and
  $Q_2 = R(Y_1,Y_2),S(Y_2,Y_3),T(Y_3,Y_1)$ (they are identical).
  Consider the parity function in Example~\ref{ex:parity}; more
  precisely, this is the entropy of the relation
  $P = \setof{(X_1,X_2,X_3)}{X_1,X_2,X_3 \in \set{0,1}, X_1\oplus
    X_2\oplus X_3 = 0}$, which we show here for clarity:
  \begin{center}
    \begin{tabular}[t]{l|c|c|c|} \cline{2-4}
            & 0 & 0 & 0 \\
      $P=$  & 0 & 1 & 1 \\
            & 1 & 0 & 1 \\
            & 1 & 1 & 0 \\ \cline{2-4}
    \end{tabular}
  \end{center}
  Recall that the entropy of $P$ is not a normal entropic function
  (Sec.~\ref{sec:domination}).  This relation is perfectly uniform (in
  fact it is a group characterization).  Computing
  $\calD = \Pi_{Q_1}(P)$ we obtain
  $R^D = S^D = T^D = \set{(0,0),(0,1),(1,0),(1,1)}$.  $Q_2$ is a
  clique, with a bag $Q_t = Q_2$, and $\hom(Q_t,\calD)$ contains one
  extra triangle, $(1,1,1)$, which is in no single row of $P$.
\end{ex}

The example shows that we need to use in a critical way the fact that
the counterexample $h$ is a normal entropic function, $h \in \calN_n$.
To use this fact, we will describe a class of relations whose entropic
functions generate precisely the cone $\calN_n$, and prove that these
are precisely the normal relations (Def.~\ref{def:normal:relation}).

Before we start, we review a
basic concept, which we call ``domain-product'', first introduced by
Fagin~\cite{DBLP:journals/jacm/Fagin82} to prove the existence of an
Armstrong relation for constraints defined by Horn clauses, and later
used by Geiger and Pearl~\cite{GeigerPearl1993} to prove that
Conditional Independence constraints on probability distributions also
admit an Armstrong relation.  The same construction appears under the
name ``fibered product'' in~\cite{HDE}.

\begin{defn} \label{def:domain:product} Fix two domains $D_1, D_2$.
  For any two tuples $f \in D_1^{ V}$, $g \in D_2^{ V}$ we
  define $f \otimes g \in (D_1 \times D_2)^{ V}$ as the function
  $(f \otimes g)(x) \defeq (f(x),g(x))$ for all $x \in  V$.  The
  {\em domain product} of two relations $P_1 \subseteq D_1^{ V}$,
  $P_2 \subseteq D_2^{ V}$ is
  $P_1 \otimes P_2 \defeq \setof{f \otimes g}{f \in P_1, g \in P_2}$.
  If $p_1, p_2$ are probability distributions on $P_1, P_2$
  respectively, then their {\em product} $p_1\cdot p_2$ is the
  probability distribution
  $(p_1 \cdot p_2)(f,g) \defeq p_1(f) \cdot p_2(g)$ on
  $P_1 \otimes P_2$.
\end{defn}

The following basic fact relates to the above definition: if $h_1$ and $h_2$ are two entropic functions,
then $h_1 + h_2$ is also entropic. In particular, if $h_i$ is the entropy of
$p_i : P_i \rightarrow [0,1]$, then $h_1 + h_2$ is the entropy of
$p_1 \cdot p_2 : P_1 \otimes P_2 \rightarrow [0,1]$, where
$P_1 \otimes P_2$ is the domain product.

Now we are ready to prove Lemma~\ref{lemma:necessary:chordal}.
Consider the normal entropic function $h$ given by
Lemma~\ref{lemma:necessary:chordal}.  We can assume w.l.o.g. that $h$
is a sum of step functions\footnote{Suppose the contrary, that the
  inequality holds for all functions $h$ that are sums of step
  functions.  Then it holds for all linear combinations
  $\sum_W c_W h_W$ where $c_W \geq 0$ are integer coefficients.  If an
  inequality holds for $h$, then it also holds for $\lambda \cdot h$
  for any constant $\lambda > 0$; it follows that the inequality holds
  for all linear combinations $\sum_W c_W h_W$ where $c_W \geq 0$ are
  rationals.  The topological closure of these expressions is
  $\calN_n$, contradicting the fact that the inequality fails on some
  $h \in \calN_n$.}, $h = \sum_i h_{W_i}$, where each $h_{W_i}$ is a
step function (not necessarily distinct).  Recall from Section~\ref{sec:background:it:short} that $P_{W_i}$ is
the 2-tuple relation whose entropy is $h_{W_i}$; to reduce clutter we
denote here $P_{W_i}$ by $P_i$.  Then $h$ is the entropy of their
domain-product (Def~\ref{def:domain:product}),
$P = P_1 \otimes P_2 \otimes \cdots \otimes P_m$. One can check that
$P$ is totally uniform (it is even a group realization).  We now prove
the locality property, Eq.(\ref{eq:necessary:3}), using the fact that
$P$ is a domain product, which allows us to rewrite
Eq.(\ref{eq:necessary:3}) as:
\begin{align*}
  \hom_{\varphi|_{\chi(t)}}(Q_t,\calD_1 \otimes \cdots \otimes \calD_m) \subseteq
& \Pi_{\varphi|_{\chi(t)}}(P_1 \otimes \cdots \otimes P_m)
\end{align*}
It suffices prove that
$\hom_{\varphi|_{\chi(t)}}(Q_t,\calD_i) \subseteq
\Pi_{\varphi|_{\chi(t)}}(P_i)$ for each $i$.  Recall that $P_i$ has
two tuples, $P_i = \set{f_1, f_2}$, where $f_1=(1,1,\ldots,1)$ and
$f_2$ has values $1$ on positions $\in W$ and values $2$ on positions
$\not\in W$, for some set of attributes $W$.  Fix a tuple
$g \in \hom_{\varphi|_{\chi(t)}}(Q_t,\calD_i)$; we must prove that
either $g \in \Pi_{\varphi|_{\chi(t)}}(f_1)$ or
$g \in \Pi_{\varphi|_{\chi(t)}}(f_2)$.  If $g$ maps every variable in
$\vars(Q_t)$ to $1$, then the first condition holds, so assume that
$g$ maps some variable $Y \in \vars(Q_t)$ to $2$; in particular,
$\varphi(Y) \not\in W$.  We must prove that, for every variable $Y'$,
if $\varphi(Y') \not\in W$ then $g(Y')=2$.  Here we use the fact that
$Q_2$ is chordal, hence $Q_t$ is a clique, thanks to Fact~\ref{fct:BCQC:proj}. Therefore, there exists
$B \in \atoms(Q_t)$ that contains both $Y$ and $Y'$.
Since $g$ is a
homomorphism, it maps $B$ to some tuple in
$\Pi_{\varphi(\vars(B))}(P)$; since both
$\varphi(Y), \varphi(Y') \not\in W$, this tuple must have the value 2
on both positions (they can be identical: $\varphi(Y) = \varphi(Y')$).
It follows that all variables $Y'$ s.t. $\varphi(Y') \not\in W$ are
mapped to $2$, proving that $g \in \Pi_{\varphi|_{\chi(t)}}(f_2)$.
This proves the local property, Eq.(\ref{eq:necessary:3}).  The rest
of the proof of Theorem~\ref{th:necessary} remains unchanged, and this
completes the proof of Lemma~\ref{lemma:necessary:chordal}.

\section{Conclusion and Discussion}

\label{sec:discussion}

In this paper we established a fundamental connection between
information inequalities and query containment under bag semantics.
In particular, we proved that the max-information-inequality problem
is many-one equivalent to the query containment where the containing
query is acyclic.  It is open whether these problems are decidable.
Our results help in the sense that, progress on one of these open
questions will immediately carry over to the other.  We end with a
discussion of our results and a list of open problems.

{\bf Beyond Chordal} Our results showed that the query containment
problem $Q_1 \preceq Q_2$ is equivalent to a $\miip$ when $Q_2$ is
either acyclic, or when it is chordal {\em and} has a simple junction tree.  In all other cases, condition
(\ref{eqn:another:sufficient}) is only sufficient, and we do not know
if it is also necessary.
%
%
%

{\bf Repeated Variables, Unbounded Arities} Our reduction form $\miip$
to query containment constructs two queries $Q_1, Q_2$ where the atoms
have repeated variables, and the arities of some of the relation names
depend on the size of the $\miip$.  We leave open the question whether
the reduction can be strengthened to atoms without repeated variables,
and/or queries over vocabularies of bounded arity.

{\bf Max-Linear Information Inequalities} Linear information
inequalities have been studied extensively in the literature, while
Max-linear ones much less.  Our result proves the equivalence of
$\bcqca$ and $\miip$, and this raises the question of whether $\iip$
and $\miip$ are different.
\conferenceorfull{In the full version of the paper~\cite{full-version}, we provide some
  evidence suggesting that they might be the same.}
{The following theorem
(Appendix~\ref{appendix:lambdas}) suggests that they might be
computationally equivalent.

\begin{thm} \label{th:the:lambdas}
  Let $E_\ell$, $\ell = 1, m$ be linear expressions of entropic
  terms.  Then the following conditions are equivalent:
  \begin{itemize}
  \item This max-linear inequality holds:
    $\forall h \in \Gamma_n^*$, $0 \leq \max_\ell E_\ell(h)$.
  \item There exists $\lambda_\ell \geq 0$, s.t.
    $\sum_\ell \lambda_\ell = 1$ and, denoting
    $E \defeq \sum_\ell \lambda_\ell E_\ell$, this linear inequality
    holds: $\forall h \in  \Gamma_n^*$, $0 \leq E(h)$.
  \end{itemize}
\end{thm}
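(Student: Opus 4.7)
\textbf{Proof plan for Theorem~\ref{th:the:lambdas}.}

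The direction (second condition $\Rightarrow$ first) is immediate: for any $\lambda \in \Delta^{m-1}$ and any $h$, $\sum_\ell \lambda_\ell E_\ell(h) \leq \max_\ell E_\ell(h)$, so a valid linear combination yields a valid max-linear inequality. My plan for the converse is a separating-hyperplane argument in $\R^m$. A preliminary observation: since each $E_\ell$ and the function $\max_\ell$ are continuous, both inequalities hold for every $h \in \Gamma_n^*$ iff they hold for every $h$ in the closure $\overline{\Gamma_n^*}$, which is known to be a convex cone (Zhang--Yeung); I work with this closure throughout.

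First I would form the linear map $E : \R^{2^n} \to \R^m$ given by $E(h) := (E_1(h), \ldots, E_m(h))$ and let $K := E(\overline{\Gamma_n^*})$. Since $E$ is linear and $\overline{\Gamma_n^*}$ is a convex cone containing $0$, $K \subseteq \R^m$ is a convex cone containing $0$. The first condition translates to $K \cap (-\R^m_{>0}) = \emptyset$: no entropic $h$ makes all $E_\ell(h)$ strictly negative. The open convex set $-\R^m_{>0}$ and the convex set $K$ are disjoint, so by the geometric Hahn--Banach theorem there exist a nonzero $\lambda \in \R^m$ and $c \in \R$ with $\lambda \cdot v < c$ for all $v \in -\R^m_{>0}$ and $\lambda \cdot v \geq c$ for all $v \in K$.

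Next I would extract sign information from the coneness of $K$. If some $v \in K$ satisfied $\lambda \cdot v < 0$, then $\lambda \cdot (tv) \to -\infty$ as $t \to +\infty$, contradicting $\lambda \cdot v \geq c$; hence $\lambda \cdot v \geq 0$ for all $v \in K$ (and $c \leq 0$, since $0 \in K$). Symmetrically, for any $u \in \R^m_{>0}$ we have $-\lambda \cdot u < c \leq 0$, so $\lambda \cdot u > 0$; taking $u = e_\ell + \varepsilon \mathbf{1}$ and sending $\varepsilon \to 0^+$ gives $\lambda_\ell \geq 0$ for every $\ell$. Since $\lambda$ is nonzero and coordinatewise nonnegative, $\sum_\ell \lambda_\ell > 0$, and rescaling yields $\lambda' \in \Delta^{m-1}$ with $\sum_\ell \lambda'_\ell E_\ell(h) \geq 0$ for all $h \in \overline{\Gamma_n^*}$, hence for all $h \in \Gamma_n^*$.

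The main obstacle is the convexity prerequisite for the separation step: one needs $K$ to be convex, which rests on the nontrivial fact that $\overline{\Gamma_n^*}$ is a convex cone, even though $\Gamma_n^*$ itself is not convex in general. An equivalent alternative route would be Sion's minimax theorem applied to the bilinear function $(h,\lambda) \mapsto \sum_\ell \lambda_\ell E_\ell(h)$ on the compact convex slice $\{h \in \overline{\Gamma_n^*} : \sum_{X} h(X) = 1\}$ paired with the simplex $\Delta^{m-1}$, followed by a homogeneity argument to extend from the slice back to the whole cone; this route ultimately rests on the same convexity fact about $\overline{\Gamma_n^*}$.
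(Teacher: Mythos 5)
Your proof is correct, but it takes a genuinely different route from the paper's. The paper proves a general statement for an arbitrary closed convex cone $K\subseteq\R^N$ (Theorem~\ref{th:cone:max} in Appendix~\ref{appendix:lambdas}) in two stages: first for finitely generated cones via LP duality (writing the primal $\min_x \max_i \inner{x,y_i}$ subject to $Ax\geq 0$, introducing an auxiliary variable $x_0$, and reading the $\lambda$'s off a feasible dual solution), and then for general closed convex cones by approximating $K$ from inside by the finitely generated cones $K_n$ spanned by its rational points and using the finite-intersection property of the nested compact sets $\Lambda_n$ of admissible $\lambda$'s. You instead push everything forward into $\R^m$ via the map $h\mapsto (E_1(h),\ldots,E_m(h))$ and separate the image cone from the open negative orthant in one shot; this avoids both the LP-duality computation and the limiting argument, and correctly sidesteps the issue that the image cone need not be closed (separation from an open convex set requires no closedness). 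Both arguments rest on the same nontrivial input, namely that $\overline{\Gamma_n^*}$ is a closed convex cone, which you correctly flag as the crux (the paper attributes convexity of the closure to Yeung's book rather than to the Zhang--Yeung non-polyhedrality paper, a minor point). One thing the paper's more laborious route buys is visibility into where irrationality of the $\lambda$'s can enter: LP duality over a rational polyhedral cone yields rational multipliers, so only the limiting step is to blame --- which is directly relevant to the open question, raised right after the theorem statement, of whether the $\lambda$'s can always be chosen rational. Your separation argument is cleaner but offers no such handle.
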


The second item implies the first, because
$\max_\ell E_\ell \geq \sum_\ell \lambda_\ell E_\ell$; the proof of
the other direction is in the appendix.  Suppose we could strengthen
the theorem and prove that the $\lambda$'s can be chosen to be
rationals.  Then there exists a simple Turing-reduction from the
$\miip$ to $\iip$: given a $\miip$, search in parallel for a counter
example (by iterating over all finite probability spaces), and for
rational $\lambda$'s such that
$\sum_\ell \lambda_\ell E_\ell(h) \geq 0$ (which can be checked using
the $\iip$ oracle).  However, we do not know if the $\lambda$'s can
always be chosen to be rational.
}

{\bf The remarkable formula $E_T$ (Eq.(\ref{eq:et}))} The first to
introduce the expression $E_T$ was Tony
Lee~\cite{DBLP:journals/tse/Lee87}.  This early paper established
several fundamental connections between the entropy $h$ of the uniform
distribution of a relation $P$, and constraints on $P$: it showed that
an FD $X \rightarrow Y$ holds iff $h(Y|X)=0$, that an MVD
$X \twoheadrightarrow Y$ holds iff $I(Y; V - (X \cup Y)|X) = 0$,
and, finally, that $P$ admits an acyclic join decomposition given by a
tree $T$ iff $E_T(h) = h( V)$.  It also proved that $E_T$ is
equivalent to an inclusion-exclusion expression, which, in our
notation becomes:
\begin{align}
  E_t = & \sum_{S \subseteq \nodes(T)} (-1)^{|S|+1} CC(T \cap S) \cdot h(\chi(S))
\label{eq:et:inclusion:exclusion}
\end{align}
where $\chi(S) \defeq \bigcap_{t \in S}\chi(t)$, and $CC(T \cap S)$
denotes the number of connected components of the subgraph of $T$
consisting of the nodes
$\setof{t}{t \in \nodes(T), \chi(t) \cap \bigcup_{t' \in S} \chi(t')
  \neq \emptyset}$.

{\bf Discussion of Kopparty and Rossman~\cite{HDE}} We now re-state
the results in~\cite{HDE} using the notions introduced in this paper
in order to describe their connection.  Theorem 3.1 in~\cite{HDE}
essentially states that Eq.(\ref{eqn:another:sufficient}) is
sufficient for containment, thus it is a special case of our
Theorem~\ref{th:sufficient} for graph queries; they use an
inclusion-exclusion formula for $E_T$, similar to
(\ref{eq:et:inclusion:exclusion}), but given for chordal queries only.
Theorem 3.2 in~\cite{HDE} essentially states that, if
Eq.(\ref{eqn:another:sufficient}) fails on a normal polymatroid, then
there exists a database $\calD$ witnessing $Q_1 \not\preceq Q_2$, thus
it is a special case of our Lemma~\ref{lemma:necessary:chordal} for
the case when the queries are graphs; they use a different expression
for $E_T$, based on the M\"obius inversion of $h$.  This inversion is
precisely the I-measure of $h$, as we explain in
Appendix~\ref{sec:background:it:long}.  Finally, Theorem 3.3 in~\cite{HDE} proves
essentially that Eq.(\ref{eqn:another:sufficient}) is necessary and
sufficient when $Q_1$ is series-parallel and $Q_2$ is chordal.  This
differs from our Theorem~\ref{th:decidable} in that it imposes more
restrictions on $Q_1$ and fewer on $Q_2$.  The proof of our
Theorem~\ref{th:decidable} relies on the fact that any counterexample
of Eq.(\ref{eqn:another:sufficient}) is a normal entropic function,
but this does not hold in the setting of Theorem 3.3~\cite{HDE};
however, the only exception is given by the parity function
(Appendix~\ref{sec:background:it:long}), a case that~\cite{HDE} handles directly.

\section*{Acknowledgments}
Suciu was partially supported by NSF grants III-1703281, III-1614738,
IIS-1907997, AitF-1535565.
Kolaitis was partially supported by NSF Grant IIS-1814152.

\bibliographystyle{acm}
\bibliography{main}

\begin{thebibliography}{10}

\bibitem{DBLP:journals/ipl/AfratiDG10}
{\sc Afrati, F.~N., Damigos, M., and Gergatsoulis, M.}
\newblock Query containment under bag and bag-set semantics.
\newblock {\em Inf. Process. Lett. 110}, 10 (2010), 360--369.

\bibitem{DBLP:journals/jacm/ArenasL05}
{\sc Arenas, M., and Libkin, L.}
\newblock An information-theoretic approach to normal forms for relational and
  {XML} data.
\newblock {\em J. {ACM} 52}, 2 (2005), 246--283.

\bibitem{DBLP:journals/siamcomp/AtseriasGM13}
{\sc Atserias, A., Grohe, M., and Marx, D.}
\newblock Size bounds and query plans for relational joins.
\newblock {\em {SIAM} J. Comput. 42}, 4 (2013), 1737--1767.

\bibitem{DBLP:journals/entropy/Chan11}
{\sc Chan, T.}
\newblock Recent progresses in characterising information inequalities.
\newblock {\em Entropy 13}, 2 (2011), 379--401.

\bibitem{DBLP:conf/isit/Chan07}
{\sc Chan, T.~H.}
\newblock Group characterizable entropy functions.
\newblock In {\em {IEEE} International Symposium on Information Theory, {ISIT}
  2007, Nice, France, June 24-29, 2007\/} (2007), {IEEE}, pp.~506--510.

\bibitem{DBLP:journals/tit/ChanY02}
{\sc Chan, T.~H., and Yeung, R.~W.}
\newblock On a relation between information inequalities and group theory.
\newblock {\em {IEEE} Transactions on Information Theory 48}, 7 (2002),
  1992--1995.

\bibitem{DBLP:conf/stoc/ChandraM77}
{\sc Chandra, A.~K., and Merlin, P.~M.}
\newblock Optimal implementation of conjunctive queries in relational data
  bases.
\newblock In {\em {ACM} {STOC}\/} (1977), pp.~77--90.

\bibitem{DBLP:conf/pods/ChaudhuriV93}
{\sc Chaudhuri, S., and Vardi, M.~Y.}
\newblock Optimization of \emph{Real} conjunctive queries.
\newblock In {\em {ACM} {PODS}, 1993\/} (1993), pp.~59--70.

\bibitem{DBLP:journals/jacm/Fagin82}
{\sc Fagin, R.}
\newblock Horn clauses and database dependencies.
\newblock {\em J. {ACM} 29}, 4 (1982), 952--985.

\bibitem{DBLP:journals/jacm/Fagin83}
{\sc Fagin, R.}
\newblock Degrees of acyclicity for hypergraphs and relational database
  schemes.
\newblock {\em J. {ACM} 30}, 3 (1983), 514--550.

\bibitem{GeigerPearl1993}
{\sc Geiger, D., and Pearl, J.}
\newblock Logical and algorithmic properties of conditional independence and
  graphical models.
\newblock {\em The Annals of Statistics 21}, 4 (1993), 2001--2021.

\bibitem{DBLP:journals/ijicot/GomezCM17}
{\sc Gomez, A., Corredor, C.~M., and Montoya, J.~A.}
\newblock Defining the almost-entropic regions by algebraic inequalities.
\newblock {\em IJICoT 4}, 1 (2017), 1--18.

\bibitem{DBLP:journals/jacm/GottlobLVV12}
{\sc Gottlob, G., Lee, S.~T., Valiant, G., and Valiant, P.}
\newblock Size and treewidth bounds for conjunctive queries.
\newblock {\em J. {ACM} 59}, 3 (2012), 16:1--16:35.

\bibitem{DBLP:journals/talg/GroheM14}
{\sc Grohe, M., and Marx, D.}
\newblock Constraint solving via fractional edge covers.
\newblock {\em {ACM} Trans. Algorithms 11}, 1 (2014), 4:1--4:20.

\bibitem{DBLP:journals/tods/IoannidisR95}
{\sc Ioannidis, Y.~E., and Ramakrishnan, R.}
\newblock Containment of conjunctive queries: Beyond relations as sets.
\newblock {\em {ACM} Trans. Database Syst. 20}, 3 (1995), 288--324.

\bibitem{DBLP:conf/pods/JayramKV06}
{\sc Jayram, T.~S., Kolaitis, P.~G., and Vee, E.}
\newblock The containment problem for {REAL} conjunctive queries with
  inequalities.
\newblock In {\em {ACM} {PODS}, 2006\/} (2006), pp.~80--89.

\bibitem{DBLP:conf/pods/KhamisNS16}
{\sc Khamis, M.~A., Ngo, H.~Q., and Suciu, D.}
\newblock Computing join queries with functional dependencies.
\newblock In {\em {ACM} {PODS}, 2016\/} (2016), pp.~327--342.

\bibitem{DBLP:conf/pods/Khamis0S17}
{\sc Khamis, M.~A., Ngo, H.~Q., and Suciu, D.}
\newblock What do {S}hannon-type inequalities, submodular width, and
  disjunctive {D}atalog have to do with one another?
\newblock In {\em Proceedings of the 36th {ACM} {SIGMOD-SIGACT-SIGAI} Symposium
  on Principles of Database Systems, {PODS} 2017\/} (2017), pp.~429--444.

\bibitem{DBLP:journals/jacm/Klug88}
{\sc Klug, A.~C.}
\newblock On conjunctive queries containing inequalities.
\newblock {\em J. {ACM} 35}, 1 (1988), 146--160.

\bibitem{KM2019}
{\sc Konstantinidis, G., and Mogavero, F.}
\newblock Attacking {D}iophantus: Solving a special case of bag containment.
\newblock In {\em {ACM} {PODS}, 2019\/} (2019).

\bibitem{HDE}
{\sc Kopparty, S., and Rossman, B.}
\newblock The homomorphism domination exponent.
\newblock {\em European Journal of Combinatorics 32}, 7 (2011), 1097 -- 1114.
\newblock Homomorphisms and Limits.

\bibitem{DBLP:journals/tse/Lee87}
{\sc Lee, T.~T.}
\newblock An information-theoretic analysis of relational databases - part {I:}
  data dependencies and information metric.
\newblock {\em {IEEE} Trans. Software Eng. 13}, 10 (1987), 1049--1061.

\bibitem{DBLP:journals/tse/Lee87a}
{\sc Lee, T.~T.}
\newblock An information-theoretic analysis of relational databases - part
  {II:} information structures of database schemas.
\newblock {\em {IEEE} Trans. Software Eng. 13}, 10 (1987), 1061--1072.

\bibitem{DBLP:conf/isit/Matus07}
{\sc Mat{\'{u}}s, F.}
\newblock Infinitely many information inequalities.
\newblock In {\em {IEEE} International Symposium on Information Theory, {ISIT}
  2007, Nice, France, June 24-29, 2007\/} (2007), pp.~41--44.

\bibitem{pippenger1986}
{\sc Pippenger, N.}
\newblock What are the laws of information theory.
\newblock In {\em 1986 Special Problems on Communication and Computation
  Conference\/} (1986), pp.~3--5.

\bibitem{DBLP:journals/jacm/SagivY80}
{\sc Sagiv, Y., and Yannakakis, M.}
\newblock Equivalences among relational expressions with the union and
  difference operators.
\newblock {\em J. {ACM} 27}, 4 (1980), 633--655.

\bibitem{DBLP:journals/jcss/Meyden97}
{\sc van~der Meyden, R.}
\newblock The complexity of querying indefinite data about linearly ordered
  domains.
\newblock {\em J. Comput. Syst. Sci. 54}, 1 (1997), 113--135.

\bibitem{DBLP:journals/ftml/WainwrightJ08}
{\sc Wainwright, M.~J., and Jordan, M.~I.}
\newblock Graphical models, exponential families, and variational inference.
\newblock {\em Foundations and Trends in Machine Learning 1}, 1-2 (2008),
  1--305.

\bibitem{Yeung:2008:ITN:1457455}
{\sc Yeung, R.~W.}
\newblock {\em Information Theory and Network Coding}, 1~ed.
\newblock Springer Publishing Company, Incorporated, 2008.

\bibitem{yeung2012book}
{\sc Yeung, R.~W.}
\newblock {\em A first course in information theory}.
\newblock Springer Science \& Business Media, 2012.

\bibitem{DBLP:journals/tit/ZhangY97}
{\sc Zhang, Z., and Yeung, R.~W.}
\newblock A non-shannon-type conditional inequality of information quantities.
\newblock {\em {IEEE} Trans. Information Theory 43}, 6 (1997), 1982--1986.

\bibitem{zhang1998characterization}
{\sc Zhang, Z., and Yeung, R.~W.}
\newblock On characterization of entropy function via information inequalities.
\newblock {\em IEEE Transactions on Information Theory 44}, 4 (1998),
  1440--1452.

\end{thebibliography}

\appendix

\appendix

\section{Background on CQ's}

\label{appendix:problem}

\conferenceorfull{We prove the following in the full version of the paper~\cite{full-version}:}{}

\begin{lmm} \label{lemma:simple:bag-semantics} The containment problem
  under bag-set semantics $Q_1 \preceq Q_2$ is reducible in polynomial
  time to the containment problem under bag-set semantics for Boolean
  queries, $Q_1' \preceq Q_2'$.  Moreover, this reduction preserves
  any property of queries discussed in this paper: acyclicity,
  chordality, simplicity.
\end{lmm}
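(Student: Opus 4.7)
The plan is to apply the standard reduction that turns head variables into body atoms via fresh unary relation symbols. Suppose $Q_1(x_1,\ldots,x_k)$ and $Q_2(y_1,\ldots,y_k)$ have the same head arity $k$, and introduce $k$ fresh unary relation symbols $U_1,\ldots,U_k$. Define the Boolean queries
\begin{align*}
Q_1'() &\defeq Q_1 \wedge U_1(x_1)\wedge \cdots \wedge U_k(x_k),\\
Q_2'() &\defeq Q_2 \wedge U_1(y_1)\wedge \cdots \wedge U_k(y_k).
\end{align*}
The construction is plainly polynomial-time.

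To establish $Q_1 \preceq Q_2 \iff Q_1' \preceq Q_2'$, I would proceed in two directions. For the forward direction, observe that on any database $\calD$ (now equipped with interpretations of the $U_i$'s), the Boolean count factors as $|Q_i'(\calD)| = \sum_{\mathbf{d}\in U_1^\calD\times\cdots\times U_k^\calD} |Q_i(\calD)[\mathbf{d}]|$, so the pointwise inequality $|Q_1(\calD)[\mathbf{d}]|\leq |Q_2(\calD)[\mathbf{d}]|$ summed over this set of $\mathbf{d}$'s yields $|Q_1'(\calD)|\leq |Q_2'(\calD)|$. For the backward direction, given any $\calD$ and any tuple $\mathbf{d}=(d_1,\ldots,d_k)$, I would form $\calD'$ by keeping the original relations of $\calD$ unchanged and setting $U_i^{\calD'}\defeq \{d_i\}$; then $|Q_i'(\calD')| = |Q_i(\calD)[\mathbf{d}]|$ because the singleton $U_i^{\calD'}$'s force $f(x_i)=d_i$ (resp.\ $f(y_i)=d_i$) for every homomorphism $f$, so a counterexample to $Q_1\preceq Q_2$ lifts to one for $Q_1'\preceq Q_2'$.

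The heart of the lemma is property preservation. Unary atoms add no edges to the Gaifman graph, so the Gaifman graph of $Q_i'$ coincides with that of $Q_i$; hence chordality transfers verbatim, and any junction tree of $Q_i$ is also a junction tree of $Q_i'$, with the same simplicity (or total disconnectedness) since adding unary atoms does not introduce any new pair of adjacent bags. For $\alpha$-acyclicity, I would extend a join tree of $Q_i$ by appending, for each new unary atom $U_j(x_j)$, a leaf node with bag $\{x_j\}$ attached to any existing node whose bag contains $x_j$ (such a node exists because head variables occur in the body); the result is a tree decomposition of $Q_i'$ in which every bag is exactly the variable set of an atom, witnessing acyclicity. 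The new leaf edges intersect in at most one variable, so simplicity is preserved here too.

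The main, and essentially only, subtlety is the choice of unary rather than a single $k$-ary relation to mark the head tuple: using one $k$-ary atom $U(x_1,\ldots,x_k)$ would inject a $k$-clique into the Gaifman graph and destroy acyclicity and chordality as soon as $k\geq 3$, and would create bags sharing more than one variable, destroying simplicity. Using one unary relation per head position avoids all of these issues, and the same uniform construction handles acyclicity, chordality, and simplicity simultaneously.
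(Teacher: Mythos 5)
Your proof is correct and follows essentially the same route as the paper's (Appendix~\ref{appendix:problem}): one fresh unary atom per head position, the forward direction by summing the pointwise inequality over $\prod_i U_i^{\calD'}$, and the backward direction by instantiating the $U_i$'s as singletons $\{d_i\}$. Your explicit verification that the unary atoms preserve acyclicity, chordality, and simplicity (and your remark on why a single $k$-ary marker atom would fail) spells out details the paper only asserts, but the construction and argument are the same.
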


\conferenceorfull{}
{
\begin{proof} Assume w.l.o.g. that $Q_1, Q_2$ have the same head
  variables $ x$ (rename them otherwise).  Define two Boolean
  queries $Q_1', Q_2'$ by adding new unary atoms $U_i(x_i)$ to
  $Q_1, Q_2$, one atom for each $x_i \in  x$.  We prove:
  $Q_1 \preceq Q_2 \Leftrightarrow Q_1' \preceq Q_2'$.  For the
  $\Rightarrow$ direction, fix a database instance $\calD'$, denote
  the product of the unary relations by $U \defeq \prod_i U_i^D$, and
  let $\calD$ be obtained from $\calD'$ by removing the unary
  relations $U^D_i$.  It follows that
  $\bigcup_{ d \in U} Q_\ell[ d](\calD) = \hom(Q_\ell',\calD')$, for
  $\ell=1,2$.  Since $Q_1 \preceq Q_2$, and the sets $Q_\ell[d](\calD), d \in U$
  are disjoint, for $\ell=1,2$, we conclude
  $|\hom(Q_1',\calD')| = \sum_{ d \in U} |Q_1[ d](\calD)| \leq
  \sum_{ d \in U} |Q_2[ d](\calD)| = |\hom(Q_2',\calD')|$.  For
  the $\Leftarrow$ direction, let $\calD$ be a database instance, and
  let $ d \in D^{ x}$.  Define $\calD'$ to be the database obtained
  by adding to $\calD$ unary relations with one element,
  $U_i^D \defeq \set{d_i}$ for each $x_i\in x$.  Then,
  $Q_\ell[ d](\calD)=\hom(Q_\ell',\calD')$ for $\ell=1,2$.  By assumption
  $Q_1' \preceq Q_2'$, which implies
  $|Q_1[ d](\calD)| = |\hom(Q_1',\calD')| \leq |\hom(Q_2',\calD')| =
  |Q_2[ d](\calD)|$.
\end{proof}

\begin{ex} \label{ex:chaudhuri:vardi} We illustrate with this example
  from~\cite{DBLP:conf/pods/ChaudhuriV93}:
  \begin{align*}
    Q_1(x,z) = & P(x) \wedge S(u,x), \wedge S(v,z) \wedge R(z)\\
    Q_2(x,z) = & P(x) \wedge S(u,y), \wedge S(v,y) \wedge R(z)
  \end{align*}
We associate them to the following two Boolean queries:
  \begin{align*}
    Q_1'() = & P(x) \wedge S(u,x), \wedge S(v,z) \wedge R(z) \wedge U_1(x)\wedge U_2(z)\\
    Q_2'() = & P(x) \wedge S(u,y), \wedge S(v,y) \wedge R(z) \wedge U_1(x)\wedge U_2(z)
  \end{align*}
  Then $Q_1 \preceq Q_2$ iff $Q_1' \preceq Q_2'$; the latter can be
  shown using Theorems~\ref{th:sufficient} and \ref{th:necessary}.
\end{ex}
}

We  prove now a claim that we made in
Sec~\ref{subsec:sufficient},
namely that, for any node $t$ of
a tree decomposition, we  can assume
$\vars(Q_t) = \chi(t)$, where $Q_t$ is the query obtained by taking
the conjunction of all atoms with $\vars(A)\subseteq \chi(t)$.

\begin{fact} (Informal) Let $(T,\chi)$ be a tree decomposition of some
  query $Q$, and, forall $t \in \nodes(T)$, let $Q_t$ denote the
  conjunction of $A\in\atoms(Q)$ s.t. $\vars(A) \subseteq \chi(t)$.  Then,
  for the purpose of query containment, we can assume that
  $\vars(Q_t) = \chi(t)$, for every $t \in \nodes(T)$.
  More specifically, we can assume that for every $t \in \nodes(T)$ and every $A\in\atoms(Q)$ such that $\vars(A)\cap \chi(t) \neq \emptyset$, there exists $A'\in\atoms(Q)$ such that
  $\vars(A') = \vars(A)\cap \chi(t)$, hence $A'\in\atoms(Q_t)$.
  \label{fct:BCQC:proj}
\end{fact}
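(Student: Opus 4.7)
The plan is to reduce an arbitrary tree decomposition $(T,\chi)$ of $Q_2$ (for a containment instance $Q_1 \preceq Q_2$) to one satisfying the required property by enriching both queries with fresh ``projection atoms'' using new relation symbols. The same trick used in Lemma~\ref{lemma:simple:bag-semantics} --- augmenting both sides of a containment with auxiliary atoms --- is the right paradigm here.

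Concretely, for every relation symbol $R_j$ of the vocabulary and every subset of positions $I \subseteq [\arity(R_j)]$, I would introduce a fresh $|I|$-ary symbol $P_{R_j,I}$. Then I build $Q_2'$ from $Q_2$ by adding, for each atom $A = R_j(\vec y) \in \atoms(Q_2)$ and each node $t \in \nodes(T)$ with $W \defeq \vars(A) \cap \chi(t) \neq \emptyset$ and $W \subsetneq \vars(A)$ (letting $I$ be the positions of $A$ mapping into $W$), the atom $P_{R_j,I}(\vec y|_I)$. I build $Q_1'$ analogously: for every fresh symbol $P_{R_j,I}$ introduced above and every atom $B = R_j(\vec x) \in \atoms(Q_1)$, add the atom $P_{R_j,I}(\vec x|_I)$ to $Q_1'$. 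The pair $(T,\chi)$ is still a tree decomposition of $Q_2'$, since each new atom's variables $W$ lie inside $\chi(t)$, and now $\vars(Q_{2,t}') = \chi(t)$: every $x \in \chi(t)$ belongs to $\vars(A)$ for some $A \in \atoms(Q)$ (as $\chi(t) \subseteq \vars(Q)$), and the new atom corresponding to $(A,t)$ witnesses $x$. The stronger property --- that for every $A$ with $\vars(A)\cap\chi(t) \neq \emptyset$ there is an $A' \in \atoms(Q_{2,t}')$ with $\vars(A') = \vars(A)\cap \chi(t)$ --- holds by construction.

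It remains to show $Q_1 \preceq Q_2 \Leftrightarrow Q_1' \preceq Q_2'$. For $(\Leftarrow)$, given any database $\calD$ over the original vocabulary, I extend it to $\calD'$ by setting $P_{R_j,I}^{D'} \defeq \Pi_I(R_j^D)$; then the new atoms of $Q_i'$ are automatically satisfied whenever the old atoms are, giving $|\hom(Q_i,\calD)|$ $=$ $|\hom(Q_i',\calD')|$ for $i=1,2$, and the containment transfers. For $(\Rightarrow)$, given any database $\calD'$ on the extended vocabulary, I define $\calD''$ on the old vocabulary by
\begin{align*}
R_j^{D''} &\defeq \{\vec r \in R_j^{D'} : \vec r|_I \in P_{R_j,I}^{D'}\text{ for all relevant } I\}.
\end{align*}
A homomorphism $h : Q_i' \to \calD'$ sends each original atom $R_j(\vec y)$ to a tuple of $R_j^{D'}$ whose projections lie in the corresponding $P_{R_j,I}^{D'}$ (by the projection atoms in $Q_i'$), hence into $R_j^{D''}$; conversely any homomorphism $Q_i \to \calD''$ extends uniquely to $Q_i' \to \calD'$. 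Thus $|\hom(Q_i',\calD')| = |\hom(Q_i,\calD'')|$, and applying $Q_1 \preceq Q_2$ to $\calD''$ yields the desired inequality on $\calD'$.

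There is no real obstacle in this argument; it is a standard semantic-augmentation reduction. The only point requiring care is that atoms may have repeated variables (so $\vars(A)$ is a function on positions rather than a set), which forces one to index the fresh symbols by position sets $I$ rather than by variable sets and to take generalized projections $\vec y|_I$; this is bookkeeping rather than substantive difficulty. Note that the construction only adds atoms whose variables are subsets of existing bags, so it automatically preserves acyclicity, chordality and simplicity of $(T,\chi)$ --- consistent with the remark at the end of Lemma~\ref{lemma:simple:bag-semantics}.
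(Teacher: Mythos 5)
Your proof is correct and follows essentially the same route as the paper's: both enrich the vocabulary with fresh projection relation symbols, add the corresponding projection atoms to both queries, and translate databases in one direction by projecting ($R_S^D \defeq \Pi_S(R^D)$) and in the other by a semijoin-style filter (the paper writes $R^D \ltimes \left(\Join_{S\subset [a]} R_S^D\right)$, which is exactly your $R_j^{D''}$). The only cosmetic differences are that the paper introduces $R_S$ for \emph{all} $S \subset [\arity(R)]$ rather than only those induced by the tree decomposition, and that your claimed equality $|\hom(Q_2',\calD')| = |\hom(Q_2,\calD'')|$ is in general only the inequality $\geq$ (since an atom of $Q_2'$ need not carry projection atoms for every relevant $I$) --- which is the direction your argument needs anyway.
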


\begin{proof} To see an example where this property fails, consider
  $Q = R(x,y,u)\wedge S(y,z)\wedge R(x,z,v)$.  Let $T$ be the tree
  decomposition $\set{x,y,u}-\set{x,y,z}-\set{x,z,v}$, and let $t$ be
  the middle node, $\chi(t) = \set{x,y,z}$.  Then $Q_t = S(y,z)$ and
  its variables do not cover $\chi(t)$.

  We prove that the property can be satisfied w.l.o.g.  We first
  modify the vocabulary, by adding for each relation name $R$ of arity $a$
  and for each $S \subset [a]$,
  a new relation name $R_S$ of arity $|S|$.  Similarly, we
  modify a query $Q$ by adding, for each atom $R(X_1, \ldots, X_a)$ and for each $S \subset [a]$,
  a new atom $R_S( x_S)$, where $ x_S \defeq (X_i)_{i\in S}$.  Denote by
  $\hat Q$ the modified query.  Obviously $\hat Q$ satisfies the
  desired property.  We claim that this change does
  not affect query containment, more precisely $Q_1 \preceq Q_2
  \Leftrightarrow \hat Q_1 \preceq \hat Q_2$.  The $\Leftarrow$
  direction follows by expanding an input database $\calD$ for $Q_1,
  Q_2$ with extra predicates $R_S^D \defeq \Pi_S(R^D)$ for every relation symbol $R$ and every $S\subset [a]$ where $a$ is the arity of $R$.  The
  $\Rightarrow$ direction follows from modifying an input database
  $\calD$ for $\hat Q_1, \hat Q_2$ by replacing every ($a$-ary) relation $R^D$
  by $R^D \ltimes \left(\Join_{S\subset [a]} R_S^D\right)$.
\end{proof}

\section{Background on Information Theory}
\label{sec:background:it:long}

In this section, we review some additional background in information theory used
in this paper, continuing the brief introduction in
Sec.~\ref{sec:background:it:short}.

\begin{fact} If $n=1$ (i.e. there is a single random variable) and $h$ is
  entropic, then $c \cdot h$ is also entropic for every $c > 0$.
\end{fact}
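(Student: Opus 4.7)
The plan is to exploit the fact that, for $n=1$, the entropic function $h$ is determined entirely by the single non-negative real number $v \defeq h(\set{X})$ together with $h(\emptyset)=0$. Consequently $c\cdot h$ is likewise determined by the two values $0$ and $cv$, and proving it entropic reduces to constructing a random variable $Y$ on some finite domain with $H(Y)=cv$.

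To realize an arbitrary target $t\geq 0$ as an entropy, I would argue as follows. Pick any integer $k\geq 2$ with $\log k\geq t$. The Shannon entropy $p\mapsto H(p)$ is continuous on the probability simplex over $\set{1,\ldots,k\}}$; it equals $0$ at each vertex (point mass) and equals $\log k$ at the uniform distribution. Applying the intermediate value theorem to a continuous path in the simplex from a vertex to the uniform distribution yields a distribution $p^*$ on $\set{1,\ldots,k}$ with $H(p^*)$ exactly equal to $t$. Specializing to $t=cv$ produces the random variable $Y$ whose entropy function is $c\cdot h$, as required.

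There is no real obstacle here beyond verifying continuity of entropy and choosing the domain size $k$ large enough; both are routine. One can alternatively obtain $Y$ by a more explicit two-parameter construction (e.g., mix a uniform distribution on $\lfloor 2^{cv}\rfloor$ points with a single extra atom of appropriate weight) to avoid invoking IVT, but the continuity argument is the cleanest.
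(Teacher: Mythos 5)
Your proof is correct and takes essentially the same approach as the paper: both reduce the claim to realizing the single target value $c\cdot h(\{X\})$ as the entropy of one random variable, and both do so by applying the intermediate value theorem to a continuous path in the probability simplex from a point mass to a distribution of sufficiently large entropy (the paper uses $\lceil c\rceil$ independent copies of the original distribution where you use a uniform distribution on $k$ points, a cosmetic difference).
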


\begin{proof}
  Start with a distribution $p$ whose entropy is
  $\lceil c \rceil \cdot h$.  Let $n$ be the number of outcomes, and
  $p_1, \ldots, p_n$ their probabilities.  For each
  $\lambda \in [0,1]$ define $p^{(\lambda)}$ to be the distribution
  $p^{(\lambda)}_1 = p_1 + (1-p_1)(1-\lambda)$,
  $p^{(\lambda)}_i = p_i\cdot \lambda$ for $i > 1$, and $h^{(\lambda)}$
  its entropy.  Then $h^{(0)}=0$, $h^{(1)} = \lceil c \rceil \cdot h$,
  and, by continuity, there exists $\lambda$ s.t.
  $h^{(\lambda)} = c\cdot h$.
\end{proof}

\begin{cor}
  For every $W \subsetneq  V$ and every $c > 0$, the function
  $c \cdot h_W$ is entropic, where $h_W$ is the step function.  It
  follows that every normal function is entropic (because it is a sum
  $\sum_W c_W h(W)$ and $c_W h(W)$ is entropic).
\end{cor}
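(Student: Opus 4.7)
The plan is to prove that $c \cdot h_W$ is entropic by exhibiting an explicit non-uniform probability distribution on the two-tuple relation $P_W = \{f_1,f_2\}$ defined just above, then invoking continuity of the binary entropy and closure of entropic functions under integer scaling. Once this is settled, the assertion that every normal function is entropic will follow at once from the fact (noted earlier in this appendix) that the sum of two entropic functions is entropic, via the domain-product construction of Definition~\ref{def:domain:product}.

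For the first step, I would place mass $\lambda$ on $f_1$ and $1-\lambda$ on $f_2$, with $\lambda \in [0,1/2]$, and compute the entropy of every marginal. Recall that $f_1$ and $f_2$ agree on the coordinates indexed by $W$ and disagree on every coordinate of $V - W$. Hence, for $X \subseteq W$ the $X$-marginal is a point mass and has entropy $0$, while for $X \not\subseteq W$ the two tuples give distinct $X$-projections and the $X$-marginal is a two-atom distribution with masses $\lambda, 1-\lambda$, of entropy $H(\lambda) := -\lambda \log \lambda - (1-\lambda)\log(1-\lambda)$. In summary, the resulting entropy function is exactly $H(\lambda) \cdot h_W$.

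Next I would invoke the intermediate value theorem: $H : [0,1/2] \to [0,1]$ is continuous with $H(0)=0$ and $H(1/2)=1$, so every $c \in [0,1]$ is realized as $H(\lambda)$ for some $\lambda$, giving $c \cdot h_W \in \Gamma_n^*$. To extend to $c > 1$, I would write $c = \lceil c \rceil \cdot c'$ with $c' = c/\lceil c \rceil \in (0,1]$; by the previous step $c' \cdot h_W$ is entropic, and forming $\lceil c \rceil$ independent domain-product copies (Definition~\ref{def:domain:product}) multiplies the entropy by $\lceil c \rceil$, producing $c \cdot h_W$ as an entropy.

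The second statement is then immediate: a normal function $h = \sum_{W \subsetneq V} c_W h_W$ with $c_W \geq 0$ is a finite sum of entropic functions (using the first part for each nonzero term), and the sum of entropic functions is entropic, so induction on the number of nonzero summands completes the argument. There is essentially no obstacle here; the only point to verify carefully is the claim that the non-uniform distribution on $P_W$ produces entropy $H(\lambda) \cdot h_W$, which is a direct consequence of the explicit shape of $f_1$ and $f_2$.
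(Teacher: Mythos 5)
Your proof is correct and follows essentially the same route as the paper: both realize $c\cdot h_W$ for $c\le 1$ by continuously deforming a distribution supported on tuples that agree on $W$ and differ off $W$, invoke the intermediate value theorem, and then conclude that normal functions are entropic because sums of entropic functions are entropic. The only cosmetic difference is how $c>1$ is handled --- the paper scales a single auxiliary variable $Z$ to have entropy exactly $c$ and copies it across $V-W$, while you take the domain product of $\lceil c\rceil$ copies each of entropy $c/\lceil c\rceil$ --- and both mechanisms are sound.
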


\begin{proof}
  By the previous fact, there exists a random variable $Z$ whose
  entropy is $h_0(Z) = c$.  Let $h$ be the entropy of the following
  $n$ random variables: forall $U \in { V} - W$, define
  $U \defeq Z$ (hence, forall $X \subseteq  V - W$,
  $h(X)=h_0(Z)=c$), and for every $U \in W$, define $U$ to be a
  constant (hence for every $X \subseteq W$, $h(X)=0$).  Therefore,
  $h = c \cdot h_W$.
\end{proof}

However, when $n \geq 3$, then Zhang and
Yeung~\cite{DBLP:journals/tit/ZhangY97} proved that $c \cdot h$ is not
necessarily entropic.  Their proof is based on the {\em parity
  function}, introduced in Example~\ref{ex:parity}.

\begin{fact}
    $\Gamma_3^*$ is not convex.
\end{fact}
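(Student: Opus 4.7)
The plan is to derive a contradiction from the assumption that $\Gamma_3^*$ is convex, invoking as a black box the Zhang--Yeung result cited in the paragraph just above this Fact: for the parity function $h$ from Example~\ref{ex:parity}, the scalar multiple $c \cdot h$ fails to be entropic for some $c > 0$. Thus, if convexity additionally implied that every $c \cdot h$ with $c \geq 0$ were entropic, we would have an immediate contradiction. The whole strategy is to push precisely this implication through.

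Three ingredients go into the argument. First, the identically zero function $\mathbf{0} : 2^{[3]} \to \R_+$ lies in $\Gamma_3^*$, realized as the entropy of three constant random variables. Second, the parity function $h$ itself lies in $\Gamma_3^*$ by Example~\ref{ex:parity}. Third, for every positive integer $k$, the scaled function $k \cdot h$ lies in $\Gamma_3^*$; this follows by induction from the closure of $\Gamma_n^*$ under sums (stated at the top of this appendix via the domain product construction), applied to $k$ independent copies of the parity distribution.

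Assuming convexity, for any real $c \geq 0$ choose an integer $k \geq c$ and observe that
\begin{align*}
  c \cdot h &= \frac{c}{k}\,(k \cdot h) \;+\; \Bigl(1 - \frac{c}{k}\Bigr)\cdot \mathbf{0}
\end{align*}
exhibits $c \cdot h$ as a convex combination of two members of $\Gamma_3^*$. Convexity therefore forces $c \cdot h \in \Gamma_3^*$ for every $c \geq 0$, which directly contradicts the Zhang--Yeung theorem and completes the proof. The only real ``move'' is recognizing that convexity combined with closure under sums automatically propagates integer-scalar closure of $\Gamma_3^*$ to arbitrary non-negative real scalars; there is no further technical obstacle, as all the hard content is packaged inside the cited Zhang--Yeung result.
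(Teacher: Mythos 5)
Your argument is logically sound and shares its skeleton with the paper's: both hinge on the parity function $h$ and on the failure of $\Gamma_3^*$ to contain every non-negative scalar multiple of $h$. The difference lies in where the mathematical work is done. You carefully prove the ``glue'' step --- that convexity of $\Gamma_3^*$, together with $\mathbf{0}\in\Gamma_3^*$ and closure under sums (hence $k\cdot h\in\Gamma_3^*$ for every positive integer $k$), would force $c\cdot h\in\Gamma_3^*$ for every real $c\ge 0$ --- and you import the failure of that conclusion as a black box from Zhang--Yeung. The paper does the opposite: it leaves your glue step implicit (``which implies that $\Gamma_3^*$ is not convex'') and instead reproduces the Zhang--Yeung argument in full, showing that if $c\cdot h$ is entropic with distribution $p'$, then the induced independence constraints $X\perp Y$, $X\perp Z$, $Y\perp Z$ and functional dependencies $XY\to Z$, $XZ\to Y$, $YZ\to X$ force $p'(x)=p'(z)$ for every $x,z$ in the supports, so $X$ is uniform on an $M$-element domain and $c = c\cdot h(X) = \log M$; since not every positive real has this form, some $c\cdot h$ is non-entropic. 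If citing Zhang--Yeung for that last fact is admissible, your proof is complete, and it has the merit of making explicit the convexity step the paper glosses over. But be aware that within the paper this Fact's proof is precisely where the non-entropicity of $c\cdot h$ gets established (the preceding paragraph merely announces it), so a self-contained write-up should include some version of the uniformity argument rather than deferring it; as written, all of the genuinely information-theoretic content of the statement lives inside your citation.
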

\begin{proof}
Zhang and Yeung~\cite{DBLP:journals/tit/ZhangY97} prove this fact as follows.
Let $h$ be the entropy of the parity function in Example~\ref{ex:parity}.
For every $c > 0$, consider
the function $h' = c\cdot h$.  They prove that $h'$ is entropic iff
$c = \log M$, for some integer $M$, which implies that $\Gamma_3^*$ is
not convex.  We include here their proof for completeness.  Assuming
$h'$ is entropic let $p'$ be its probability distribution, then the
following independence constraints hold: $X \perp Y$, because
$h'(XY) = h'(X)+h'(Y)$, and similarly $X \perp Z$ and $Y \perp Z$.
The following functional dependencies also hold: $XY \rightarrow Z$
(because $h'(XY)=h'(XYZ)$) and similarly $XZ \rightarrow Y$,
$YZ \rightarrow X$.  Let $x,y,z$ be any three values s.t.
$p'(x,y,z) > 0$.  Then $p'(x,y,z) = p'(x,y) = p'(x)p'(y)$.  Similarly
$p'(x,y,z)=p'(y)p'(z)$, which implies $p'(x)=p'(z)$.  Therefore, for any
other value $x'$, $p'(x')=p'(z)$.  This means that the variable $X$ is
uniformly distributed, because $p'(x)=p'(x')$ forall $x, x'$, hence
$p'(x)=1/M$ where $M$ is the size of the domain of $X$.  It follows
that $h'(X) = \log M$, proving the claim.
\end{proof}

Yeung~\cite{Yeung:2008:ITN:1457455} proves that the topological
closure $\bar \Gamma_n^*$ is a convex set, for every $n$.  Thus,
$\Gamma_n^* \subseteq \bar \Gamma_n^*$ and the inclusion is strict for
$n \geq 3$.  The elements of $\bar \Gamma_n^*$ are called {\em almost
  entropic functions}.  We note that if a linear information
inequality, or a max-linear information inequality is valid forall
entropic functions $h \in \Gamma_n^*$, then, by continuity, it is also
valid forall almost entropic functions $h \in \bar \Gamma_n^*$.

Let $h$ be an entropic function, and $X, Y \subseteq  V$ two sets
of variables.  For every outcome $X=x$, we denote by $h(Y | X =x)$ the
entropy of $Y$ conditioned on $X=x$.  The function
$Y \mapsto h(Y | X=x)$ is an entropic function (by definition).
Recall that we have defined  $h(Y|X) \defeq h(XY) - h(X)$.  It can be
shown by direct calculation that $h(Y|X) = \sum_x h(Y|X=x)\cdot
p(X=x)$, in other words it is a convex combination of entropic
functions.  Thus, $h(Y|X)$ is the expectation, over the outcomes $x$,
of $h(Y|X=x)$, justifying the name ``conditional entropy''.

\begin{fact}
  In general, the mapping $Y \mapsto h(Y|X)$ is not entropic.
\end{fact}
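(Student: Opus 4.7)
The plan is to construct a concrete four-variable joint distribution on $\{X, Y_1, Y_2, Y_3\}$ whose conditional entropy $Y \mapsto h(Y|X)$, viewed as a set function on $2^{\{Y_1,Y_2,Y_3\}}$, equals exactly $\tfrac{1}{2}\cdot h_{\text{parity}}$, where $h_{\text{parity}}$ is the entropy from Example~\ref{ex:parity}, and then to invoke the classification of entropic scalings of the parity function given just above in this appendix to rule out entropy.

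Concretely, I would take $X$ uniform on $\{0,1\}$ and set the conditional distribution of $(Y_1,Y_2,Y_3)$ given $X$ to be the parity distribution of Example~\ref{ex:parity} when $X=0$ and the constant tuple $(0,0,0)$ when $X=1$. Using the identity $h(Y|X) = \sum_x p(X=x)\,h(Y|X=x)$ noted in the preceding paragraph, the conditional entropy becomes an equal mixture of the parity entropy and the zero function; a one-line calculation then gives $h(\{Y_i\}|X) = 1/2$, $h(\{Y_i,Y_j\}|X) = 1$, and $h(\{Y_1,Y_2,Y_3\}|X) = 1$, matching $\tfrac{1}{2}\cdot h_{\text{parity}}$ on the nose.

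To conclude, I would invoke the argument from the proof that $\Gamma_3^*$ is non-convex: a scalar multiple $c\cdot h_{\text{parity}}$ is entropic if and only if $c = \log M$ for some positive integer $M$. Since $1/2$ is not of this form, $\tfrac{1}{2}\cdot h_{\text{parity}}\notin \Gamma_3^*$, and hence the map $Y\mapsto h(Y|X)$ is not entropic. The conceptual heart of the argument is that $h(\cdot|X)$ is always a convex combination of entropic functions (one per outcome of $X$), while $\Gamma_n^*$ fails to be convex for $n\geq 3$; mixing the parity distribution with a deterministic distribution is the cleanest way to produce a counterexample that lands strictly outside $\Gamma_n^*$, and I anticipate no substantial obstacle beyond choosing this specific mixture.
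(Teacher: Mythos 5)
Your proof is correct and takes essentially the same route as the paper's: both construct a mixture over an auxiliary uniform binary variable so that the conditional entropy becomes a convex combination involving the parity function, and both conclude via the classification that $c\cdot h_{\text{parity}}$ is entropic only when $c=\log M$ for an integer $M$. The only difference is cosmetic: the paper mixes the parity distribution with a copy having entropy $2h_{\text{parity}}$ to land on the non-entropic multiple $\tfrac{3}{2}h_{\text{parity}}$, while you mix with a point mass to land on $\tfrac{1}{2}h_{\text{parity}}$.
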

\begin{proof}
  To see an example, consider two probability spaces on $X,Y,Z$, with
  probabilities $p,p'$ and entropies $h,h'$ such that $h$ is the
  entropy of the parity (Example~\ref{ex:parity}) and $h' = 2h$.
  Consider a 4'th variable $U$, whose outcomes are $U=0$ or $U=1$ with
  probabilities 1/2, and consider the mixture model: if $U=0$ then
  sample $X,Y,Z$ using $p$, if $U=1$ then sample $X,Y,Z$ using $p'$.
  Let $h''$ be the entropy over the variables $X,Y,Z,U$.  Then the
  conditional entropy $h''(W|U) = 3/2 h(W)$, for all
  $W \subseteq \set{X,Y,Z}$, and thus it is not entropic.
\end{proof}

\eat{
We now explain the connection to the I-measure.
\begin{cor} \label{cor:not:n3}
  The parity function is not in $\calN_3$.  It follows that
  $\calN_n \subsetneq \bar \Gamma_n^*$, for $n \geq 3$.
\end{cor}
To see this, it suffices to compute the inverse of the parity function
which we show here:
\begin{align*}
  &
    \begin{array}{|c|c|ccc|ccc|c|} \hline
      W= & \emptyset &X & Y & Z & XY & XZ & YZ & XYZ\\ \hline
    h(W)= & 0 & 1 & 1 & 1 & 2 & 2 & 2 & 2 \\
    g(W)= & 1 & -1 & -1 & -1 & 0 & 0 & 0 & 2 \\ \hline
    \end{array}
\end{align*}
}

Yeung~\cite{Yeung:2008:ITN:1457455} defines the I-measure as follows.
Fix a set of variables $V$, which we identify with $[n]$.  Let
$\Omega = 2^{[n]}-\set{\emptyset}$.  An I-measure is any function
$\mu : 2^\Omega \rightarrow \R$ such that
$\mu(X \cup Y) = \mu(X) + \mu(Y)$ whenever $X \cap Y = \emptyset$.
Notice that $\mu$ is not necessarily positive.  For each variable
$V_i \in V$ we denote by
$\hat V_i \defeq \setof{\omega \in \Omega}{i \in \omega} \subseteq
\Omega$, and extend this notation to sets $X \subseteq V$ by setting
$\hat X \defeq \bigcup_{V \in X} \hat V$.  For each variable $V_i$
denote $\hat V_i^1 \defeq \hat V_i$ and $\hat V_i^0\defeq$ the
complement of $\hat V_i$.  An {\em atomic cell} is an intersection
$C \defeq \bigcap_{j=1,n} \hat V_j^{\varepsilon_j}$, where
$\varepsilon_j \in \set{0,1}$ forall $j$, where at least one
$\varepsilon_j=1$.  Obviously, $\mu$ is uniquely defined by its values
on the atomic cells.

Given $h \in \R^{2^n}$ (not necessarily entropic), the {\em I-measure
  associated} to $h$ is the unique $\mu$ satisfying the following,
forall $X \subseteq V$:
\begin{align}
  h(X) = & \sum_{C: C \subseteq \hat X} \mu(C) \label{eq:mu}
\end{align}

The normal entropic functions $\calN_n$ are precisely those with a
non-negative I-measure.  This can be seen immediately by observing
that, for any step function $h_W$, it's I-measure $\mu_W$ assigns the
value 1 to the cell
$(\bigcap_{V \not\in W} V^1) \cap (\bigcap_{V \in W} V^0)$, and 0 to
everything else.  In fact, there is a tight connection between the
I-measure $\mu$ and the M\"obius inverse function $g$
(Eq.\eqref{eq:hg} in Sec.~\ref{sec:domination}), which we explain
next.  First, we notice that  Equation~\eqref{eq:hg} implies:
%
\begin{align}
  h(X) = & - \sum_{Y: Y \not\supseteq X} g(Y) \label{eq:g:alternative}
\end{align}
%
The connection between $\mu$ and $g$ follows by a careful inspection
of Eq.~\eqref{eq:mu} and Eq~\eqref{eq:g:alternative}.  Each atomic
cell $C$ in Eq.~\eqref{eq:mu} is uniquely defined by the set of its
negatively occurring variables, denote this by $\texttt{neg}(C)$.
Then, $C \subseteq \hat X$ iff $X \not\subseteq
\texttt{neg}(C)$. Define the function $g : 2^{ V} \rightarrow \R$ as
$g(\texttt{neg}(C)) \defeq -\mu(C)$ and $g( V) = h( V)$ (recall that
$\texttt{neg}(C) \neq V$).  Then Eq.(\ref{eq:mu}) becomes
$h(X) = \sum_{C: X \not\subseteq \texttt{neg}(C)} \mu(C) = - \sum_{Y:
  X \not\subseteq Y}g(Y)$ which is precisely
Eq.(\ref{eq:g:alternative}).

We end our background with a proof that the $\miip$ problem is
co-recursively enumerable.  Recall that a set $A \subseteq \Z^k$ is
called {\em recursively enumerable}, or r.e., if there exists a
Turning computable function $f$ whose image is $A$.  Equivalently,
there exists a computable function that, given $ x \in \Z^k$
returns ``true'' if $ x \in A$ and does not terminate if
$ x \not\in A$.  The set $A$ is called {\em co-recursively
  enumerable}, or co-r.e., if its complement is r.e.

\begin{lmm}
  $\miip$ is co-r.e.
\end{lmm}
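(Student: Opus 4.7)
The plan is to semi-decide the complement of $\miip$, i.e., the set of instances $(V,(c_\ell)_{\ell \in [k]})$ for which there exists an entropic function $h \in \Gamma_n^*$ with $E_\ell(h) = \sum_{X \subseteq V} c_{\ell,X} h(X) < 0$ for every $\ell \in [k]$. The first key step is a density/continuity observation: whenever such an $h$ exists, one can be produced from a joint distribution $p$ on a finite product domain $[N]^n$ whose probabilities are all rational. By definition, every $h \in \Gamma_n^*$ is realized by some distribution on a finite product domain; the entropy map $p \mapsto h_p$ is continuous; and the condition $\max_\ell E_\ell(h) < 0$ is an open condition in $\R^{2^n}$. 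Hence it is preserved under any sufficiently small perturbation of $p$, and since rational-probability distributions are dense in the simplex on $[N]^n$, we can take the perturbation to be rational.

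Given this, I would dovetail over all pairs $(p,m)$, where $p$ ranges over finitely supported rational-probability distributions on $[N]^n$ for $N = 1, 2, \ldots$ (a recursively enumerable set of finite objects) and $m \in \N$ is a precision level. For each pair, approximate every $E_\ell(h_p)$ to within additive error $2^{-m}$; this is possible because $h_p(X) = -\sum_x p(X=x) \log p(X=x)$ is a rational linear combination of terms $q \log q$ with $q \in \Q$, each computable to arbitrary precision. If the approximations of $E_1(h_p), \ldots, E_k(h_p)$ all fall below $-2^{-m}$, then the true values are strictly negative, and the algorithm halts, reporting the pair $(p,m)$ as a witnessed counterexample.

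For correctness: if the instance is invalid, the density step yields a rational witness $p$ with $\max_\ell E_\ell(h_p) = -\delta$ for some $\delta > 0$, and the algorithm will halt on $(p,m)$ for any $m$ with $2^{-m} < \delta/2$. If the instance is valid, no entropic witness exists, so the halting condition is never triggered. The only substantive point is the density/continuity argument — essentially, that strict violations of a continuous max-linear inequality survive small perturbations of the underlying distribution and therefore admit rational-probability witnesses; everything else is routine enumeration and computable approximation of logarithms.
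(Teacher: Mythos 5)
Your proof is correct and takes essentially the same route as the paper: enumerate rational-probability distributions over finite domains and semi-decide the existence of a counterexample, with correctness resting on continuity of the entropy map and density of rational distributions (a step you make explicit and the paper's sketch leaves implicit). The only difference is how each candidate is tested --- you dovetail over precision levels to approximate the irrational entropy values, whereas the paper exponentiates the inequality and raises it to the common denominator of the probabilities so that each candidate is checked by an exact comparison of rational numbers; both are sound.
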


\begin{proof} (Sketch) Enumerate all finite probability distributions
  where the probabilities are given by rational numbers, and check
  Eq.(\ref{eq:miti:0}) on each of them.  This is possible because each
  entropy value $h(X)$ is the log of a number of the form $\prod_i \left(\frac{1}{p^{(X)}_i}\right)^{p^{(X)}_i}$, where $i$ ranges over all possible assignments of the variable set $X$, and $p_i^{(X)}$ is the probability that $X$ takes the $i$-th assignment.
  Therefore the inequality becomes
  \[\exists \ell  \in [k] \quad\text{s.t.}\quad
  \prod_{X\subseteq V}\prod_i \left(\frac{1}{p^{(X)}_i}\right)^{c_{\ell, X}\cdot p^{(X)}_i} \geq 1.\]
  In the above, $c_{\ell,X}$ are integers while $p^{(X)}_i$ are rational numbers.
  We can raise both sides of the above inequality to a power of $d$, which is the common denominator among all $p^{(X)}_i$.
  If the inequality fails, then return ``false'', otherwise continue with the next
  finite probability distribution.
\end{proof}

\conferenceorfull{}
{
\section{Proof of Theorem~\ref{th:the:lambdas}}

\label{appendix:lambdas}

We note that we can replace $\Gamma_n^*$ in the statement of
Theorem~\ref{th:the:lambdas} by $\bar \Gamma_n^*$, because any
max-information inequality holds on $\Gamma_n^*$ iff it holds on
$\bar \Gamma_n^*$.  We will then prove that the theorem holds more
generally, for any closed, convex cone $K$; the claim follows from the
fact that  $K = \bar \Gamma_n^*$ is a closed, convex cone.

Recall that a {\em cone} is a subset $K \subseteq \R^N$ such that
$x \in K$ implies $c\cdot x \in K$ forall $c \geq 0$.  All four sets
$\calM_n, \calN_n, \bar \Gamma_n^*, \Gamma_n$ defined in
Sec.~\ref{sec:background:it:short} are closed, convex cones.   We prove:

\begin{thm} \label{th:cone:max}
  Let $K \subseteq \R^N$ be a closed, convex cone, and let $y_1,
  \ldots, y_m \in R^N$.  Then the following two conditions are
  equivalent:
  \begin{enumerate}
      \item \label{item:th:cone:max:1} $\forall x \in K: \max_i \inner{x,y_i} \geq 0$.
  \item \label{item:th:cone:max:2} There exist
    $\lambda_1, \ldots, \lambda_m \geq 0$ such that
    $\sum_i \lambda_i = 1$ and
    $\forall x \in K: \sum_i \lambda_i \inner{x, y_i} \geq 0$.
    Equivalently, $\sum_i \lambda_i y_i \in K^*$ (the dual of $K$).
  \end{enumerate}
\end{thm}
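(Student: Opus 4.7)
The plan is to prove Theorem~\ref{th:cone:max} by a standard separation argument on the convex hull of $\{y_1,\dots,y_m\}$ and the dual cone $K^*$. The direction (\ref{item:th:cone:max:2}) $\Rightarrow$ (\ref{item:th:cone:max:1}) is immediate: for any $x \in K$ and any nonnegative $\lambda_i$ summing to 1, we have $\max_i \inner{x,y_i} \geq \sum_i \lambda_i \inner{x,y_i} \geq 0$, where the last inequality is precisely $\sum_i \lambda_i y_i \in K^*$. So the real content is the converse.

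For (\ref{item:th:cone:max:1}) $\Rightarrow$ (\ref{item:th:cone:max:2}), I will argue by contradiction. Let $C \defeq \text{conv}\{y_1,\dots,y_m\}$; this is a compact convex subset of $\R^N$. The conclusion (\ref{item:th:cone:max:2}) is exactly the statement $C \cap K^* \neq \emptyset$. Suppose instead that $C \cap K^* = \emptyset$. Since $K$ is a closed convex cone, $K^*$ is also closed and convex (and is a cone containing the origin). I then apply the separating hyperplane theorem in the form that separates a compact convex set from a disjoint closed convex set strictly: there exist $x_0 \in \R^N$ and $\alpha \in \R$ such that $\inner{x_0,y} < \alpha$ for all $y \in C$ and $\inner{x_0,z} > \alpha$ for all $z \in K^*$.

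Next I would extract two facts from this separation. First, since $0 \in K^*$, we get $\alpha < 0$, hence $\inner{x_0,y_i} < 0$ for every $i$. Second, since $K^*$ is a cone, for every $z \in K^*$ and every $t \geq 0$ we have $tz \in K^*$, so $t\inner{x_0,z} > \alpha$ for all $t \geq 0$; letting $t\to \infty$ forces $\inner{x_0,z} \geq 0$ for all $z \in K^*$. By definition this says $x_0 \in K^{**}$, and by the bipolar theorem for closed convex cones we have $K^{**} = K$, so $x_0 \in K$. But then $x_0 \in K$ with $\max_i \inner{x_0,y_i} < 0$, contradicting (\ref{item:th:cone:max:1}).

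The steps are standard, and the only place where care is needed is invoking the strict separation theorem (which requires one of the sets to be compact — here $C$, being a finitely generated convex hull, is compact) and the bipolar identity $K^{**}=K$ (which needs $K$ to be a closed convex cone — guaranteed by hypothesis). No step looks like it will be difficult; the main conceptual point to articulate is that the existence of the convex combination witness is precisely the nonemptiness of $C \cap K^*$, so the entire theorem is a disguised form of the Gordan/Farkas-type alternative for cones applied to the finite collection $y_1,\dots,y_m$.
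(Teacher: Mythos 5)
Your proof is correct, and it takes a genuinely different route from the paper's. You reduce statement (2) to the nonemptiness of $\mathrm{conv}\{y_1,\dots,y_m\}\cap K^*$ and then run a single strict-separation argument (compact convex set versus closed convex cone) combined with the bipolar identity $K^{**}=K$; the contradiction is reached by exhibiting the separating normal $x_0$ as a point of $K$ on which all $\inner{x_0,y_i}$ are negative. The paper instead proceeds in two stages: it first proves the implication for finitely generated cones $K=\setof{x}{Ax\geq 0}$ by writing the minimax problem $\min_{x\in K}\max_i \inner{x,y_i}$ as a linear program, dualizing, and reading off the $\lambda_i$ from a dual-feasible solution; it then passes to an arbitrary closed convex cone by exhausting $K$ with finitely generated subcones $K_n$ spanned by its rational points and extracting a common $\lambda$ from the nested nonempty compact sets $\Lambda_n$ of admissible weight vectors. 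Your argument is shorter and handles the general case in one step, at the cost of invoking the bipolar theorem (which, like LP duality, is itself a consequence of the separating hyperplane theorem, so the two proofs rest on the same foundation); the paper's route makes the finitely generated case explicit as an LP and shows how the weights arise as limits, but needs the extra approximation-and-compactness machinery. Both arguments correctly yield only real (possibly irrational) $\lambda_i$, which is the caveat the paper emphasizes. The only points requiring care in your write-up, which you already flag, are that strict separation needs one of the two sets to be compact (here $\mathrm{conv}\{y_i\}$ is) and that $K^{**}=K$ needs $K$ to be a nonempty closed convex cone, which the hypothesis supplies.
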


Notice that the coefficients $\lambda_i$ need not necessarily be
rational numbers.  The theorem says that every $\miip$ can be
reduced to an $\iip$ with, possibly irrational coefficients.

\begin{proof}
  Obviously (\ref{item:th:cone:max:2}) implies
  (\ref{item:th:cone:max:1}) because $\max_i \inner{x, y_i} \geq \sum_i
  \lambda_i \inner{x,y_i}  \geq 0$.  We will prove that
  (\ref{item:th:cone:max:1}) implies   (\ref{item:th:cone:max:2}).

  First, we prove that   (\ref{item:th:cone:max:1}) implies
  (\ref{item:th:cone:max:2}) when $K$ is a finitely generated cone:
  $K=\setof{x}{Ax \geq 0}$ for some $P \times N$ matrix $A$.
  Condition (\ref{item:th:cone:max:1}) implies that the following
  optimization problem has a value $\geq 0$:

  \begin{align*}
      \texttt{minimize } & \max_i \ \inner{x, y_i} \\
    \texttt{where: } & Ax \geq 0 \\
                     & x \in \R^N
  \end{align*}

  This optimization problem is equivalent to the following, where
  $x_0$ is a fresh variable, and $B$ is the $m \times N$ matrix whose
  rows are the vectors $y_1, \ldots, y_m$:
  \begin{align*}
    \texttt{minimize } & x_0 \\
    \texttt{where: } & Ax \geq 0 && \texttt{$P$ rows} \\
                     &
\left[
                       \begin{array}{c}
                         x_0 \\
                         \ldots \\
                         x_0
                       \end{array}
\right] - Bx \geq 0 && \texttt{$m$ rows}
  \end{align*}
This is a linear optimization problem whose solution is equal to that
of the dual, which is a linear program over variables $\mu_1, \ldots,
\mu_P$, $\lambda_1, \ldots, \lambda_m$:
\begin{align*}
  \texttt{maximize } & 0 \\
  \texttt{where: } & \lambda_1 + \ldots + \lambda_m = 1 && \texttt{$x_0$}\\
                   & \mathbf{\mu}^t A - \mathbf{\lambda}^t B = 0 && \texttt{$x_1, \ldots, x_N$}\\
                   & \mathbf{\lambda} \geq 0, \mathbf{\mu} \geq 0
\end{align*}
Since the optimal solution of the primal is $\geq 0$, the dual must
have a feasible solution $\mathbf{\lambda}, \mathbf{\mu}$. To prove
Condition (\ref{item:th:cone:max:1}), assume $x \in K$.  Then
$A x \geq 0$, therefore $\mathbf{\mu}^t A x \geq 0$, thus
$\mathbf{\lambda}^t B x = \inner{\sum_i \lambda_i y_i, x} \geq 0$ proving
the theorem for the case when $K$ is a finitely generated cone.

We prove now the general case.  Let $K' = K \cap \Q$ be the vectors in
$K$ with rational coordinates, and let
$K' = \set{x_1, x_2, \ldots, x_n, \ldots}$ be an enumeration of $K'$.
For each $n \geq 0$, let $K_n \subseteq K$ be the closed, convex cone
generated by $\set{x_1, \ldots, x_n}$.  Let $\Lambda_n \subseteq \R^m$
be the set of all vectors $\mathbf{\lambda}$ satisfying Condition
(\ref{item:th:cone:max:1}) for the cone $K_n$.  Since $K_n$ is
finitely generated, we have $\Lambda_n \neq \emptyset$.  Furthermore
it is easy to check that $\Lambda_n$ is topologically closed.  Since
$\Lambda_n$ is bounded, it follows that $\Lambda_n$ is a compact
subset of $\R^N$.  Since
$K_1 \subseteq K_2 \subseteq \cdots \subseteq K_n \subseteq \cdots$ it
follows that
$\Lambda_1 \supseteq \cdots \supseteq \Lambda_n \supseteq \cdots$ This
implies that any finite family has a nonempty intersection:
$\Lambda_{n_1} \cap \cdots \cap \Lambda_{n_s} = \Lambda_{\max(n_1,
  \ldots, n_s)}\neq \emptyset$.  It follows that the entire family has
a non-empty intersection, i.e. there exists
$\mathbf{\lambda} \in \bigcap_{n \geq 0} \Lambda_n$.  We prove that
$\mathbf{\lambda}$ satisfies Condition (\ref{item:th:cone:max:1}).
Indeed, let $x \in K$, and consider any sequence $(x_n)_{n \geq 0}$
such that $x_n \in K_n$ and $\lim_n x_n = x$.  Forall $n \geq 0$,
$\mathbf{\lambda} \in \Lambda_n$, which implies
$\sum_i \lambda_i \inner{x_n, y_i} \geq 0$, therefore
$\sum_i \lambda_i <x, y_i> = \lim_n \sum_i \lambda_i <x_n, y_i> \geq
0$ proving the claim.
\end{proof}

}

\end{document}